\documentclass[letterpaper]{article}

\usepackage{arxiv}
\usepackage{newtxtext} 
\usepackage[utf8]{inputenc} % allow utf-8 input
\usepackage[T1]{fontenc}    % use 8-bit T1 fonts
\usepackage{hyperref}       % hyperlinks
\usepackage{url}            % simple URL typesetting
\usepackage{booktabs}       % professional-quality tables
\usepackage{amsfonts}       % blackboard math symbols
\usepackage{nicefrac}       % compact symbols for 1/2, etc.
\usepackage{microtype}      % microtypography
\usepackage{lipsum}		% Can be removed after putting your text content
\usepackage{graphicx}
\usepackage{natbib}
\usepackage{doi}
\usepackage{todonotes}
\usepackage{comment}
\usepackage{subcaption}
\usepackage{glossaries}
\usepackage[switch]{lineno} 
\usepackage[titletoc]{appendix}

\allowdisplaybreaks

\usepackage{multirow}
\usepackage{float}
\usepackage{amsmath, latexsym, epsf, amssymb, framed, bbm, bm, amsthm, upref, bbold, algorithm, algorithmic}
\usepackage{xcolor}
\usepackage[capitalize]{cleveref}
\usepackage{enumitem}

\usepackage{makecell}

\allowdisplaybreaks

\DeclareMathOperator*{\argmax}{arg\,max}
\DeclareMathOperator*{\argmin}{arg\,min}

\fancypagestyle{researchstyle}{
    \fancyhf{}  % Clear default header/footer
    \fancyhead[R]{ }  % Right-aligned name
    \fancyhead[L]{}  % Left-aligned section title 
    \fancyfoot[C]{\thepage}
}
\title{Estimate then Predict: Convex Formulation for Travel Demand Forecasting}

\author{
    Youngseo Kim\textsuperscript{a, *}, 
    Gioele Zardini\textsuperscript{b}, 
    Samitha Samaranayake\textsuperscript{c}, 
    Soroosh Shafiee\textsuperscript{d} \\
    \\
    \textsuperscript{a}Civil and Environmental Engineering, UCLA, Los Angeles, CA, USA\\
    \texttt{youngseo@ucla.edu} \\
    \textsuperscript{b}Laboratory for Information and Decision Systems, MIT, Cambridge, MA, USA\\
    \texttt{gzardini@mit.edu} \\
    \textsuperscript{c}Civil and Environmental Engineering, Cornell University, Ithaca, NY, USA\\
    \texttt{samitha@cornell.edu} \\
    \textsuperscript{d}Operations Research and Information Engineering, Cornell University, Ithaca, NY, USA\\
    \texttt{shafiee@cornell.edu} \\ 
    \textsuperscript{*}Corresponding author. Email: \texttt{youngseo@ucla.edu}
}

\renewcommand{\headeright}{ }
\renewcommand{\undertitle}{ }
\renewcommand{\shorttitle}{ }

\newtheorem{theorem}{Theorem}
\newtheorem{assumption}{Assumption}
\newtheorem{corollary}{Corollary}

\newtheorem{lemma}{Lemma}

\newacronym{a:riarum}{RI-ARUM}{Rational Inattention - Addictive Random Utility Model}
\newacronym{a:asc}{ASC}{Alternative Specific Constant}
\newacronym{a:bpr}{BPR}{Bureau of Public Roads}
\newacronym{a:socp}{SOCP}{Second Order Cone Programming}
\newacronym{a:soc}{SOC}{Second Order Cone}
\newacronym{a:lp}{LP}{Linear Programming}
\newacronym{a:ml}{ML}{Machine Learning}
\newacronym{a:kl}{KL}{Kullback–Leibler}
\newacronym{a:nl}{NL}{Nested Logit}
\newacronym{a:mnl}{MNL}{Multinomial Logit}
\newacronym{a:psl}{PSL}{Path-size Logit}
\newacronym{a:kkt}{KKT}{Karush–Kuhn–Tucker}
\newacronym{a:od}{OD}{Origin-Destination}

\newcommand{\Pset}[0]{\ensuremath{\mathcal{P}}}

\newcommand{\betavec}[0]{\ensuremath{\boldsymbol{\beta}}}
\newcommand{\alphavec}[0]{\ensuremath{\boldsymbol{\alpha}}}
\newcommand{\gammavec}[0]{\ensuremath{\boldsymbol{\gamma}}}

\newcommand{\nest}[0]{\ensuremath{\mathcal{N}}}
\newcommand{\nestset}[0]{\ensuremath{\Pi (\mathcal{M})}}
\newcommand{\nestofm}[0]{\ensuremath{\mathcal{N}}}

\newcommand{\nestsetfull}[0]{\ensuremath{\Pi (\mathcal{M}_{ij})}}

\newcommand{\jprobr}[0]{\ensuremath{p_{ijmr}}}
\newcommand{\jprobnest}[0]{\ensuremath{p_{ij\nest}}}
\newcommand{\jprobm}[0]{\ensuremath{p_{ijm}}}
\newcommand{\jprobj}[0]{\ensuremath{p_{ij}}}

\newcommand{\iset}[0]{\ensuremath{\mathcal{I}}}
\newcommand{\jset}[0]{\ensuremath{\mathcal{J}}}
\newcommand{\mset}[0]{\ensuremath{\mathcal{M}}}

\newcommand{\kset}[0]{\ensuremath{\mathcal{K}}}
\newcommand{\qset}[0]{\ensuremath{\mathcal{Q}}}

\newcommand{\msetfull}[0]{\ensuremath{\mathcal{M}_{ij}}}
\newcommand{\rset}[0]{\ensuremath{\mathcal{R}}}
\newcommand{\rsetfull}[0]{\ensuremath{\mathcal{R}_{ijm}}}
\newcommand{\aset}[0]{\ensuremath{\mathcal{A}}}
\newcommand{\asetfull}[0]{\ensuremath{\mathcal{A}_{ijmr}}}

\newcommand{\jprobvecr}[0]{\ensuremath{\mathbf{p}_{\substack{\phantom{=}\\[-0.3ex]\iset \jset \mset \rset}}}}
\newcommand{\jprobvecnest}[0]{\ensuremath{\mathbf{p}_{\substack{\phantom{=}\\[-0.3ex] \iset \jset \nestset}}}}
\newcommand{\jprobvecgivennest}[0]{\ensuremath{\mathbf{p}_{\substack{\phantom{=}\\[-0.3ex] \iset \jset \nest}}}}
\newcommand{\jprobvecm}[0]{\ensuremath{\mathbf{p}_{\substack{\phantom{=}\\[-0.3ex] \iset \jset \mset }}}}
\newcommand{\jprobvecj}[0]{\ensuremath{\mathbf{p}_{\substack{\phantom{=}\\[-0.3ex] \iset \jset }}}}

\newcommand{\obsjprobvecm}[0]{\ensuremath{\widehat{\mathbf{p}}_{\substack{\phantom{=}\\[-0.3ex] \iset \jset \mset }}}}
\newcommand{\obsjprobvecj}[0]{\ensuremath{\widehat{\mathbf{p}}_{\substack{\phantom{=}\\[-0.3ex] \iset \jset }}}}

\newcommand{\cprobvecr}[0]{\ensuremath{\mathbf{p}_{\rset|\iset \jset \mset}}}
\newcommand{\cprobvecnest}[0]{\ensuremath{\mathbf{p}_{\nestset |\iset \jset}}}
\newcommand{\cprobvecmnest}[0]{\ensuremath{\mathbf{p}_{\mset| \iset \jset \nestset}}}

\newcommand{\cprobvecm}[0]{\ensuremath{\mathbf{p}_{\mset |\iset \jset}}}
\newcommand{\cprobvecj}[0]{\ensuremath{\mathbf{p}_{\jset|\iset}}}

\newcommand{\obsjprobnest}[0]{\ensuremath{\widehat{p}_{ij\nest}}}
\newcommand{\obsjprobm}[0]{\ensuremath{\widehat{p}_{ijm}}}
\newcommand{\obsjprobj}[0]{\ensuremath{\widehat{p}_{ij}}}

\newcommand{\optjprobr}[0]{\ensuremath{p_{ijmr}}}
\newcommand{\optjprobnest}[0]{\ensuremath{p_{ij\nest}}}
\newcommand{\optjprobm}[0]{\ensuremath{p_{ijm}}}
\newcommand{\optjprobj}[0]{\ensuremath{p_{ij}}}

\newcommand{\optcprobr}[0]{\ensuremath{p_{r|ijm}}}

\newcommand{\optcprobj}[0]{\ensuremath{p_{j|i}}}
\newcommand{\optcprobnest}[0]{\ensuremath{p_{\nest|ij}}}
\newcommand{\optcprobmnest}[0]{\ensuremath{p_{m|ij\nest}}}

\newcommand{\obscprobj}[0]{\ensuremath{\widehat{p}_{j|i}}}
\newcommand{\obscprobnest}[0]{\ensuremath{\widehat{p}_{\nest|ij}}}
\newcommand{\obscprobmnest}[0]{\ensuremath{\widehat{p}_{m|ij\nest}}}

\newcommand{\obscprobvecm}[0]{\ensuremath{\widehat{\mathbf{p}}_{\mset |\iset \jset}}}
\newcommand{\obscprobvecj}[0]{\ensuremath{\widehat{\mathbf{p}}_{\jset|\iset}}}
\newcommand{\obscprobvecnest}[0]{\ensuremath{\widehat{\mathbf{p}}_{\nestset|\iset \jset}}}

\newcommand{\cprobr}[0]{\ensuremath{p_{r|ijm}}}
\newcommand{\cprobnest}[0]{\ensuremath{p_{\nest|ij}}}
\newcommand{\cprobmnest}[0]{\ensuremath{p_{m|ij\nest}}}
\newcommand{\cprobm}[0]{\ensuremath{p_{m|ij}}}
\newcommand{\cprobj}[0]{\ensuremath{p_{j|i}}}

\newcommand{\lag}[0]{\ensuremath{\mathcal{L}}}

\newcommand{\satis}[0]{\ensuremath{S}}
\newcommand{\utilvec}[0]{\ensuremath{\mathbb{V}}}

\newcommand{\scalepar}[0]{\ensuremath{\hat{\lambda}}}

\newcommand{\scalej}[0]{\ensuremath{\theta_{\text{dest}}}}
\newcommand{\scalem}[0]{\ensuremath{\theta_{\text{mode}}}}
\newcommand{\scaler}[0]{\ensuremath{\theta_{\text{route}}}}

\newcommand{\pathsizepar}[0]{\ensuremath{\widehat{\psi}_{ijmr}}}
\newcommand{\nestpar}[0]{\ensuremath{\hat{\tau}_{\nest}}}

\newcommand{\duali}[0]{\ensuremath{\lambda_i}}
\newcommand{\dualj}[0]{\ensuremath{\mu_{ij}}}
\newcommand{\dualm}[0]{\ensuremath{\nu_{ijm}}}
\newcommand{\dualM}[0]{\ensuremath{\kappa_{ij\nest}}}

\newcommand{\optduali}[0]{\ensuremath{\lambda_i}}
\newcommand{\optdualj}[0]{\ensuremath{\mu_{ij}}}
\newcommand{\optdualm}[0]{\ensuremath{\nu_{ijm}}}
\newcommand{\optdualM}[0]{\ensuremath{\kappa_{ij\nest}}}

\newcommand{\optscalej}[0]{\ensuremath{\widehat{\theta}_{dest}}}
\newcommand{\optscalem}[0]{\ensuremath{\widehat{\theta}_{mode}}}

\newcommand{\optnestpar}[0]{\ensuremath{\widehat{\tau}_\nest}}

\newcommand{\obsori}[0]{\ensuremath{\widehat{O}_i}}
\newcommand{\obsprobi}[0]{\ensuremath{\widehat{p}_i}}
\newcommand{\obsprobivec}[0]{\ensuremath{\widehat{\mathbf{p}}_{\iset}}}

\newcommand{\arcflow}[0]{\ensuremath{f^m_a}}
\newcommand{\ind}[0]{\ensuremath{\widehat{\delta}^{a}_{ijmr}}}

\newcommand{\obstotal}[0]{\ensuremath{\widehat{N}}}
\newcommand{\obstrip}[0]{\ensuremath{\widehat{T}_{ij}}}
\newcommand{\obstripm}[0]{\ensuremath{\widehat{T}_{ijm}}}

\newcommand{\cutilj}[0]{\ensuremath{V_{j|i}}}
\newcommand{\cutilm}[0]{\ensuremath{V_{m|ij}}}
\newcommand{\cutilr}[0]{\ensuremath{V_{r|ijm}}}

\newcommand{\errj}[0]{\ensuremath{\varepsilon_{j|i}}}
\newcommand{\errm}[0]{\ensuremath{\varepsilon_{m|ij}}}
\newcommand{\errr}[0]{\ensuremath{\varepsilon_{r|ijm}}}

\newcommand{\tripattrpar}[0]{\ensuremath{\beta_k}}
\newcommand{\tripmattrpar}[0]{\ensuremath{\beta_q}}

\newcommand{\opttripattrpar}[0]{\ensuremath{\widehat{\beta}_k}}
\newcommand{\opttripmattrpar}[0]{\ensuremath{\widehat{\beta}_q}}

\newcommand{\tripattr}[0]{\ensuremath{\widehat{X}^k_{ij}}}
\newcommand{\tripmattr}[0]{\ensuremath{\widehat{X}^q_{ijm}}}

\newcommand{\satistrip}[0]{\ensuremath{S_{ij}}}
\newcommand{\satistripm}[0]{\ensuremath{S_{ijm}}}

\newcommand{\intlatency}[0]{\ensuremath{\int^{\arcflow}_{0} \latency(w) dw}}

\newcommand{\latency}[0]{\ensuremath{g^m_a}}
\newcommand{\rcost}[0]{\ensuremath{g_{ijmr}}}

\newcommand{\beckmann}[0]{\ensuremath{ \sum_{i \in \iset, j \in \jset} \sum_{m \in \mset, r \in \rset} \sum_{a \in \asetfull} \intlatency}}

\newcommand{\auxmain}[0]{\ensuremath{s_a}}

\newcommand{\expcone}[0]{\ensuremath{{\cal{K}}_{exp}}}

\begin{document}
\maketitle
% \vspace{-3em}

% \linenumbers

\newcommand{\YS}[1]{{\color{black} {#1}}}
\newcommand{\SR}[1]{{\color{black} {#1}}} % second revision
\newcommand{\TR}[1]{{\color{black} {#1}}} % thrid revision
\newcommand{\FR}[1]{{\color{black} {#1}}} % fourth revision
\newcommand{\HL}[1]{{\color{black} {#1}}}

\newcommand{\GZ}[1]{{\color{green} {#1}}}
% \newpage

\begin{abstract}
Travel demand forecasting models play crucial roles in evaluating large-scale infrastructure projects, such as the construction of new roads or transit lines. 
While combined modeling approaches have been explored as a solution to overcome the problem of input/output discrepancies in a sequential four-step modeling process, previous attempts at combined models have encountered challenges in real-world applications, primarily due to their limited behavioral richness, computational tractability\HL{, and the lack of a unified framework for parameter estimation}. 
In this study, we propose a novel convex programming approach and present a key theorem \HL{demonstrating that the optimal primal solution satisfies a hierarchical extended logit model, while the optimal dual variables recover the taste coefficients, alternative-specific constants, and scale parameters.}
This model is specifically designed to capture correlations existing in travelers' choices, including similarities among transport modes and route overlaps. 
\HL{The convex structure provides global optimality guarantees and enables the use of efficient off-the-shelf conic solvers.}
The advantages of our proposed model are twofold. 
First, it provides a single unifying rationale (i.e., utility/entropy maximization) that is valid across all steps. 
\HL{Second, its combined nature enables the systematic incorporation of observed data by estimating parameters while accounting for network topology and flow dynamics, thereby providing a behaviorally richer and internally consistent representation of travel behavior.}
Our convex programming approach shows promise for improving the applicability and scalability of travel demand forecasting, thereby supporting infrastructure planning and decision-making.
\HL{Numerical experiments on eight benchmark transportation networks demonstrate the scalability and computational efficiency of the proposed formulation across networks of varying sizes.}
\end{abstract}

% keywords can be removed
\keywords{combined travel demand modeling \and entropy maximization principle \and convex programming \and hierarchical extended logit model \and network equilibrium }

\section{Introduction}

\YS{Travel demand modeling plays a critical role in assessing large-scale infrastructure projects, such as the construction of new roads or transit lines.}
To evaluate a new transportation system and estimate future demand, the travel demand modeling process is typically conducted in two stages~\citep{mcnally2007four}. 
The first stage is referred to as \emph{parameter estimation} (or data assimilation), during which the models used in each step are evaluated, calibrated, and validated using data from the existing transportation system\HL{, with model performance assessed by how well the resulting travel patterns reproduce observed reality}.
The second stage is \emph{future estimation}. 
At this stage, the \HL{projected} future productions and attractions (or just productions) are allocated to the new transportation network, which \HL{may} include newly constructed roads or transit lines. 
\YS{While this two-stage process of \textit{estimation then prediction} remains fundamental, the traditional four-step sequential modeling process can be replaced with alternative modeling approaches.}

\YS{The traditional sequential four-step model is the cornerstone of travel demand forecasting, employing distinct models for each stage: trip generation (often using regression), trip distribution (typically with gravity models), modal split (determined by logit models), and traffic assignment (based on Wardrop's user equilibrium)~\citep{de2011modelling}. 
This integrated process is the primary tool for forecasting future demand and assessing the performance of a transportation system, especially when evaluating large-scale infrastructure projects.
However, the model's sequential nature introduces a fundamental weakness: a lack of internal consistency. 
Because the steps are solved in isolation, the outputs of one stage may contradict the inputs of another. 
For instance, travel times assumed during trip distribution often fail to match the congested travel times produced by the final traffic assignment step~\citep{boyce2002sequential, garrett1996transportation}. 
Furthermore, this structure allows errors from early steps to propagate and compound, potentially leading to significant inaccuracies in the final forecast.
While feedback loops are often used to address these inconsistencies, by feeding equilibrated travel times back into earlier steps, this approach lacks a convergence guarantee and often relies on time-consuming, arbitrary adjustments by experts. 
} 

\YS{Recognizing these limitations, researchers have developed combined models that simultaneously consider traveler choices across multiple stages. 
As illustrated in \cref{fig:comparison}, a sequential model equilibrates only the route choice (a), whereas our combined model can equilibrate route, mode and destination choices together (b). This integrated approach provides more coherent outcomes and better reflects how real-world congestion influences the full spectrum of travel decisions\HL{ (e.g., congestion affects not only route choice but also mode and destination choices)}.
\HL{Although theoretical advances in combined convex programming were established as early as the 1970s, the practical adoption of combined models has been slow because of the lack of standard software and institutional inertia~\citep{mcnally2007four, nie2021full}. This gap reflects two deeper, unresolved challenges in the field: (i) the persistent trade-off between behavioral richness and computational tractability in real-world networks and (ii) the lack of frameworks for joint parameter estimation.}
To date, models with desirable convex properties have often achieved tractability at the cost of simplistic behavioral assumptions \HL{(e.g., reliance on the basic multinomial logit model)} or have failed to fully exploit convexity in the solution method~\citep{oppenheim1995urban, yao2014general, zhou2009}. 
This paper bridges that gap by introducing a combined model that achieves a higher degree of behavioral richness while retaining a convex programming formulation, thus providing a practical and powerful alternative to existing approaches.
}

\FR{
Most importantly, this paper bridges a long-standing gap in travel demand modeling: the joint estimation of behavioral parameters and the characterization of network flows.
Prior work has established convex formulations for travel demand modeling when behavioral parameters are fixed. 
However, jointly estimating behavioral parameters and equilibrium flows generally introduces nonlinear interactions that can destroy convexity.
We address this challenge by leveraging duality theory to maintain convexity.
\HL{The key insight is to recover estimates of behavioral parameters (e.g., taste coefficients, alternative-specific constants, and scale parameters) as dual variables, thereby avoiding the bilinear terms that would otherwise make the formulation non-convex.}
Recent advances in related fields, particularly in information theory and convex optimization, have introduced key developments, including the duality between maximum likelihood estimation and logit-family models \citep{muller2022discrete}, as well as conic reformulation techniques and modern interior-point solvers \citep{aps2024mosek}.
Building on recent advances in convex optimization and information theory, this paper makes a novel contribution to the transportation field by integrating them into a unified framework for travel demand modeling.
}

\begin{figure}[!tb]
    \begin{subfigure}[b]{\textwidth}
        \centering \includegraphics[width=0.8\textwidth]{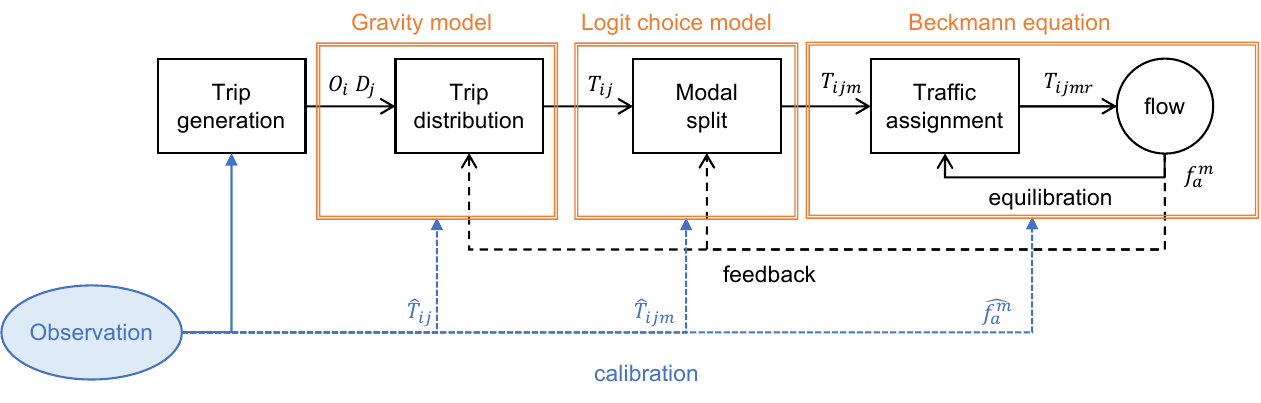}
        \caption{Traditional four-step travel demand modeling}
        \label{fig:sub1}
    \end{subfigure}
    \begin{subfigure}[b]{\textwidth}
        \centering \includegraphics[width=0.8\textwidth]{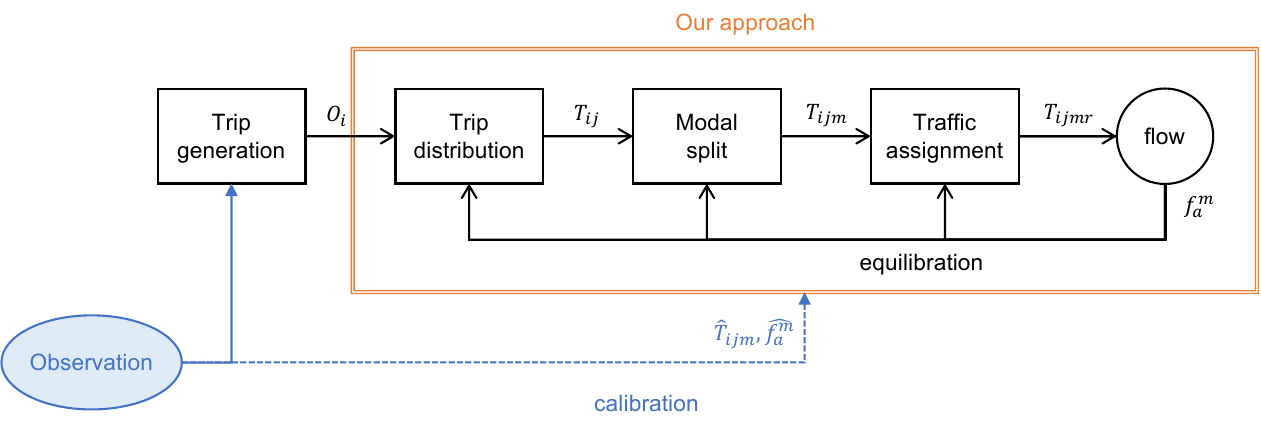}
        \caption{Combined travel demand modeling}
        \label{fig:sub2}
    \end{subfigure}
    \caption{Traditional four-step modeling (adapted from~\citet{mcnally2007four}) \textit{vs} our combined travel demand modeling.}
    \label{fig:comparison}
\end{figure}

\subsection{Literature review}

\FR{This section reviews the literature on convex formulations and parameter estimation for travel demand modeling, highlighting the gap between tractable equilibrium modeling and behavioral parameter estimation.}

\subsubsection{Convex reformulations of the four-step travel demand model}

\YS{The traditional four-step model is defined by two key inputs: the transportation system (i.e., network graphs with attributes like capacity and speed) and the activity system (e.g., socioeconomic and land use data). This framework divides the complex problem of travel demand into a sequence of distinct steps. Namely, it unfolds across four sequential steps: trip generation, which estimates trip productions ($O_i$) and attractions ($D_j$); trip distribution, which forms an Origin-Destination (OD) matrix ($T_{ij}$); modal split, which determines mode-specific OD matrices ($T_{ijm}$); and finally, traffic assignment, which assigns modal trips to routes to find a user equilibrium, often using the Frank-Wolfe algorithm \citep{mcnally2007four, fukushima1984modified}.
The model's sequential nature is its fundamental weakness. It can be viewed as an alternating minimization scheme where each stage optimizes its objective based on \emph{fixed} inputs from prior stages. For instance, trip distribution and modal split rely on fixed travel times that often fail to match the congested travel times produced by the final traffic assignment. This reliance on fixed variables compromises optimality and prevents the model from capturing the true interdependencies within the system.

Combined convex models resolve this issue by jointly optimizing all variables simultaneously, thus avoiding this loss of optimality and providing a more accurate representation of network equilibrium. The concepts of entropy and the Beckmann equation are fundamental to creating a combined convex model of travel demand. 
Entropy, introduced in information theory by \citet{shannon1948mathematical}, plays a central role in choice modeling, where it is interpreted either as the value of choice variety \citep{miyagi1996direct} or as the cost of information acquisition in rational inattention \citep{fosgerau2020discrete}. 
For a comprehensive overview of entropy applications, see \citet{fang1997entropy}.
Complementing this, the Beckmann equation provides a powerful convex optimization formulation for the traffic assignment problem, elegantly ensuring that link travel costs correspond to Wardrop's user equilibrium \citep{beckmann1956studies}. 
Building upon the influential studies of \citet{wilson1969use}, \citet{anas1983discrete}, and \citet{beckmann1956studies}, the key innovation in the literature, summarized in Table~\ref{tab:literature}, has been to integrate both entropy functions and the Beckmann equation into a single objective function. 
}

The specific objectives are selected based on the steps that researchers aim to combine. The existing literature can be categorized into four groups: (a) combining trip distribution and traffic assignment \SR{\citep{evans1976derivation, florian1975combined, oppenheim1993equilibrium, yang2001simultaneous}}, (b) combining modal split and traffic assignment \citep{florian1977traffic, abdulaal1979methods, fernandez1994network, garcia2005network}, (c) combining trip distribution, modal split, and traffic assignment \citep{florian1978, friesz1981equivalent}, and (d) combining trip generation, distribution, modal split, and traffic assignment \citep{safwat1978, oppenheim1995urban, zhou2009}.

\begin{table}[ht!]
\caption{Literature on Combined Models for Four-Step Travel Demand Modeling}
\centering
\label{tab:literature}
\begin{tabular}{@{}lcccc@{}}
\toprule
\textbf{}                                        & \textbf{\begin{tabular}[t]{@{}c@{}}Trip \\ generation $^*$ \\ (Regression/ \\ Gravity) \\ \end{tabular}} & \textbf{\begin{tabular}[t]{@{}c@{}}Trip  \\ distribution\\ (Gravity \\ model) \end{tabular}} & \textbf{\begin{tabular}[t]{@{}c@{}}Modal \\ split\\ (Logit \\ model)\end{tabular}} & \textbf{\begin{tabular}[t]{@{}c@{}}Traffic \\ assignment\\ (User \\ equilibrium)\end{tabular}} \\ \midrule
\textbf{Framework}                 & \textbf{ }                & \textbf{Entropy}                                                                     & \textbf{Entropy}                                                                       & \textbf{\SR{Entropy}/Beckmann}                                                                                   \\ \midrule
\citet{wilson1969use}            &                          & O                                                                                    &                                                                                             &                                                                                                     \\
\citet{anas1983discrete}         &                          &                                                                                      & O                                                                                           &                                                                                                     \\
\citet{beckmann1956studies}      &                          &                                                                                      &                                                                                             & O                                                                                                   \\
\midrule
\citet{evans1976derivation}      &                          & O                                                                                    &                                                                                             & O                                                                                                   \\
\citet{florian1975combined}      &                          & O                                                                                    &                                                                                             & O                                                                                                   \\
\citet{oppenheim1993equilibrium} &                          & O                                                                                    &                                                                                             & O                                                                                                   \\\SR{\citet{yang2001simultaneous}}     &                          &  O                                            &                                                                                            & O                                                                                                \\
\citet{florian1977traffic}       &                          &                                                                                      & O                                                                                           & O                                                                                                   \\
\citet{abdulaal1979methods}      &                          &                                                                                      & O                                                                                           & O                                                                                                   \\
\citet{fernandez1994network}     &                          &                                                                                      & O                                                                                           & O                                                                                                   \\
\citet{garcia2005network}        &                          &                                                                                      & O                                                                                           & O                                                                                                   \\
\citet{florian1978}              &                          & O                                                                                    & O                                                                                           & O                                                                                                   \\
\citet{friesz1981equivalent}     &                          & O                                                                                    & O                                                                                           & O                                                                                                   \\
\citet{safwat1978}              & O                        & O                                                                                    & O                                                                                           & O                                                                                                   \\ 
\citet{oppenheim1995urban}       & O                        & O                                                                                    & O                                                                                           & O                                                                                                   \\
\citet{yao2014general}           & O                        & O                                                                                    &                                                                                             & O                                                                                                   \\
\citet{zhou2009}                 & O                        & O                                                                                    & O                                                                                           & O                                                                                                   \\
\textbf{Ours}                                    & \textbf{O}               & \textbf{O}                                                                           & \textbf{O}                                                                                  & \textbf{O}                                                                                          \\ \bottomrule
\end{tabular} \\
\flushleft{\footnotesize{$^*$ Trip generation can be integrated in two ways. The first approach uses a regression model to estimate only the origin demand and then determines the origin-destination demand pair in conjunction with the trip distribution process. The second method incorporates a regression model to generate origin and destination demands separately. The work of \cite{safwat1978} exemplifies the second approach, whereas other methods, including ours, typically fall under the first category.}}
\end{table}

A handful of research has focused on framing all steps of the travel decision process – including destination, mode, route choices, and their simultaneous interactions – within the \textit{utility maximization} paradigm \citep{oppenheim1995urban, zhou2009, yao2014general}. 
This approach views an individual traveler as a rational consumer of urban trips, with choices governed by random utility theory\HL{, assuming that travelers select the alternative that maximizes utility composed of deterministic and random components}. 
The resulting solution maximizes the collective utility of travelers, thereby effectively capturing aggregated travel patterns.
\YS{Note that} the terms entropy maximization and utility maximization can be used interchangeably, except during the trip generation step. 
Under the entropy maximization principle, the demand at both origins and destinations must be estimated during the trip generation step so that they can be provided to the trip distribution step to determine the full origin-destination demand pair. 
In contrast, under the utility maximization principle, only the origin demand is estimated during the trip generation step. The origin-destination demand pair is then determined, and the destination demand can be aggregated.
This method is particularly suitable given that estimates of trip production at origins are generally more reliable than estimates of trip attraction at destinations \citep{mcnally2007four}.
\YS{Our approach builds on utility maximization theory and enhances behavioral realism relative to existing combined models. Although nested mode choice and route-overlap correction have each been studied previously, prior formulations have generally considered these features separately or only in part. Specifically, \citet{safwat1978} and \citet{oppenheim1995urban} employ relatively simple mode-choice structures, \citet{zhou2009} does not account for route overlap, and \citet{yao2014general} focuses on a single mode (i.e., private car). Our hierarchical extended logit model brings these behavioral features together within a unified destination--mode--route framework.}

\HL{
\subsubsection{Parameter estimation via duality}

Estimating behavioral parameters jointly with stochastic network equilibrium has long been recognized as a challenging task.
A common approach is to formulate parameter estimation as a bilevel problem, with an estimation criterion in the upper level and network equilibrium in the lower level.
For example, \citet{boyce2003validation} estimate parameters by maximizing likelihood while enforcing user equilibrium.
Such formulations are generally non-convex once behavioral parameters and equilibrium flows are estimated jointly, and maximum-likelihood estimation can become computationally burdensome in networks with many OD pairs \citep{sheffi1985urban}.

A second approach fits aggregate network observations, particularly link counts, by minimizing the discrepancy between observed and model-implied flows.
\citet{yang2001simultaneous}, for instance, estimate OD demand and a travel-cost coefficient subject to logit-based stochastic user equilibrium (SUE) conditions.
Their equilibrium-constrained estimation problem is formulated as a single-level, continuously differentiable nonlinear program. Because the travel-cost coefficient, OD demand, and equilibrium link flows are coupled through nonlinear SUE constraints, the resulting simultaneous estimation problem is generally non-convex. They therefore develop a successive quadratic programming algorithm to obtain a local KKT solution.
Moreover, their solution method relies on derivatives of the stochastic network-loading procedure, making its direct extension to a hierarchical destination--mode--route setting with multiple modes nontrivial.

Our approach adopts a different estimation architecture.
Rather than fitting observed travel patterns directly in the objective while imposing equilibrium conditions as constraints, we encode empirical information through moment and conditional-entropy constraints and optimize over the travel distribution.
This primal--dual construction separates equilibrium characterization from the recovery of a broad class of behavioral parameters.
The primal solution characterizes the hierarchical destination--mode--route equilibrium, while the dual variables associated with the empirical constraints recover taste coefficients, alternative-specific constants, and destination- and mode-level scale parameters.
Because these parameters do not appear as primal decision variables, the formulation avoids the bilinear interactions that otherwise arise in joint equilibrium estimation.

The resulting estimation problem is convex conditional on the route-level dispersion parameter $\hat{\lambda}$. This parameter governs how OD flows are distributed across routes and mapped onto link flows, and jointly estimating it with endogenous network flows would reintroduce non-convexity. Rather than treating the entire joint estimation problem as inherently non-convex, our formulation identifies and isolates this remaining source of non-convexity. We therefore calibrate $\hat{\lambda}$ against observed link counts through a one-dimensional outer search, while retaining the recovery of the remaining high-dimensional behavioral parameters within a globally solvable convex program. In this way, the non-convex component is confined to a scalar calibration problem without sacrificing the convex structure of the overall estimation framework.

Our formulation builds on two well-established ideas: the primal--dual relationship between entropy maximization and maximum likelihood estimation in exponential-family models \citep[\S~3.6]{wainwright2008graphical}, and modern exponential-cone representations for entropy and Beckmann terms \citep{mosek_conic}.
By combining these ideas with hierarchical choice and congestion-dependent network equilibrium, we obtain a tractable primal--dual framework for simultaneous equilibrium characterization and parameter recovery.
}

\subsection{Contributions}

In summary, this paper offers significant contributions in three key areas.
\begin{itemize}[leftmargin=2em]
    \item We propose a two-stage travel demand forecasting framework consisting of parameter estimation and future prediction. \HL{In the estimation stage, empirical information is incorporated through moment and conditional-entropy constraints, whose dual variables recover the corresponding behavioral parameters.}
    \item Within this framework, we incorporate flexible choice models that extend the basic multinomial logit structure to capture richer behavioral patterns across destination, mode, and route choices. In particular, the nested logit model accounts for correlations among modes, while the path-size logit model accounts for route overlap.
    \item We develop compact convex formulations and exponential cone reformulations for both the first- and second-stage problems. These conic formulations enable the use of state-of-the-art conic solvers, and our numerical experiments demonstrate computational scalability across benchmark networks of varying sizes.
\end{itemize}

The remaining paper is organized as follows. 
\YS{In \cref{section:modelingassumption}, we describe our modeling assumption. Specifically, we introduce the hierarchical extended logit model.
In \cref{section:twostage}, we formulate our novel two-stage approach for travel demand forecasting.}
\Cref{sec:results} presents numerical results and sensitivity analyses on benchmark networks, demonstrating the behavioral properties and computational scalability of our framework.
Lastly, in \Cref{sec:conclusions}, we discuss the implications of our research and suggest future avenues for advancing travel demand forecasting and transportation planning.
\YS{All proofs are provided in Appendix A, while Appendix B offers a more detailed background on the sequential modeling approach.}

\subsection{Notations} \label{section:notations}
We define the indices: origin $i$, destination $j$, mode $m$ in nest $\nest$, and arc $a$. 
The calligraphic letters \iset~ and \jset~ are reserved for the set of origins and the set of destinations, respectively.
We denote by \msetfull~the set of all modes available for travel from origin~$i$ to destination ~$j$. 
We further partition \msetfull~ into several subsets, which we call \emph{nests}. 
We denote a nest of \msetfull~ by $\nest \in \nestsetfull$, where $\Pi$ denotes partition.
Furthermore, we use \rsetfull~ to denote the set of routes from origin $i$ to destination $j$ using transportation mode $m$. Finally, we use \asetfull~ to denote the set of arcs (links) in route $r$ when traveling from origin $i$ to destination $j$ using transportation mode $m$.
With slight abuse of notation, we sometimes use \mset, \nestset, \rset, and \aset~ without specifying their dependence on their arguments, i.e., the specific origin, destination, mode, or route. For example, the summation $\sum_{i \in \iset} \sum_{j \in \jset} \sum_{m \in \msetfull} x_{ijm}$ is simplified to $\sum_{i \in \iset} \sum_{j \in \jset} \sum_{m \in \mset} x_{ijm}$.

\section{Modeling Assumptions}\label{section:modelingassumption}
We begin by formalizing our behavioral framework, where each traveler is assumed to choose a destination, mode, and route to maximize their individual utility. These choices, while made simultaneously by the traveler, are captured within a hierarchical extended logit model, as illustrated in \cref{fig:hierarchical}. This structure allows us to model complex interactions---for instance, how congestion at the route level influences destination choice---within a single unified framework.  We will then demonstrate that the aggregation of these individual utility-maximizing choices corresponds to the optimal solution of a network-level equilibrium problem.

In this hierarchical model, each level of the decision hierarchy is represented by a logit model specifically chosen for its behavioral realism. Route choice is modeled with a Path Size Logit (PSL) model, which explicitly handles the route overlapping problem \citep{ben1999discrete}. Mode choice is modeled with a Nested Logit (NL) model to account for correlations between similar transport modes. Destination choice is modeled with a Multinomial Logit (MNL) model, which is equivalent to the convex formulation of the extended gravity model (see Appendix~\ref{section:preliminary}).
Formally, the hierarchical extended logit model is parameterized by the destination, mode, and route parameters~\scalej,~\scalem,~\scaler~, respectively. 
In this model, the total utility received from a single trip from origin~$i$ to destination~$j$ on mode~$m$ and route~$r$, denoted by $U_{ijmr}$, can be written as
\begin{align*}
    U_{ijmr} = \cutilj + \cutilm + \cutilr + E_{j|i} + E_{m|ij} + E_{r|ijm},
\end{align*}
where \cutilj, \cutilm, \cutilr~ are the fixed utilities for a traveler choosing destination $j$, mode $m$, and route $r$, and the random additive noises $E_{j|i}, E_{m|ij}$, and $E_{r|ijm}$ follow the cumulative distribution functions defined as  
\begin{align*}
    \begin{cases}
        F(\errj; \scalej) = \exp({- \exp({- \scalej \errj})}), \\
        F(\errm; \scalem) = \exp({-\sum_{\nest \in \nestset} [\sum_{m \in \nest} \exp({-\scalem \varepsilon_m/{\hat{\tau}_{\nest}}}) ]^{\hat{\tau}_{\nest}}}), ~\text{and} \\
        F(\errr; \scaler) = \exp({- \exp({- \scaler \errr})}),
    \end{cases}
\end{align*}
respectively. 
While \errj~ and \errr~ are defined as the Gumbel distribution, \errm~ embeds the correlation within nests. 
\YS{Since the error terms follow the Gumbel distribution, the resulting equilibrium flow distribution can be interpreted as an outcome optimized over the Gumbel-like distribution with the additional considerations regarding Beckmann equation.}
Note that the parameters \scalej,~\scalem,~\scaler~ are related to the size of the variances of their respective distributions, and they appear as scaling factors for each hierarchical structure.
The scaling factors allow us to analyze different levels of choice (i.e., destination, mode, and route), each with a different scale of utility, using a single model.
Since the correlation at the upper level is stronger than at the lower level, the variance and the scale parameter are smaller at the upper level.
From now on, we set \scaler~ to 1. Subsequently, \scalej~ and \scalem~ will be always smaller than 1.

\begin{figure}[!tb]
    \centering
    \includegraphics[width=0.6\textwidth]{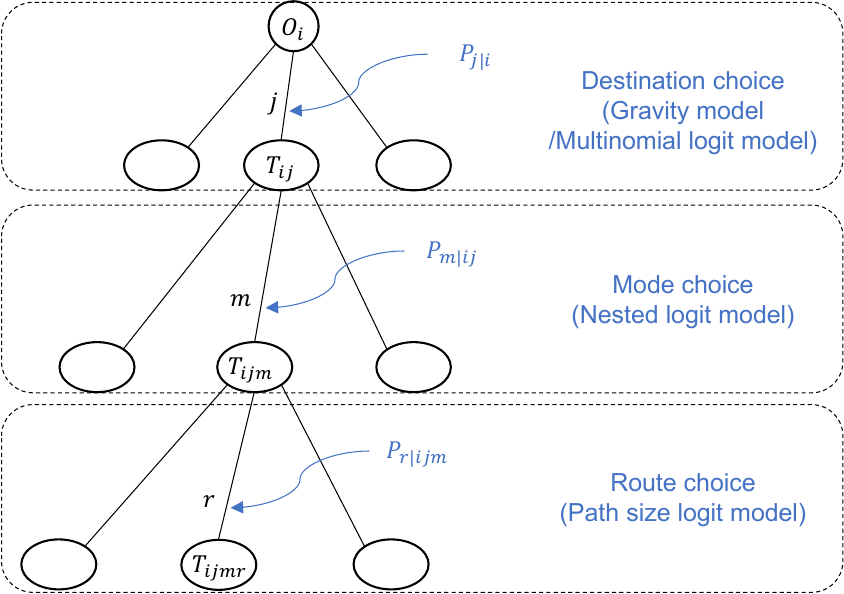}
    \caption{Hierarchical decision-making choices for destination $j$, mode $m$, and route $r$. $P_{j|i}, P_{m|ij}$, and $P_{r|ijm}$ denote the conditional probabilities for destination, mode, and route choices, respectively.}
    \label{fig:hierarchical}
\end{figure}

\subsection{Path size logit model for route choice}
We define the conditional probabilities from the lowest levels. First, we need to define the path-size scaling factor to account for overlapped routes, as certain network arcs may be shared by different routes. The path-size scaling factor is defined as 
\begin{align*}
    \pathsizepar &= \sum_{a \in \asetfull} \frac{\hat{l}_{a}}{\widehat{L}_{ijmr}} \left( \frac{1}{\sum_{r' \in \rsetfull} \widehat{\delta}^{a}_{ijmr'}} \right), 
\end{align*}
where $a$ denotes an arc in the route $r$, which is specified by mode $m$, $\hat{l}_{a}$ denotes the length of link $a$, $\widehat{L}_{ijmr}$ denotes the length of route $r$, and \ind~ denotes the link-route incidence indicator. In this way, using the path-size scaling factor \pathsizepar, each link $a$ in route $r$ is penalized according to the number of routes that share the link $a$. The significance of penalization is weighted by the dominance of link $a$ in route $r$, measured as its length relative to the total route length ($\hat{l}_{a}/\widehat{L}_{ijmr}$).
\YS{Recall that \rsetfull~ denotes the set of routes from origin $i$ to destination $j$ using transportation mode $m$. 
Note that \asetfull~ is a pregenerated candidate set of routes, and the appropriate number of routes can be determined through sensitivity analysis in practice. 
As previous literature has shown, the choice of candidate routes affects the outcome of the equilibrium model and must be carefully designed in real-world contexts \citep{bekhor2008effects, bliemer2008impact}.
Details regarding the generation of the candidate route set are provided in the experimental section.}

The total travel cost of taking route $r$ on mode $m$ from origin $i$ to destination $j$, $g_{ijmr}$, is defined as the summation of the travel cost of all arcs (i.e., roads) in the route and is expressed as 
\begin{align*} \textstyle
g_{ijmr} &= \sum_{a \in \asetfull}  \latency  (\arcflow) \widehat{\delta}^{a}_{ijmr},
\end{align*}
where \latency~ is called the road latency function and is used to measure travel time on congested roads. It takes the flow count \arcflow\ as input of each arc $a$ in the network for mode $m$, and it is typically a convex increasing function.

The generalized utility of taking route $r$ on mode $m$ from origin $i$ to destination $j$ is defined as the scaled negative route cost penalized by the path-size scaling factor and is expressed as
\begin{align} \textstyle
\cutilr &= - \scalepar g_{ijmr}  + \ln(\pathsizepar), 
\label{eqn:utilroute}
\end{align}
where $\scalepar$ \HL{is a scaling parameter governing travelers’ sensitivity to route costs.}

Recall $\mathcal R_{ijm}$ is the set of all routes from origin $i$ to destination $j$ via mode $m$. The probability of choosing route $r$ from the set $\mathcal R_{ijm}$ depends on the fixed utility \cutilr~ through
\begin{align}\label{eqn:c_prob_route}
    \cprobr 
    &= \frac{ \exp({ \cutilr })}{\sum_{r' \in \mathcal R_{ijm}} \exp({ V_{r'|ijm} })}.
\end{align}

\subsection{Nested logit model for mode choice}

\paragraph{Basics of nested logit model.}
The MNL model is a widely used choice model that assumes the independence of irrelevant alternatives (IIA), meaning the introduction or removal of an alternative should not impact the relative preference between other alternatives. 
Because of this assumption, the MNL has a significant limitation: it often fails to capture realistic substitution patterns among alternatives. 
This is evident in scenarios such as the well-known \textit{blue bus and red bus} paradox~\citep{mcfadden1980econometric}. 
Imagine a transportation market featuring private cars and red buses, each with a 50\% market share. 
Now, a new alternative is introduced -- a blue bus. 
According to the IIA assumption, the blue buses should take shares equally from both cars and red buses. 
For example, the market share for each mode could end up being 33.3\%. 
However, this prediction seems unreasonable because the introduction of the new bus disproportionately impacts the market share of existing alternatives. 
Intuitively, the blue buses would primarily take market share from the red buses rather than from private cars.
Therefore, it is more reasonable to expect that the total bus share would remain close to 50\%. 

To address this limitation and better reflect the complexities of decision-making, researchers often resort to introducing the NL. 
The NL model relaxes the IIA assumption by allowing for nested groups of alternatives, enabling a more flexible representation of substitution patterns. 
By incorporating the NL framework, analysts can more accurately model consumer choices in situations with multiple alternatives and better understand how the introduction of new options influences existing choices.

% \begin{figure}[H]
%     \centering
%     \includegraphics[width=\textwidth]{figures/nested.pdf}
%     \caption{Nested logit model}
%     \label{fig:nestedlogit}
% \end{figure}

\begin{figure}[H]
    \centering
    \begin{subfigure}[b]{0.59\textwidth}
        \includegraphics[width=\textwidth]{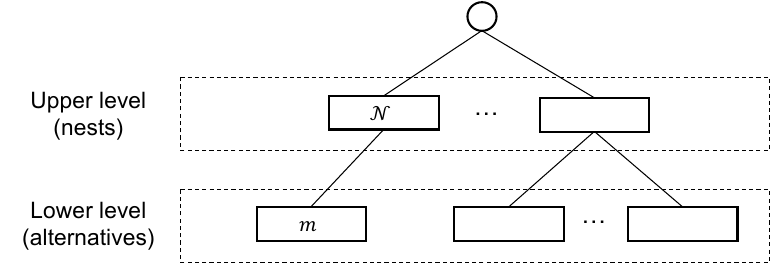}
        \caption{Structure of nested logit model}
        \label{fig:sub:1}
    \end{subfigure}
    \hfill
    \begin{subfigure}[b]{0.39\textwidth}
        \includegraphics[width=\textwidth]{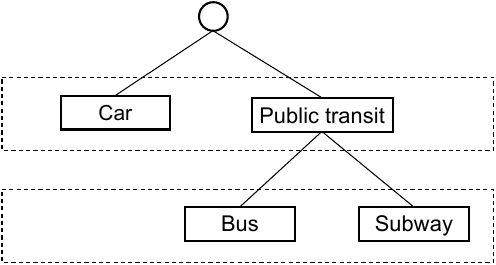}
        \caption{Example of nested mode choice behavior}
        \label{fig:sub:2}
    \end{subfigure}
    \caption{Structure and example of the nested logit model.}
    \label{fig:nestedlogit}
\end{figure}

The NL model is a generalization of the MNL model that accounts for correlations among alternatives within nested groups~\citep{wen2001generalized}. 
We consider a 2-layer model as presented in \cref{fig:nestedlogit}, but this can easily be extended to~$n$-layer NL models. The main difference between the NL and MNL lies in the existence of nests that represent similar alternatives within a group. 

Recall that \msetfull~ denotes the set of all alternatives (modes) for travel from origin $i$ to destination $j$, and we will remove the subscripts $ij$ for simplicity. We use $m$ to denote an element of \mset.
Recall also that \mset~ was partitioned into several nests \nestset, where each nest represents a subset of modes that share similar characteristics or are perceived to be more similar by travelers. We denote a nest by $\nest \in \nestset$. 
The NL model introduces the concept of similarity among alternatives within a nest. 
We denote the dissimilarity factor of the nest \nest~ by \nestpar, which can be expressed by~$\sqrt{1-\hat{\rho}_{\nest}}$ where~$\hat{\rho}_{\nest}$ is the coefficient of correlation. 
When there is no correlation (i.e.,~$\hat{\rho}_{\nest} = 0$), the dissimilarity factor has the highest value (i.e.,~$\hat{\tau}_{\nest} = 1$), which is in the special case that the NL model equals the MNL model.

We denote the utility value of the alternative $m$ by $U_m$. In the random utility model, the utility values are random and follow the additive model $U_m = V_{m} + E_{m}$, where $V_m$ denotes the deterministic values and $E_m$ denotes the random additive noise. The noise distribution in the NL model follows the cumulative distribution function embedding the correlation within nests.  
\begin{align}
    \textstyle F([\varepsilon_m]_{m \in \nest, \nest \in \nestset}; \theta, [\hat{\tau}_{\nest}]_{\nest \in \nestset}) = \exp({-\sum_{\nest \in \nestset} [\sum_{m \in \nest} \exp({-\theta \varepsilon_m/{\hat{\tau}_{\nest}}})]^{\hat{\tau}_{\nest}}}). 
    \label{eqn:NL_cdf}
\end{align}

We denote the probability of choosing alternative $m$ from the nest \nest~ by $p_{m |\nest}$ using the expression
\begin{align*}
    p_{m | \nest} = \frac{\exp({\theta V_{m}/{\hat{\tau}_{\nest}}})}{\sum_{m' \in \nest}\exp({\theta V_{m'}/\hat{\tau}_{\nest}})},
\end{align*}
which is a logit model with the utility term scaled by~$\hat{\tau}_{\nest}$. 
Furthermore, we denote the probability of choosing the nest \nest~ by $p_{\nest}$, which satisfies 
\begin{align*}
    & p_\nest = \frac{\exp({IV_{\nest}})}{\sum_{\nest \in \nestset}\exp({IV_{\nest}})}, \text{  where  } \textstyle IV_{\nest} = \hat{\tau}_\nest \ln \sum_{m' \in \nest }\exp({\theta V_{m'}/\hat{\tau}_{\nest}}). 
\end{align*}    
In this way, the probability of choosing nest~$\nest$ is expressed by a logit model with the inclusive value $IV_{\nest}$, which aggregates the values of alternatives in the nest.
The inclusive value~$IV_{\nest}$ represents the attractiveness of nest~$\nest$ since it measures the summation of normalized utilities of all alternatives in nest~$\nest$, which allows us to compare alternatives across nests. 
Finally, we denote the probability of choosing alternative~$m$ by~$p_m$. If $m$ belongs to the nest \nest, we can compute $p_m = p_{\nest} p_{m|\nest}$ using the Bayes rule, which is also closely related to the problem of entropy maximization derived for the NL model by \citet{fosgerau2020discrete}.
\\

\paragraph{Nested logit model in our context.}
The probability of choosing mode $m \in \nest$ from origin $i$ to destination $j$ satisfies
\begin{align}\label{eqn:c_prob_mode}
    \cprobm &= \cprobnest \cdot \cprobmnest
    = \frac{\exp({IV_{\nest}})}{\sum_{\nest' \in \nestset} \exp({IV_{\nest'}})} \cdot
    \frac{\exp({ \scalem (\cutilm + \satistripm )/\nestpar})}{\sum_{m' \in \nest} \exp({\scalem (V_{m'|ij} + S_{ijm'})/\nestpar})},
\end{align}

where the inclusive value representing the attractiveness of nest $\nest$ is defined as 
\begin{align*}
    IV_{\nest} &=  \ln \left( \textstyle \sum_{m \in \nest} \exp({\scalem [\cutilm + S_{ijm}] /  \nestpar}) \right),
\end{align*}

the satisfaction function $S_{ijm}$ representing the expected utility of choosing mode $m$ is defined as 
\begin{align*}
    \satistripm &= \ln \left( \textstyle \sum_{r \in \rsetfull} \exp({ \cutilr}) \right),
\end{align*} 
and the fixed utility is defined as 
\begin{align} \textstyle
    \cutilm = \sum_{q \in \qset} \tripmattrpar \tripmattr. 
    \label{eqn:utilmode}
\end{align}
Here, \tripmattr~ is the $q$-th travel attribute and \tripmattrpar~ is the parameter associated with the $q$-th attribute, and $\qset$ is the set of all travel attributes \FR{(e.g., travel time, travel cost)} in the mode choice level.

\subsection{Multinomial logit model for destination choice}

The probability of choosing destination $j$ from origin $i$ is computed as
\begin{align}\label{eqn:c_prob_destin}
    \cprobj = \frac{\exp({ \scalej ( \cutilj + \satistrip)})}{\sum_{j' \in \jset} \exp({\scalej (V_{j'|i}+S_{ij'}) })}, 
\end{align}
where the fixed utility is defined as 
\begin{align} \textstyle
    \cutilj = \sum_{k \in \kset} \tripattrpar \tripattr, 
    \label{eqn:utildest}
\end{align}

where \tripattr~ is the $k$-th travel attributes, \tripattrpar~ is the parameter associated with the attribute, and \kset~ is the set of all destination attractiveness attributes \FR{(e.g., employment, point of interest (POI) density)} in the destination choice level, and the satisfaction function is defined as
\begin{align*}
    \satistrip &= \ln \left( \textstyle \sum_{\nest \in \nestsetfull} \exp({ IV_{\nest}}) \right).
\end{align*}

In this section, we assume that the travel demand distribution follows a hierarchical extended logit model. 
In the first stage, we are only given partial observation of travel demand data, and we aim to estimate the parameters of a hierarchical extended logit model by solving a convex optimization problem. 
We then use the estimated parameters to predict the future travel demand given a new set of demand data in each origin.  

\section{Proposed Two-Stage Model}\label{section:twostage}
\YS{We present a two-stage framework for travel demand calibration and prediction. Our approach follows an ``estimate-then-predict'' structure: the first stage uses available data to estimate key behavioral parameters, while the second stage uses these parameters to predict future demand distributions for new scenarios. This structure significantly reduces the internal discrepancies found in traditional sequential models (see \cref{fig:comparison}).
In the following, we will detail this framework, beginning with the model's inputs and data considerations. We then present the mathematical formulation for the first-stage estimation problem, followed by the second-stage prediction problem. Key notations used throughout are defined in \cref{section:notations}.}

\YS{\subsection{Data acquisition}}

\YS{We use the notation $\hat{\cdot}$ to denote given quantities.}
Let \obstotal~ denote the total number of observed travelers, \obsori~ denote the total number of travelers from origin $i$, \obstrip~ denote the number of travelers from origin $i$ to destination $j$, and \obstripm~ denote the number of travelers from origin $i$ to destination $j$ using specific mode $m$. 
\YS{Let $\hat{f}^m_a$ denote the observed number of travelers on link $a$ in the transportation network using mode $m$. }

\HL{
We leverage two primary types of data sources: (1) mode-specific OD matrices, denoted as $\obstripm$, and (2) traffic count data collected from sensors $\hat{f}^m_a$, where $m$ denotes automobiles, such as traffic cameras (e.g., I-24 Motion \citep{i24motion}) or loop detectors (e.g., the Performance Measurement System, PeMS \citep{PEMS}).
Traffic count data are generally considered reliable, as they capture the total number of vehicles traversing a given corridor.
The empirical aggregate travel distribution ($\obstripm$) can be constructed from either directly observed or reconstructed travel patterns.

First, mode-specific OD flows may be obtained directly for selected modes.
For example, automated fare collection systems for public transit (e.g., smart cards or single-use magnetic fare tickets) can capture precise entry and exit information at fare gates.
Similarly, detailed taxi trip records may be available through government open-data portals or data-sharing agreements with taxi operators.
In practice, partially disaggregated data are also common: aggregate OD matrices may be publicly available, while mode-specific information is reported only as aggregate mode shares.
Our formulation can accommodate such cases with minor modifications by matching destination-level OD information and mode-level aggregate statistics separately, rather than requiring a fully disaggregated mode-specific OD matrix.

Second, mode-specific OD flows can be reconstructed by integrating heterogeneous data sources, including household travel surveys, mobile phone and location-based data, and census data \citep{FHWA2022OD}.
Activity-based or agent-based models may also be used to integrate these heterogeneous data sources and synthesize a detailed baseline representation of current travel patterns.
Our framework can then use this reconstructed distribution as an input to recover an interpretable set of behavioral parameters while accounting for network topology and congestion.
In this sense, the two approaches are complementary: activity-based and agent-based models offer substantial flexibility in representing the current system, whereas our formulation provides a compact structure for characterizing the implied behavior and transferring the estimated parameters to future network and demand scenarios.}

\TR{While previous work such as \cite{yang2001simultaneous} focuses on automobile traffic in a road-network setting, in urban travel patterns involving multiple modes and destination choice, link counts alone are insufficient to characterize destination, mode, and route choices or to identify the associated behavioral parameters.
Our modeling philosophy differs from much of the earlier literature, which primarily minimizes discrepancies between observed and simulated link counts.
In contrast, our approach is grounded in theory-driven models (e.g., entropy-based representations of logit models and Beckmann equations), while using link counts to calibrate the route-level dispersion parameter and to assess consistency with observed network usage.} %as additional information to balance observed data with theoretical equilibrium or

\subsection{First stage – estimation}\label{section:firststage}
\HL{The observed data provide only partial information about the full travel distribution. Mode-specific OD matrices characterize aggregate destination and mode choices, but do not reveal how trips are distributed across routes. Conversely, traffic sensors (e.g., loop detectors or magnetometers) provide vehicle counts at monitored roadway locations, which can be aggregated to link-level flows, without directly identifying the corresponding origins, destinations, or routes. Consequently, the complete joint distribution over destination, mode, and route choices remains unobserved.}
Traditional maximum likelihood estimation is commonly used to estimate parameters from such incomplete data. \citet[\S~3.6]{wainwright2008graphical} demonstrated that the dual of the entropy maximization problem is equivalent to maximum likelihood estimation in exponential-family models. Furthermore, \citet{donoso2011maximum} showed that the constrained maximum entropy provides a better estimate than maximum likelihood, as it reproduces both the average values of explanatory variables and the observed market modal shares.

\HL{Building on these findings, we consider a constrained entropy maximization approach for the first-stage estimation problem. 
The primal problem first converts the available aggregate observations into an empirical distribution. It then searches for a distribution that is consistent with these observations and maximizes the entropy-based objective.
Intuitively, this selects a distribution that remains as dispersed as possible while satisfying the empirical information imposed through conditional-entropy and moment constraints. 
However, the behavioral interpretation of this objective may not be immediately intuitive from the primal formulation alone.
Its meaning becomes clear through the optimality conditions: the resulting primal probabilities satisfy the hierarchical extended logit model, while the associated dual variables recover the corresponding behavioral parameters.}

We start by formulating the primal problem. 
\YS{From the available data—\obstotal, \obsori, \obstrip, and \obstripm}—we first construct the empirical probability distributions. 
Specifically, the joint probability of departing from origin $i$, the probability of choosing origin-destination pair $(i,j)$, and the probability of choosing origin-destination-mode triplet $(i,j,m)$, denoted by \obsprobi, \obsjprobj, and \obsjprobm, respectively, are defined as
\begin{align*} \textstyle
    \obsprobi = {\obsori} /{\obstotal} , \quad
    \obsjprobj = {\obstrip}/{\obstotal}, \quad \text{and} \quad \obsjprobm = {\obstripm}/{\obstotal}.
\end{align*}

The empirical conditional probabilities are defined as 
\begin{align*} \textstyle
    \obscprobj = {\obsjprobj}/{\obsprobi} , \quad 
    \obscprobmnest = {\obsjprobm}/{\obsjprobnest} , \quad \text{and} \quad
    \obscprobnest = {\obsjprobnest}/{\obsjprobj}.
\end{align*}

To represent the vector for joint probabilities of destination choice $[\jprobj]_{\forall i \in \iset, \forall j \in \jset}$, we employ the compact notation $\jprobvecj$. 
Similarly, we use $\jprobvecm$, $\jprobvecgivennest$, $\jprobvecnest$, and $\jprobvecr$ to denote the vectors of joint probabilities for modes, a specific nest, nests, and routes, respectively. 
We define the vectors of conditional probabilities for destination, mode, and route choice $\cprobvecj = \jprobvecj|\obsprobivec$, $\cprobvecm = \jprobvecm|\jprobvecj$, and $\cprobvecr = \jprobvecr|\jprobvecm$, respectively.

In the following, we define conditional entropy functions for each of destination, mode, and route choice models \citep{fosgerau2022inverse, gomez2022optimal, bekhor2001stochastic}:

\begin{align}
    H_{\text{MNL}}(\cprobvecj) & = - \sum_{i \in \iset, j \in \jset} \jprobj \ln \left(\frac{\jprobj}{\obsprobi} \right), \\
    H_{\text{NL}}(\cprobvecm) & = H_{\text{NL}_1} \left( \jprobvecnest|\jprobvecj \right) + \sum_{\nest \in \nestset} \hat{\tau}_{\nestofm} H_{\text{NL}_{\nestofm}}\left( \jprobvecm|\jprobvecgivennest \right) \\ 
    & = - \sum_{i \in \iset, j \in \jset} \sum_{\nest \in \nestset} \jprobnest \ln \left( \frac{\jprobnest}{\jprobj} \right) - \sum_{\nest \in \nestset} \hat{\tau}_{\nestofm} \sum_{i \in \iset, j \in \jset}
    \sum_{m \in \nestofm} \jprobm \ln \left( \frac{\jprobm}{p_{ij\nestofm}} \right)  \nonumber, \\
    H_{\text{PSL}}(\cprobvecr) & = - \sum_{i \in \iset, j \in \jset} \sum_{m \in \mset, r \in \rset} \jprobr \ln \left( \frac{\jprobr}{\jprobm} \cdot \frac{1}{\pathsizepar} \right).
\end{align}

\citet{beckmann1956studies} showed that minimizing the sum of integrals of all road latency functions, subject to certain constraints, characterizes a user equilibrium according to Wardrop's principle. This principle requires that travel times on all used routes are equal \citep{beckmann1956studies, sheffi1985urban}. 
The Beckmann equation is defined as 

\begin{equation} \label{eqn:beckmann}
    \beckmann.
\end{equation}

Recall that \latency~ is a function representing the road latency, taking the flow count \arcflow~ as an input. 
By incorporating the Beckmann equation, we can consider the endogenous travel time, which depends on arc-level flows and congestion.

%\SSC{why is lambda in red?}
Our goal is to jointly maximize the entropy functions and minimize the Beckmann equation. The objective function is to maximize 
\begin{align} 
    &  H_{\text{MNL}}(\cprobvecj) + H_{\text{NL}}(\cprobvecm) + H_{\text{PSL}}(\cprobvecr)  - \frac{\scalepar}{\obstotal}\beckmann. \label{FirstStage:objective}
\end{align}

The inclusion of entropy terms alongside the Beckmann equation means our optimal solution characterizes a user equilibrium that is different from Wardrop's principle of user equilibrium. 
Instead, it adheres to the hierarchical extended logit model, which will be demonstrated later in \cref{theorem:convex}.
On the other hand, compared to models without the Beckmann equation, incorporating it enables characterizing the stochastic user equilibrium by accounting for congestion impacts. 
The congestion in each arc can be effectively captured by its traffic count $\arcflow$ and latency function $\latency$. 

\HL{
Recall that the parameter $\scalepar$ was introduced in \cref{eqn:utilroute} to represent travelers’ sensitivity to route costs.
Mathematically, multiplying the Beckmann term by $\scalepar$ ensures that the first-order conditions are consistent with the route utility specification in \cref{eqn:utilroute}. Equivalently, in the optimization objective, $\scalepar$ adjusts the relative weight of the Beckmann term compared with the entropy terms.
Under this interpretation, a larger value of $\scalepar$ places greater emphasis on route costs, producing more concentrated and less dispersed route choices. Hence, $\scalepar$ can also be interpreted as an inverse-dispersion parameter that governs how O-D flows are distributed across routes and, consequently, mapped onto link flows.
}

\HL{
We treat $\scalepar$ as a structural calibration parameter that controls the relative scaling of the Beckmann and entropy terms. 
In practice, we calibrate $\scalepar$ by minimizing the discrepancy between predicted and observed road link counts. 
Specifically, we use the normalized sum of squared errors
\begin{align}\label{metric}
\argmin_{\lambda >0} =
\frac{1}{\obstotal}
\sum_{a \in \aset_{\mathrm{obs}}}
\left(
f^{\mathrm{car}}_a (\lambda) -\hat{f}^{\mathrm{car}}_a
\right)^2,
\end{align}
where $\aset_{\mathrm{obs}}\subseteq\aset$ denotes the set of links with available traffic-count observations, such as those obtained from loop detectors \citep{PEMS}. 
Because $\lambda$ is scalar, this calibration can be performed through a simple one-dimensional search. 
Conditional on the selected value of $\scalepar$, the remaining behavioral parameters, including taste coefficients, alternative-specific constants, and destination- and mode-level scale parameters, are recovered simultaneously from the dual solution of a single convex program.
}

The entropy function and Beckmann equation are connected through the relationship between the joint route choice probabilities $\jprobr$ and arc flows $\arcflow$, given by
$$\arcflow = \sum_{i \in \iset, j \in \jset} \sum_{r \in \rsetfull} \obstotal \jprobr \ind, $$
where the indicator function $\ind$ specifies whether arc $a$ is included in route $r$, taking a binary value of 0 or 1 as a predefined known integer.

Let $\Pset$ be a set $\{ (\jprobj, \jprobnest, \jprobm, \jprobr, 
\arcflow)_{{i \in \iset, j \in \jset, 
\nest \in \nestset, m \in \mset, r \in \rset, a \in \aset }}\} $ such that 

\begin{align}
    & \sum_{j \in \jset} \jprobj = \obsprobi &,\forall i \in \iset & \quad \label{FirstStage:prob1}\\ 
    & \jprobj = \sum_{\nest \in \nestset} \jprobnest &,\forall i \in \iset ,j \in \jset & \quad \label{FirstStage:prob2}\\
    & \jprobnest = \sum_{m \in \nest} \jprobm &,\forall i \in \iset , j \in \jset,\nest \in 
    \nestset& \quad \label{FirstStage:prob3}\\
    & \jprobm = \sum_{r \in \rsetfull} \jprobr &,\forall i \in \iset , j \in \jset, m \in \mset & \quad \label{FirstStage:prob4} \\
    & \arcflow = \sum_{i \in \iset, j \in \jset} \sum_{r \in \rsetfull} \obstotal \jprobr \ind & , \forall a \in \aset, m \in \mset 
    \label{FirstStage:flow} \\
    & \jprobj, \jprobnest, \jprobm, \jprobr \geq 0 & , \forall i \in \iset , j \in \jset, m \in \mset, r \in \rset \label{FirstStage:positive}
\end{align}

Constraints in \cref{FirstStage:prob1}–\cref{FirstStage:prob4} link the joint probabilities at each step. \cref{FirstStage:flow} shows the relationship between link flow and route flow using the indicator $\ind$.

% \SSC{The first state problem is listed without any introduction.}

We now introduce the First Stage model to characterize an equilibrium of current travel patterns and to estimate parameters. 

\begin{center}
\fbox{\parbox{0.9\columnwidth}{ {\centering
\vspace{0.5ex}\textsc{First Stage Problem}\\[1ex]}
		
\begin{align*}
    \max_{\Pset} & \text{ \cref{FirstStage:objective}} 
\end{align*}

\begin{subequations}
\begin{align}
    \text{s.t.} \nonumber 
    \\ & H_{\text{MNL}}(\jprobvecj|\obsprobivec) \geq H_{\text{MNL}}(\obsjprobvecj|\obsprobivec) && \quad [-1 + \frac{1}{\scalej}]  \label{FirstStage:ratio1} \\
    & H_{\text{NL}}(\jprobvecm|\jprobvecj) \geq H_{\text{NL}}(\obsjprobvecm|\obsjprobvecj) && \quad [-1 + \frac{1}{\scalem}] \label{FirstStage:ratio2}\\
    & \sum_{i \in \iset, j \in \jset} \jprobj \tripattr = \sum_{i \in \iset, j \in \jset} \obsjprobj \tripattr & , \forall k \in \kset & \quad [\tripattrpar] \label{FirstStage:aggregate1} \\
    & \sum_{i \in \iset, j \in \jset} \sum_{ m \in \mset} \jprobm \tripmattr = \sum_{i \in \iset, j \in \jset} \sum_{ m \in \mset} \obsjprobm \tripmattr  & , \forall q \in \qset & \quad [\tripmattrpar] \label{FirstStage:aggregate2}
\end{align}
\end{subequations}
\vspace{0.5ex}
	}}
\end{center}

% \begin{subequations}
% \begin{align}
%     \text{s.t.} \nonumber 
%     \\ & H_{\text{MNL}}(\jprobvecj|\obsprobivec) \geq H_{\text{MNL}}(\obsjprobvecj|\obsprobivec) && \quad [-1 + \frac{1}{\scalej}]  \label{FirstStage:ratio1} \\
%     & H_{\text{NL}_1}(\jprobvecnest|\jprobvecj) \geq H_{\text{NL}_1}(\obsjprobvecnest|\obsjprobvecj) && \quad [-1 + \frac{1}{\scalem}] \label{FirstStage:ratio2}\\
%     & H_{\text{NL}_{\nestofm}}(\jprobvecm|\jprobvecgivennest) \geq H_{\text{NL}_{\nestofm}}(\obsjprobvecm|\obsjprobvecgivennest) &, \forall \nest \in \nestset & \quad [- 1 + \frac{\nestpar}{\scalem}] \label{FirstStage:ratio3}\\
%     & \sum_{i \in \iset, j \in \jset} \jprobj \tripattr = \sum_{i \in \iset, j \in \jset} \obsjprobj \tripattr & , \forall k \in \kset & \quad [\tripattrpar] \label{FirstStage:aggregate1} \\
%     & \sum_{i \in \iset, j \in \jset} \sum_{ m \in \mset} \jprobm \tripmattr = \sum_{i \in \iset, j \in \jset} \sum_{ m \in \mset} \obsjprobm \tripmattr  & , \forall q \in \qset & \quad [\tripmattrpar] \label{FirstStage:aggregate2}
% \end{align}
% \end{subequations}
% \vspace{0.5ex}
% 	}}
% \end{center}

The objective is to maximize \cref{FirstStage:objective} over set $\Pset$.
Constraints in \cref{FirstStage:ratio1} - \cref{FirstStage:ratio2} ensure that the conditional entropy matches the observed conditional entropy. 
Constraints in \cref{FirstStage:aggregate1} and \cref{FirstStage:aggregate2} ensure that the fixed utility and market shares align with observed values, ensuring the moment is matched.
\YS{Recall that \tripmattr~ is the $q$-th travel attribute defined in \cref{eqn:utilmode} and \tripattr~ is the $k$-th travel attributes defined in \cref{eqn:utildest}. 
\qset~ can be viewed as mode-specific attributes for a given origin-destination pair, such as travel time, travel fare, and constants capturing mode-specific preferences (i.e., alternative specific constants).  
\kset~ represents destination-specific attractiveness attributes that are independent of mode, including the number of jobs and point-of-interest (POI) density.}

\HL{Rather than solving the likelihood maximization problem directly, we embed its first-order moment conditions into the entropy-maximization problem as the constraints in \cref{FirstStage:aggregate1} and \cref{FirstStage:aggregate2}. 
These constraints equate the model-implied aggregate attributes with their observed values, while their dual variables, $\tripmattrpar$ and $\tripattrpar$, recover the corresponding taste coefficients of the mode- and destination-choice utility functions. 
Moment-matching constraints have been widely used in maximum-entropy estimation to ensure that model-implied aggregate attributes reproduce their observed values \citet{donoso2011maximum}. 
}

\FR{
Similarly, the dual variables corresponding to \cref{FirstStage:ratio1} - \cref{FirstStage:ratio2} are listed in square brackets as well, $-1 + \frac{1}{\scalej}$ and $-1 + \frac{1}{\scalem}$.
Note that rather than introducing additional auxiliary dual variables, we opted to use the already defined variables (i.e., the scaling parameters $\scalej$ and $\scalem$) as their form should be clear to the reader when deriving the KKT conditions.
}
We make one technical note regarding the boundary conditions for the dual variables $\scalej$ and $\scalem$. 
From the dual feasibility conditions, we have $-1+\frac{1}{\scalej} \geq 0$ and $-1+\frac{1}{\scalem} \geq 0$, which implies that $\scalej \leq 1$ and $\scalem \leq 1$, respectively. 
This aligns with the condition that $\scalem$ and $\scalej$ should be less than 1, as we previously set $\scaler = 1$ and thus omitted it from the formulation.
Recall that the scaling parameters correspond to the variance of the random utility components in the logit model. 
The correlation at the mode and destination levels is appropriately stronger (and thus has a smaller variance) than at the route level, \HL{i.e., $0 < \scalej \leq \scalem \leq \scaler = 1$} \citep{donoso2011maximum, yao2014general}. 

\HL{This primal--dual construction is also the key to preserving convexity during parameter estimation.
If the behavioral parameters were introduced directly as primal decision variables, they would interact nonlinearly with the endogenous probability variables.
For example, taste coefficients would be coupled with endogenous choice probabilities through the systematic-utility terms, while destination- and mode-level scale parameters would additionally interact with the corresponding taste coefficients. By instead recovering these parameters from the dual variables of the empirical moment and conditional-entropy constraints, the primal optimization remains expressed only in terms of equilibrium probabilities and flows, thereby avoiding these parameter--equilibrium couplings.}

\HL{
Our entropy maximization approach with hierarchical probabilities builds on \citet{donoso2011maximum} by incorporating congestion-dependent route costs and stochastic user equilibrium conditions. Whereas \citet{donoso2011maximum} matched the observed conditional entropies through equality constraints, we replace these equalities with lower-bound inequality constraints, as shown in \cref{FirstStage:ratio1}, \cref{FirstStage:ratio2}. This relaxation is important computationally: it preserves convexity and enables an exponential-cone reformulation that can be solved using off-the-shelf conic solvers such as MOSEK, ECOS, and SCS.
The relaxation also retains a useful dual interpretation. The scale parameters associated with destination and mode choice, \(\theta_{\text{dest}}\) and \(\theta_{\text{mode}}\), arise as dual variables of the conditional-entropy constraints, with dual feasibility enforcing their natural upper bound of one. When a conditional-entropy constraint is active, its dual variable yields a nontrivial scale parameter strictly below one. When the constraint is slack, complementary slackness implies that the corresponding scale parameter attains the boundary value of one. Thus, the relaxed formulation does not necessarily yield an interior scale estimate for every dataset; rather, a nontrivial scale estimate arises when the observed entropy bound is informative at the optimum. Together with the moment constraints, this dual structure allows travel-pattern characterization and behavioral-parameter estimation to be carried out within a single convex optimization problem.
}

To introduce our main theorem, we begin by stating the following assumption.

\begin{assumption}\label{assume:latency}
    The road latency function $\latency$ is a strictly increasing convex power function.
\end{assumption}

Under Assumption \ref{assume:latency}, it is well-known that the Beckmann equation in \cref{eqn:beckmann} is convex \citep[see][\S 3.1.]{beckmann1956studies}. 
The Bureau of Public Roads (BPR) function satisfies Assumption \ref{assume:latency}, and is the standard way of quantifying travel time on congested roads. It can be expressed as 
\begin{align}
    \latency(\arcflow) = T^{m,0}_a \left[1+ a^{m} \left(\frac{\arcflow}{c^m_a}\right)^{b^{m}}\right],
\end{align}
where $\latency(\arcflow)$ represents the travel time on a segment (e.g., arc or link in the road network) as a function of flow count on that segment, $T^{m,0}_a$ denotes the free flow travel time when there is no congestion, $c^m_a$ denotes the capacity of the segment, $a^m$ and $b^m$ are parameters calibrated for a particular network. Typically, for the road network (i.e., $m$ is automobiles), the parameters are commonly set to $a^m = 0.15$ and $b^m = 4$. For public transit modes unaffected by traffic (e.g., subways, trams, trains), $b^m = 0$, indicating a constant travel time regardless of the flow.

Under Assumption \ref{assume:latency}, we establish a theorem demonstrating that our formulation possesses the desirable property of convexity. Convexity plays a pivotal role in optimization due to the existence of globally optimal solutions and the availability of efficient algorithms for solving convex problems.

\begin{theorem}\label{theorem:convex}
The \textsc{First Stage Problem} is a convex program under Assumption \ref{assume:latency}. In the \textsc{First Stage Problem}, the optimal primal solutions \jprobj, \jprobnest, \jprobm, \jprobr~ and the optimal dual solutions \scalej, \scalem, \tripattrpar, \tripmattrpar~ satisfy the hierarchical extended logit model defined in \cref{eqn:c_prob_route}, \cref{eqn:c_prob_mode}, \cref{eqn:c_prob_destin} with the fixed utilities in \cref{eqn:utilroute}, \cref{eqn:utilmode}, and \cref{eqn:utildest}.
\end{theorem}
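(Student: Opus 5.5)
The proof has two parts: convexity of the \textsc{First Stage Problem}, and identification of its KKT conditions with the hierarchical extended logit model.

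\emph{Convexity.} Each conditional entropy term is a sum of negative relative-entropy (perspective of $x\ln x$) functions of the form $-p\ln(p/q)$. Such a function is jointly concave in $(p,q)$ because $x\ln(x/y)$ is jointly convex. This covers $H_{\text{MNL}}$ (with $q=\obsprobi$ fixed), $H_{\text{NL}}$ (with $\nestpar>0$), and $H_{\text{PSL}}$. In $H_{\text{PSL}}$ the extra factor $1/\pathsizepar$ only adds the linear term $\jprobr\ln\pathsizepar$. Under Assumption~\ref{assume:latency} each integral $\int_0^{f}\latency$ is convex in $f$, and $\arcflow$ is affine in $\jprobr$ by \cref{FirstStage:flow}, so the objective \cref{FirstStage:objective} is concave. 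Constraints \cref{FirstStage:prob1}--\cref{FirstStage:prob4}, \cref{FirstStage:flow}, \cref{FirstStage:aggregate1} and \cref{FirstStage:aggregate2} are affine. Constraints \cref{FirstStage:ratio1} and \cref{FirstStage:ratio2} are superlevel sets of concave functions, hence convex. The problem therefore maximizes a concave function over a convex set.

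\emph{Optimality conditions.} The empirical point is feasible, and I would assume a Slater point exists (perturb the empirical distribution toward uniform so the entropy inequalities become strict). Then KKT conditions are necessary and sufficient. I form the Lagrangian with multipliers $\duali,\dualj,\dualM,\dualm$ for \cref{FirstStage:prob1}--\cref{FirstStage:prob4}, $\betavec$ for the moment constraints, and $(-1+1/\scalej)$, $(-1+1/\scalem)$ for the entropy constraints. I substitute $\arcflow$ via \cref{FirstStage:flow}. Because entropy terms have infinite slope at zero, all probabilities are positive at the optimum, so the sign constraints \cref{FirstStage:positive} are inactive.

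The stationarity conditions are then derived bottom-up:
\begin{itemize}
\item \textbf{Route level.} Differentiating in $\jprobr$ gives $-\ln(\jprobr/(\jprobm\pathsizepar))-1-\scalepar\rcost+\dualm=0$, using $\partial/\partial\jprobr$ of the Beckmann term $=\obstotal\sum_a\latency\ind$. Hence $\jprobr\propto\jprobm\exp(\cutilr)$, and normalizing with \cref{FirstStage:prob4} gives \cref{eqn:c_prob_route} and $\dualm-1=S_{ijm}$ up to constants.
\item \textbf{Mode level.} Only the NL term and the route-entropy normalizer involve $\jprobm$. Its derivative carries the coefficient $\tfrac{1}{\scalem}\nestpar$ from $1+(-1+1/\scalem)$ multiplying $H_{\text{NL}}$, plus $\sum_q\tripmattrpar\tripmattr$ and the back-propagated $S_{ijm}$. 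Dividing by $\nestpar/\scalem$ yields $\jprobm/p_{ij\nest}\propto\exp(\scalem(\cutilm+\satistripm)/\nestpar)$, which is the within-nest part of \cref{eqn:c_prob_mode}.
\item \textbf{Nest level.} The $\jprobnest$ condition gives $p_{\nest|ij}\propto\exp(IV_{\nest})$.
\item \textbf{Destination level.} The $\jprobj$ condition, with coefficient $1/\scalej$ on $H_{\text{MNL}}$ and inclusion of $\sum_k\tripattrpar\tripattr$, gives \cref{eqn:c_prob_destin}.
\end{itemize}

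The main obstacle is the bookkeeping of scale factors. The derivative of $-p\ln(p/q)$ with respect to the parent $q$ contributes $+p/q$, and this must be matched to the satisfaction functions $S_{ijm}$ and $S_{ij}$ with the correct multiplicative scalings $\scalem/\nestpar$ and $\scalej$. The risk is that the inclusive value in \cref{eqn:c_prob_mode} emerges scaled by a factor of $\scalem$ relative to the paper's definition of $IV_{\nest}$. I would handle this by eliminating $\dualm,\dualM,\dualj$ recursively as log-sum-exps. At each level I would check that the multiplier of the equality constraint equals the appropriately scaled log-normalizer. Any residual constant factor would be absorbed into the definition of the parent utility; in particular $\tripattrpar,\tripmattrpar$ may appear pre-multiplied by the scale. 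If necessary, the final identification of the $\betavec$ would be stated as the corresponding rescaled multipliers.
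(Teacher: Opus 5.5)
Your overall route is the paper's: concavity and convexity, KKT under a constraint qualification, then bottom-up elimination of the multipliers. The paper rewrites the Lagrangian in conditional probabilities and differentiates in those; your differentiation in joint probabilities is a cleaner version of the same computation. Your route-level and within-nest conditions are correct, with the dual $\tripmattrpar$ and $\scalem$ entering unrescaled.

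The gap is that the nest and destination levels, where all the scale bookkeeping lives, are asserted rather than derived. The fallback you propose for them does not work. Carry your recursion through, writing $L_{\nest}:=\ln\sum_{m\in\nest}\exp\bigl(\scalem(\cutilm+\satistripm)/\nestpar\bigr)$ and $\widetilde{S}_{ij}:=\ln\sum_{\nest\in\nestset}\exp(\nestpar L_{\nest})$. Here $L_{\nest}$ is exactly the quantity the paper calls $IV_{\nest}$ after \cref{eqn:c_prob_mode}. Sign the multipliers as in the paper's Lagrangian.
\begin{itemize}
\item The within-nest step gives $\dualM=\tfrac{\nestpar}{\scalem}(L_{\nest}-1)$.
\item At the $\jprobnest$ step, the parent derivative $+\nestpar/\scalem$ of $\tfrac{\nestpar}{\scalem}H_{\text{NL}_{\nest}}$ cancels the constant in $\dualM$. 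This leaves $\tfrac{1}{\scalem}(\ln\cprobnest+1)=\tfrac{\nestpar}{\scalem}L_{\nest}-\dualj$, and normalization gives $\dualj=(\widetilde{S}_{ij}-1)/\scalem$.
\item At the $\jprobj$ step, the parent derivative $+1/\scalem$ of $\tfrac{1}{\scalem}H_{\text{NL}_1}$ cancels the constant in $\dualj$.
\end{itemize}
The result is
\[
\cprobnest=\frac{\exp(\nestpar L_{\nest})}{\sum_{\nest'\in\nestset}\exp(\hat{\tau}_{\nest'}L_{\nest'})},\qquad
\cprobj=\frac{\exp\bigl(\scalej(\cutilj+\widetilde{S}_{ij}/\scalem)\bigr)}{\sum_{j'\in\jset}\exp\bigl(\scalej(V_{j'|i}+\widetilde{S}_{ij'}/\scalem)\bigr)}.
\]
So the discrepancy with the displayed definitions is a factor $\nestpar$ on the inclusive value (not $\scalem$, as you guessed) and a factor $1/\scalem$ on $\satistrip$.

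Neither factor can be absorbed the way you propose.
\begin{itemize}
\item The factor $\nestpar$ is nest-specific and multiplies a log-sum-exp, so no rescaling of $\tripmattrpar$ reproduces it. Any such rescaling would also break the within-nest level, which already matches.
\item The factor $1/\scalem$ can be moved only by replacing $(\scalej,\tripattrpar)$ with $(\scalej/\scalem,\ \scalem\tripattrpar)$. That gives up precisely the identification of the dual variables that the theorem asserts.
\end{itemize}
What your argument actually proves is the statement with $IV_{\nest}:=\nestpar L_{\nest}$ and with $\satistrip$ replaced by the logsum $\tfrac{1}{\scalem}\ln\sum_{\nest}e^{IV_{\nest}}$. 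The first is the form the paper itself uses in its ``Basics of nested logit model'' paragraph. With the definitions as displayed after \cref{eqn:c_prob_mode} and \cref{eqn:c_prob_destin}, the statement holds verbatim only when every $\nestpar=1$ and $\scalem=1$. You should state the result that way rather than leave ``residual constants'' open. The paper's proof gives no help here: it dispatches both levels with ``Similarly'' and ``Lastly'' and writes the unscaled formulas.

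Two smaller points. With your sign on $\dualm$, the route normalization gives $1-\dualm=\satistripm$, not $\dualm-1=\satistripm$. Also, perturbing the empirical distribution toward uniform violates the moment equalities \cref{FirstStage:aggregate1}--\cref{FirstStage:aggregate2}. The Slater point therefore has to be assumed, as the paper does by invoking standard constraint qualifications, rather than constructed that way.
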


\cref{theorem:convex} demonstrates that the formulation is a convex program. 
The complete proof is provided in Appendix \ref{appendix:hier_extend}.
Later, we will formally prove the existence of an exponential cone reformulation, which is a stronger notion than convexity. 
The convexity allows us to utilize the Karush-Kuhn–Tucker (KKT) conditions, and the theorem highlights that the solution to the \textsc{First Stage Problem} is equivalent to the solution of the hierarchical extended logit model previously defined in \cref{section:modelingassumption}. 
\SR{This theorem clarifies the interpretation of our formulation. While the objective function, as in the Beckmann formulation, may not have a direct physical interpretation, the optimality conditions reveal that its solution corresponds to the hierarchical extended logit model.}

The dual variables of each constraint (listed in square brackets) yield the parameters to be estimated in the hierarchical extended logit model. 
\HL{This parameter-recovery mechanism is closely related to the classical duality between entropy maximization and maximum-likelihood estimation, but the resulting dual problem is not the standard maximum-likelihood estimation problem. In particular, the Beckmann congestion term, the conditional-entropy constraints in \cref{FirstStage:ratio1}–\cref{FirstStage:ratio2}, the additional moment constraints in \cref{FirstStage:aggregate1}–\cref{FirstStage:aggregate2}, and the network-feasibility constraints in \cref{FirstStage:prob1}–\cref{FirstStage:positive} modify the optimization problem beyond conventional likelihood maximization, as shown in Appendix~\ref{appendix:dual}.
This connection aligns with the well-established duality between entropy maximization (i.e., a special case of the proposed formulation) and maximum likelihood estimation \cite[\S~3.6]{wainwright2008graphical}. 
Hence, our work can be viewed as an extension of the framework proposed by \citet{donoso2010microeconomic}, who focused on entropy maximization corresponding to the MNL, to accommodate more flexible logit specifications at each level of the hierarchy and to incorporate additional structural constraints. 
}

\begin{theorem}\label{theorem:expcone}
\textsc{The First Stage Problem} has an exponential conic reformulation under Assumption \ref{assume:latency}. 
\end{theorem}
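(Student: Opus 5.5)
The plan is to reformulate each nonlinear piece of the \textsc{First Stage Problem} separately using the exponential cone
\[
\expcone=\operatorname{cl}\{(x_1,x_2,x_3): x_1\ge x_2 e^{x_3/x_2},\ x_2>0\}.
\]
Everything else — constraints \cref{FirstStage:prob1}--\cref{FirstStage:positive} and the moment constraints \cref{FirstStage:aggregate1}--\cref{FirstStage:aggregate2} — is already linear and stays unchanged. The basic building block is the relative-entropy term. For $x,y\ge 0$, the hypograph $t\le -x\ln(x/y)$ equals $\{(y,x,t): (y,x,t)\in\expcone\}$, since $y\ge x e^{t/x}$ is equivalent to $t\le -x\ln(x/y)$.

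\textbf{Step 1: the entropy terms.} Each entropy function is a weighted sum of such terms. For each $(i,j)$ introduce an epigraph variable $t_{ij}$ with $(\obsprobi,\jprobj,t_{ij})\in\expcone$. For each $(i,j,\nest)$ introduce $u_{ij\nest}$ with $(\jprobj,\jprobnest,u_{ij\nest})\in\expcone$. For each $(i,j,m)$ with $m\in\nest$ introduce $v_{ijm}$ with $(\jprobnest,\jprobm,v_{ijm})\in\expcone$. For the path-size term, introduce $w_{ijmr}$ with $(\pathsizepar\,\jprobm,\jprobr,w_{ijmr})\in\expcone$; this works because $-\jprobr\ln(\jprobr/(\jprobm\pathsizepar))$ is again a relative entropy, with the constant $\pathsizepar>0$ absorbed linearly into the second argument. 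We then get
\[
H_{\text{MNL}}=\sum t_{ij},\qquad H_{\text{NL}}=\sum u_{ij\nest}+\sum_\nest \nestpar\sum v_{ijm},\qquad H_{\text{PSL}}=\sum w_{ijmr}.
\]
These are linear in the auxiliary variables because $\nestpar>0$ is a given constant.

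\textbf{Step 2: the two entropy-lower-bound constraints.} Constraints \cref{FirstStage:ratio1}--\cref{FirstStage:ratio2} have the form ``concave function $\ge$ constant'', since the right-hand sides are evaluated at the data $\obsjprobvecj,\obsjprobvecm$. They become
\[
\sum t'_{ij}\ge H_{\text{MNL}}(\obsjprobvecj|\obsprobivec),\qquad \sum u'+\sum_\nest\nestpar \sum v'\ge H_{\text{NL}}(\obsjprobvecm|\obsjprobvecj),
\]
with the same cone constraints on the auxiliary variables. The objective copies $t,u,v$ can be reused here. This is valid because, for a maximization objective and for $\ge$ constraints, relaxing each term to an inequality ``$\le$ entropy'' is exact. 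At an optimum the relaxation is also tight in the objective, and any feasible hypograph point certifies the true constraint.

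\textbf{Step 3: the Beckmann term, the main obstacle.} By Assumption~\ref{assume:latency}, write $\latency(f)=T^{m,0}_a(1+\alpha f^{b})$ with $b\ge 0$. The integral is then
\[
\int_0^{f}\latency = T^{m,0}_a f+\frac{T^{m,0}_a\alpha}{b+1}f^{b+1},
\]
a linear term plus a convex power $f^{b+1}$ with $b+1\ge1$. To put this in exponential-cone form, introduce $s_a\ge f^{b+1}$ (the variable $\auxmain$) and use the standard representation of the power epigraph for $f\ge0$:
\[
s_a\ge f^{p}\iff \exists z:\ (s_a,1,z)\in\expcone,\ (f,1,z/p)\in\expcone\text{-style log constraint},
\]
that is, $z\le \ln s_a$ hmm — more precisely, $f\le e^{y}$ and $s_a\ge e^{py}$. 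Both pieces are exponential-cone constraints: $(f$ bounded by$\ e^{y})$ is $(e^y\text{ var},1,y)$ combined with linear inequalities. Concretely, we require
\[
(\sigma,1,y)\in\expcone,\quad f\le\sigma\quad\text{is wrong direction},
\]
so instead I would use $y\le \ln f$ as the hypograph of $\log$, i.e.\ $(f,1,y)\in\expcone$, together with $(s_a,1,p\,y)\in\expcone$. Then $s_a\ge e^{py}$, and maximizing pushes $y$ up to $\ln f$, giving $s_a\ge f^p$. The boundary case $f=0$ is handled by the closure. When $b=0$ (transit), the term is linear and needs no cone.

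\textbf{Step 4: assemble.} Finally, replace $-\frac{\scalepar}{\obstotal}\times$Beckmann by $-\frac{\scalepar}{\obstotal}\sum(T^{m,0}_a f+c_a s_a)$ with $c_a\ge0$, and argue equivalence of optimal values and solutions by projection. The delicate points I expect to check carefully are the correct orientation of each cone for the power function and the exactness at $f=0$. If the orientation above fails, the fallback is to use the power cone and note that for integer $b$ a chain of exponential-cone constraints also suffices via the log representation.
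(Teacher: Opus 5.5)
Your Steps 1, 2 and 4 are sound and follow the paper's route: relative-entropy hypographs with auxiliary variables $t,u,v,w$. Your handling of $\pathsizepar$ inside the cone, of the $\nestpar$ weights, and of the entropy lower bounds as linear inequalities on the auxiliary variables is in fact more careful than the paper's boxed formulation.

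The genuine gap is Step 3. In your convention, $(f_a,1,y)\in\expcone$ says $y\le\ln f_a$, and $(s_a,1,py)\in\expcone$ says $py\le\ln s_a$. Both are upper bounds on $y$, so $y\to-\infty$ is always feasible and the pair imposes nothing beyond $f_a,s_a>0$. Nothing pushes $y$ up to $\ln f_a$: $y$ does not appear in the objective, and $s_a$ enters it with a negative coefficient. The optimizer therefore drives $s_a\to 0$, and the nonlinear congestion part of the Beckmann term is silently deleted.

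Reversing the first constraint to $y\ge\ln f_a$, which is what your argument actually needs, means $f_a\le e^{y}$. That is the hypograph of a convex function, which is not a convex set. So no logarithmic chain of this kind can work, including the ``log representation'' you mention for integer $b$. Falling back on the power cone $(s_a,1,f_a)\in\mathcal{P}_3^{1/p,1-1/p}$ only yields a mixed exponential/power-cone program, as in the paper's boxed formulation. To obtain an exponential-cone reformulation you still need the fact that this cone is itself exponential-cone representable, which the paper supplies through Lemma~\ref{lemma:power} and the geometric-mean cone.

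The missing idea is to write the power epigraph in relative-entropy (perspective) form instead of through logarithms. For $f\ge 0$, $s\ge 0$ and $p\ge 1$,
\[
s\ge f^{p}\iff \tfrac{1}{p}\, f\ln(f/s)+\bigl(1-\tfrac{1}{p}\bigr) f\ln f\le 0,
\]
and the function on the right is jointly convex in $(f,s)$. In your ordering this becomes
\[
(s_a,f_a,-r_1)\in\expcone,\qquad (1,f_a,-r_2)\in\expcone,\qquad \tfrac{1}{p} r_1+\bigl(1-\tfrac{1}{p}\bigr)r_2\le 0,
\]
since the two cone constraints read $r_1\ge f_a\ln(f_a/s_a)$ and $r_2\ge f_a\ln f_a$. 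At $f_a=0$ the closure of the cone forces $r_1,r_2\ge 0$, so the linear constraint leaves only $s_a\ge 0$. The representation is therefore exact for every real $p=b+1\ge 1$, not only rational ones. It uses two exponential cones per congested arc, with $p=5$ for the standard BPR function. With this replacement for Step 3, your Step 4 assembly goes through.
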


We can further strengthen our formulation by reformulating the problem as an exponential cone program.
\cref{theorem:expcone} suggests that the First Stage Problem has an exponential cone representation.
To achieve this reformulation, we leverage the fact that the conditional entropy terms can be expressed as exponential cones, and the integral of the latency function remains a power function.
The full proof can be found in \cref{appendix:expo_cone}. 

\YS{
The resulting exponential cone formulation can be natively supported by state-of-the-art conic solvers such as MOSEK. 
This allows the solver to internally transform and solve the problem using exact convex representations, without requiring approximations. 
MOSEK employs an interior-point method \citep{mosek_conic}, which is theoretically guaranteed to converge to a global optimum under Slater’s condition \citep{boyd2004convex}. 
MOSEK computes both primal and dual solutions and reports the associated duality gap, which serves as a numerical certificate of optimality. 
When the duality gap falls within a specified tolerance, the solution is optimal with a certain level of precision.
}

% A more detailed discussion of the solver settings will be provided in the Experimental Results section. 

% \SSC{State the purpose before describing it.}

% Lastly, we suggest an important variant of the First Stage Problem, aiming to strike a balance between our modeling assumptions and observed discrepancies from other data sources. 
% The useful variant is created by adding a term that quantifies the error between the output link flow values $f^{\text{car}}_a$ and the observed values $\widehat{f^{\text{car}}_a}$, to the objective function.
% The observed values $\widehat{f^{\text{car}}_a}$ can be obtained from sources such as loop detector \citep{PEMS}. 

% \TR{This variant remains a convex optimization problem.
% Specifically, the additional term is a quadratic (second-order norm) penalty composed with an affine mapping from the decision variables to link flows, which is a standard convex function.
% Since the original First Stage Problem is convex and the feasible set is unchanged, adding this convex penalty preserves overall convexity of the formulation.}
% By incorporating this additional term, the optimal solution of the program is no longer the travel equilibrium state.
% Since the real world does not exhibit a perfect user equilibrium, we need to balance between the user equilibrium and the observed data. 
% The parameter $\sigma$ governs this trade-off.

\subsection{Second stage – prediction}
The second stage involves predicting future travel distributions using the parameters estimated in the first stage.
Consider the scenario where we aim to estimate new travel patterns 10 years after completing the current large-scale construction of new roads or transit lines.
The future number of travelers from each origin zone \obsori\ can be estimated using regression models that account for demographic changes \citep{mcnally2007four}.
This is usually estimated by a regression model that considers time series patterns of people living in certain origin locations.
Subsequently, the total number of travelers \obstotal~ can be calculated. 
The travel attributes in the changed transportation network, \tripattr~ and \tripmattr, such as travel time and cost, are updated since these values will change after constructing new roads or lines. We use previously estimated parameters for travel attributes $\opttripattrpar$ and $\opttripmattrpar$, scaling factors ~$\optscalej$ and $\optscalem$, and nest parameter ~$\optnestpar$. With these new inputs and previously estimated parameters, we can reconstruct an optimization problem from the Lagrangian function to predict future travel distributions. To express the deterministic utilities as a function of travel attributes and their parameters, as shown in \cref{eqn:utilmode} and \cref{eqn:utildest}, we denote them as $V_{ij} (\{\opttripattrpar, \tripattr\}_{k \in \kset})$
and $V_{ijm} (\{ \opttripmattrpar, \tripmattr\}_{q \in \qset})$.

\begin{center}
\fbox{\parbox{0.825\columnwidth}{ {\centering
\vspace{0.5ex}\textsc{Second Stage Problem}\\[1ex]}

\begin{align*}
    \max_{\Pset} & \sum_{i \in \iset, j \in \jset} \jprobj V_{ij} (\opttripattrpar, \tripattr) + \sum_{i \in \iset, j \in \jset} \sum_{m \in \mset} \jprobm V_{ijm} ( \opttripmattrpar, \tripmattr) \\
    &  + \frac{1}{\optscalej} H_{\text{MNL}}(\cprobvecj) \\
    & + \frac{1}{\optscalem} H_{\text{NL}}(\cprobvecm)    \\
    & + H_{\text{PSL}}(\cprobvecr)    \\ 
    & - \frac{\scalepar}{\obstotal} \beckmann
\end{align*}
	}}
\end{center}

The objective function consists of the summation of deterministic utilities, entropy functions, and the integral of the Beckmann equation.

\begin{corollary}\label{corollary:SecondStage}
    In the Second Stage Problem, the optimal solutions \optjprobj, \optjprobnest, \optjprobm, \optjprobr~ satisfy the hierarchical extended logit model defined in \cref{eqn:c_prob_route} - \cref{eqn:c_prob_destin}. 
\end{corollary}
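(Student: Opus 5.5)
The plan is to treat the Second Stage Problem exactly like the Lagrangian of the \textsc{First Stage Problem} with the dual variables frozen at their estimated values, and then repeat the KKT argument used for \cref{theorem:convex}. First, the objective is concave over $\Pset$. The utility terms are linear in $\jprobj,\jprobm$. Each of $H_{\text{MNL}},H_{\text{NL}},H_{\text{PSL}}$ is a sum of negative relative-entropy (perspective) terms $-x\ln(x/y)$, which are jointly concave. The weights $1/\optscalej$, $1/\optscalem$, $\optnestpar$ are nonnegative. The Beckmann term is convex under Assumption~\ref{assume:latency}, and $\arcflow$ enters through the affine constraint \cref{FirstStage:flow}. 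Since all constraints in $\Pset$ are affine and the entropy terms force strictly positive optima (their derivatives blow up at zero), Slater's condition holds. The KKT conditions are therefore necessary and sufficient, and the multipliers on \cref{FirstStage:positive} vanish.

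Next I would form the Lagrangian with multipliers $\duali,\dualj,\dualM,\dualm$ for \cref{FirstStage:prob1}--\cref{FirstStage:prob4}, and substitute \cref{FirstStage:flow} into the Beckmann term. I would then take stationarity conditions from the bottom of the hierarchy upward.

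\emph{Route level.} Differentiating with respect to $\jprobr$ gives
\[
-\ln(\jprobr/\jprobm)+\ln\pathsizepar-1-\scalepar\rcost+\dualm=0,
\]
using $\partial/\partial\jprobr$ of the Beckmann term $=\obstotal\sum_a\latency(\arcflow)\ind$. Hence $\jprobr/\jprobm\propto\exp(\cutilr)$ with $\cutilr$ as in \cref{eqn:utilroute}. Normalizing by \cref{FirstStage:prob4} yields \cref{eqn:c_prob_route}. It also identifies $\dualm$, up to constants, as $-S_{ijm}$ plus terms from the derivatives of $\ln(\cdot/\jprobm)$ with respect to the parent variable.

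\emph{Mode level.} Differentiating with respect to $\jprobm$ collects three pieces: $V_{ijm}$, the $(\optnestpar/\optscalem)$-weighted entropy derivative, and the PSL term $\sum_r\jprobr/\jprobm=1$. Combining these with $\dualm$ gives $\jprobm/\jprobnest\propto\exp(\optscalem(\cutilm+S_{ijm})/\optnestpar)$. Differentiating with respect to $\jprobnest$ then gives $\jprobnest/\jprobj\propto\exp(IV_\nest)$, i.e.\ \cref{eqn:c_prob_mode}.

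\emph{Destination level.} In the same way, $\jprobj/\obsprobi\propto\exp(\optscalej(\cutilj+\satistrip))$, which is \cref{eqn:c_prob_destin}.

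The main obstacle is the bookkeeping at the mode and destination levels. Each joint variable appears both as a numerator in its own entropy term and as a denominator in its children's entropy terms, with different weights ($1$, $1/\optscalem$, $\optnestpar/\optscalem$, $1/\optscalej$). The constant ``$-1$'' terms and the child contributions must cancel, and the children's multipliers must telescope into exactly the log-sum satisfaction functions $S_{ijm}$, $IV_\nest$, $\satistrip$ with the right scale factors. To handle this, I would first solve the lower-level stationarity conditions explicitly for $\dualm$ and $\dualM$ as log-sums. I would then substitute these into the parent conditions, checking that the child-denominator derivatives sum to the child weight times one, so that they cancel the $-1$ constants. This is the same computation as in the proof of \cref{theorem:convex}, with the first-stage multipliers $-1+1/\scalej$, $-1+1/\scalem$, $\tripattrpar$, $\tripmattrpar$ replaced by fixed values. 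So the corollary follows directly once the correspondence between the two Lagrangians is made explicit: the Second Stage objective equals the First Stage Lagrangian at those multipliers, up to constants.
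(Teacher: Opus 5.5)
\paragraph{Where the proposal stands.} You follow the paper's route. The paper omits the proof and defers to the KKT computation behind \cref{theorem:convex}, and your identification of the Second Stage objective with the First Stage Lagrangian at frozen multipliers is what licenses that deferral. Your concavity, interiority and route-level steps are fine.

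\paragraph{The gap.} It lies in the step you postpone as bookkeeping: carried out, it does not reproduce $IV_{\nest}$ and $\satistrip$ as the paper defines them. Take $IV_{\nest}=\ln\sum_{m\in\nest}\exp(\optscalem[\cutilm+\satistripm]/\optnestpar)$, the definition after \cref{eqn:c_prob_mode}. Let the multipliers enter as in the paper's Lagrangian, e.g.\ $+\dualM(\jprobnest-\sum_{m\in\nest}\jprobm)$. The route and within-nest conditions then give $\dualm=\satistripm-1$ and $\dualM=\frac{\optnestpar}{\optscalem}(IV_{\nest}-1)$. The child-denominator term $\frac{\optnestpar}{\optscalem}\sum_{m\in\nest}\jprobm/\jprobnest=\frac{\optnestpar}{\optscalem}$ does cancel the $-1$, as you anticipated, but what remains is
\[
\frac{1}{\optscalem}\Bigl(\ln\frac{\jprobnest}{\jprobj}+1\Bigr)=\frac{\optnestpar}{\optscalem}\,IV_{\nest}-\dualj .
\]
Hence $\optcprobnest\propto\exp(\optnestpar IV_{\nest})$, not $\exp(IV_{\nest})$. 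Summing over nests gives $\dualj=\frac{1}{\optscalem}\bigl(\ln\sum_{\nest}\exp(\optnestpar IV_{\nest})-1\bigr)$, and the destination condition becomes
\[
\frac{1}{\optscalej}\Bigl(\ln\frac{\jprobj}{\obsprobi}+1\Bigr)=\cutilj+\frac{1}{\optscalem}\ln\sum_{\nest}\exp\bigl(\optnestpar IV_{\nest}\bigr)+\duali .
\]
By contrast, \cref{eqn:c_prob_destin} with $\satistrip=\ln\sum_{\nest}\exp(IV_{\nest})$ contains neither $\optnestpar$ nor $1/\optscalem$. The mechanism is general: each lower-level subproblem has value equal to its entropy weight times its log-sum, and that weight is carried into the parent's utility.

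\paragraph{A check that the discrepancy is real.} A two-nest example shows it is not an artifact of sign conventions. Take one OD pair, $\optscalem=1$, and single uncongested routes, so each $\satistripm$ is a constant. Let the car nest have $\cutilm+\satistripm=a$, and let the transit nest $\{\text{bus},\text{subway}\}$ have dissimilarity $\hat\tau$ and $\cutilm+\satistripm=b$ for both modes. Write $y_A,y_B$ for the conditional nest probabilities; the transit split is symmetric at the optimum. Per unit of $\jprobj$, the Second Stage objective reduces to $y_Aa+y_Bb-y_A\ln y_A-y_B\ln y_B+\hat\tau y_B\ln 2$. This gives $y_B/y_A=2^{\hat\tau}e^{b-a}$, while \cref{eqn:c_prob_mode} read literally gives $2e^{b/\hat\tau-a}$.

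\paragraph{How to repair it.} Your claim that the multipliers telescope ``with the right scale factors'' holds only if $\optnestpar=1$ for every nest and $\optscalem=1$. The fix is to state the target model scale-consistently:
\begin{itemize}
\item use $IV_{\nest}=\optnestpar\ln\sum_{m\in\nest}\exp(\optscalem[\cutilm+\satistripm]/\optnestpar)$, as in the ``Basics of nested logit model'' paragraph;
\item use $\cutilj+\satistrip/\optscalem$ as the destination-level utility.
\end{itemize}
With these definitions your bottom-up argument goes through verbatim. The paper's proof of \cref{theorem:convex} covers exactly these two levels only by ``Similarly'', so you inherited the discrepancy rather than introduced it.
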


Under Assumption \ref{assume:latency}, the Second Stage Problem is convex programming with a concave objective function and linear constraints. Corollary \ref{corollary:SecondStage} confirms that the future estimation can be obtained by solving problem {\fontfamily{cmr}\selectfont SecondStage}. 
We omit the proof as it can be immediately derived from the proof of \cref{theorem:convex}.
From the optimal solution \optjprobj, \optjprobnest, \optjprobm, \optjprobr, we recover the full information of future travel patterns.
% \SSC{be consistent with how you refer to the first stage problem}

In summary, the First Stage Problem incorporates observed data from each decision-making step as constraints and simultaneously estimates the associated behavioral parameters through the corresponding dual variables. These estimated parameters are then carried forward to the Second Stage Problem, providing a behaviorally and internally consistent basis for forecasting future travel patterns.

{\subsection{Summary of our contributions}

\HL{We propose a two-stage travel demand forecasting framework that first reconstructs current travel patterns to estimate model parameters and then uses the estimated model to predict future demand. Building on prior single-level formulations of logit-based network equilibrium models, our key methodological contribution is a primal--dual decomposition that separates equilibrium characterization from parameter recovery while preserving convexity for a broad class of behavioral parameters.

In the first-stage problem, the primal variables characterize the joint destination--mode--route equilibrium, whereas the dual variables associated with empirical moment and conditional-entropy constraints recover the taste coefficients, alternative-specific constants, and destination- and mode-level scale parameters. 
This construction avoids introducing these parameters directly as primal decision variables, thereby preventing the bilinear interactions (e.g., coupling between choice probabilities and taste coefficients, and products between destination- or mode-level scale parameters and the corresponding taste coefficients) that would otherwise destroy convexity. 
The resulting formulation therefore extends the classical entropy--likelihood duality to a network equilibrium setting with hierarchical choice structure and congestion-dependent route costs.

The decomposition also makes explicit the boundary of this convex structure. In this sense, our contribution is not to eliminate all sources of non-convexity in joint estimation, but to push the boundary of convex parameter recovery substantially further: taste coefficients, alternative-specific constants, and destination- and mode-level scale parameters are recovered simultaneously within the convex program, leaving only the scalar route-level dispersion parameter outside. Conditional on $\hat{\lambda}$, this high-dimensional parameter-recovery problem remains convex and globally solvable. 
In contrast, jointly estimating $\hat{\lambda}$ with network flows would reintroduce non-convexity because $\hat{\lambda}$ multiplies the flow-dependent Beckmann term, creating a bilinear coupling between the parameter and endogenous equilibrium quantities.
We therefore isolate $\hat{\lambda}$ and calibrate it against observed link counts through a simple one-dimensional outer search. This confines the remaining calibration to a simple scalar search while preserving the convex structure of the high-dimensional estimation problem.

Theorem 1 formalizes this primal--dual relationship and shows that the optimality conditions recover the hierarchical extended logit model. Theorem 2 further establishes an exponential-cone reformulation of the convex subproblem. This conic representation enables the use of state-of-the-art solvers that provide globally optimal primal and dual solutions together with numerical optimality certificates.}

\section{Experimental Results}
\label{sec:results}
We conducted comprehensive experiments exploring various combinations of implementing entropy functions and the Beckmann equation. 
For entropy functions, we tested (a) the exponential cone, (b) the relative cone with dimension 3, and (c) the relative cone with dimension 2n+1. 
For the Beckmann equation, we examined both (x) the power cone and (y) the \gls{a:soc}. 
The \gls{a:socp} reformulation of the Beckmann equation has been suggested by \cite{wei2019efficient}. However, this implementation can encounter numerical problems without proper scaling of the \gls{a:socp} cone. To mitigate these scaling issues, we also explore the power cone, and our observations indeed reveal that the power cone is more efficient than \gls{a:socp}.
We also compared the performance of different solvers, including MOSEK, ECOS, and SCS. Based on our extensive analysis, we determined that the combination of the exponential cone for entropy functions, the power cone for the Beckmann equation, and the MOSEK solver yields the best numerical stability. 
This configuration was applied throughout this section. 

%Although MOSEK can be more efficient for smaller instances, it may face out-of-memory issues with larger instances. Given that our implementation is in Julia, users can easily switch to a first-order solver, such as SCS.

We demonstrate our methodology using several networks. 
First, we analyze a basic test network designed by \citet{yao2014general} to illustrate and underline the route overlapping effects. 
This network is intentionally kept simple to highlight and draw meaningful insights. 
Next, we conduct an in-depth analysis of a stylized Sioux Falls network, used to demonstrate the effect of mode similarity to showcase use cases for the presented methodologies.
Finally, we apply our approach to the Sioux Falls, Anaheim, and Eastern Massachusetts networks to assess its applicability.
The benchmark networks are sourced from \citet{transportnetwork} that are designed to analyze automobile-only traffic assignment. 
However, throughout our experiments, we consider three transport modes: automobile, bus, and subway. 
The provided dataset contains an OD demand matrix, not a mode-specific OD. 

Our experiments primarily focus on the first stage model, as solving the second stage model is relatively straightforward.
In our numerical experiments, we have \cref{FirstStage:ratio1} but not \cref{FirstStage:ratio2}. 
Accordingly, we estimate $\scalej$ and assume that $\scalem$ is set to 1 due to the lack of mode-specific OD matrix. 
Our model utilizes a nested structure for mode choice, placing bus and subway options within a shared public transit nest, as depicted in \cref{fig:nestedlogit}(b). 
In the nested structure, $\tau_{\text{automobile}} = 1$ because the automobile is the only alternative in its own nest. 
We vary the dissimilarity parameter for public transit, $\tau_{\text{public transit}}$, which will be simply denoted as $\tau$ in the later sections. 

All experiment codes and data are available online:  \url{https://github.com/youngseo-Kim/ConicTrafficAssignment}. Experiments were conducted on Ubuntu 18.04.3 LTS Intel Xeon Gold 6244 CPU 3.60GHz using Julia 1.10.0 and Jump interface.

\subsection{A toy network}

To evaluate the impact of route overlapping, we utilize the test network proposed by \citet{yao2014general}. 
In this network, the variable $x$ represents the length of arc $(1 \rightarrow 2)$, which is varied at values of 0, 2, and 3.
Figures (a) and (b) depict the scenario where $x = 0$, causing node \#2 to be overlapped with node \#1 and to disappear. Figures (c) and (d) illustrate the case when the routes are moderately overlapped ($x = 2$), and Figures (e) and (f) represent the situation where the routes are heavily overlapped ($x = 3$).
For clarity and conciseness, we focus our report on the road network with traffic counts for automobiles. 

In the case of no route overlaps, both MNL and PSL yield identical flow distributions as the independent assumption of alternative routes is fully satisfied.
In two route overlapping cases, the MNL model produces the same flow distributions for both cases, while the PSL model generates different flow distributions that explicitly consider the extent of the route overlapping. Essentially, without PSL, MNL overestimates the arc flow on congested roads. This is illustrated by the estimates for the arc $(1 \rightarrow 2)$. MNL estimates 13,095 vehicles, while PSL estimates 12,168 and 11,666, depending on the level of route overlap. In the PSL model, when comparing figures (d) and (f), as $x$ increases from $2$ to $3$, the flow along the arcs in the overlapped routes $(1 \rightarrow 2 \rightarrow 3)$ reduces from 1,645 to 1,480, and similarly, $(1 \rightarrow 2 \rightarrow 4 \rightarrow 3)$ decreases from 1,645 to 1,480.
These trends align with our expectations because, as route overlap increases, the attractiveness of the overlapping routes decreases, leading to a reduction in traffic counts.
It is not surprising to observe the symmetry of flows about the horizontal axis, given that the arc characteristics were deliberately set to be symmetric.

\begin{figure}[!htbp]
    \centering
    \captionsetup[subfigure]{skip=0pt}
    \begin{subfigure}[b]{0.28\textwidth}
        \includegraphics[width=\linewidth]{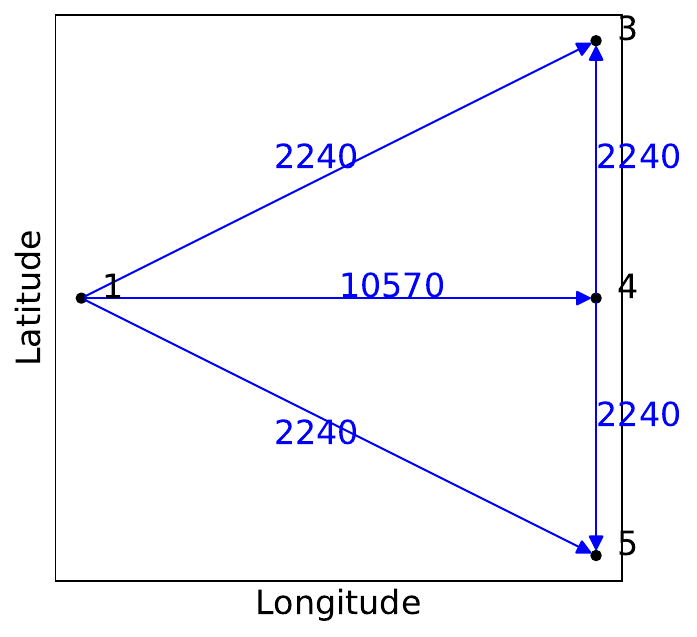}
        \caption{No route overlapping, MNL}
    \end{subfigure}
    \begin{subfigure}[b]{0.28\textwidth}
    \includegraphics[width=\linewidth]{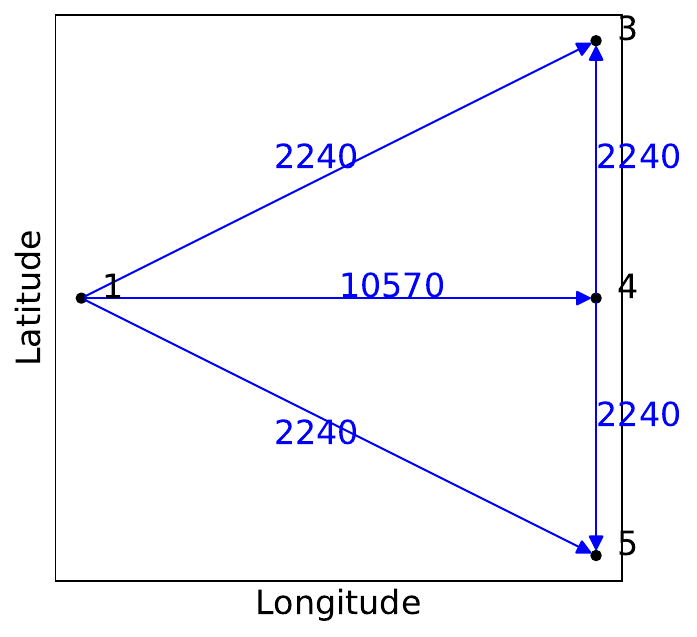}
        \caption{No route overlapping, PSL}
    \end{subfigure}
    \centering
    \captionsetup[subfigure]{skip=0pt}
    % \caption{(continued)}
\end{figure}
\begin{figure}[H]\ContinuedFloat
    \centering
    \captionsetup[subfigure]{skip=0pt}
    \begin{subfigure}[b]{0.28\textwidth}
        \includegraphics[width=\linewidth]{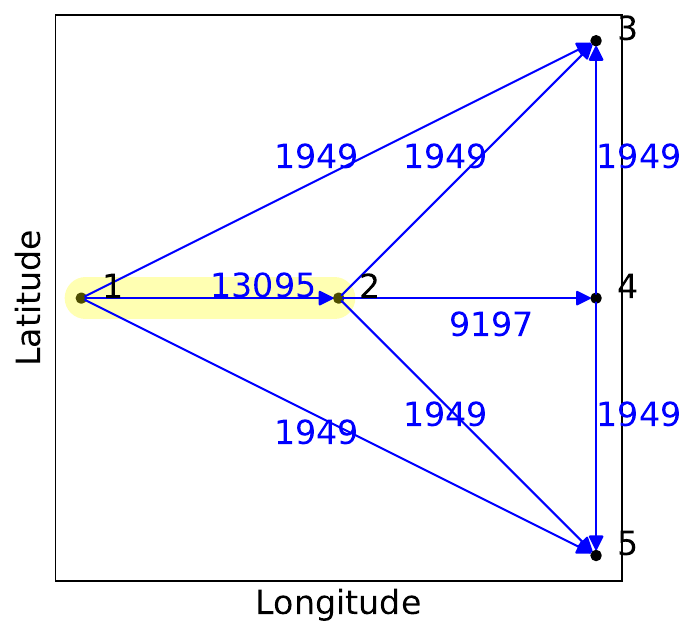}
        \caption{Moderately overlapped routes, MNL}
    \end{subfigure}
    \begin{subfigure}[b]{0.28\textwidth} 
    \includegraphics[width=\linewidth]{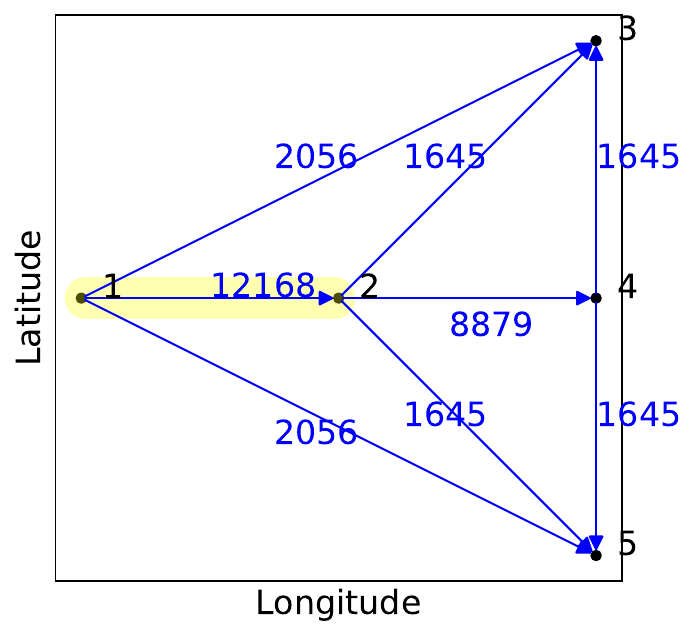}
        \caption{Moderately overlapped routes, PSL}
    \end{subfigure}
    \par\medskip
    \begin{subfigure}[b]{0.28\textwidth}
        \includegraphics[width=\linewidth]{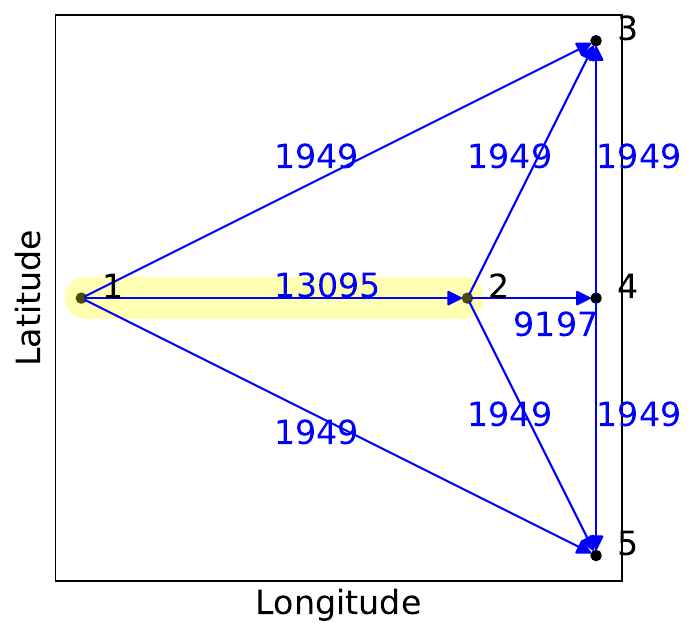}
        \caption{Heavily overlapped routes, MNL}
    \end{subfigure}
    \begin{subfigure}[b]{0.28\textwidth}
        \includegraphics[width=\linewidth]{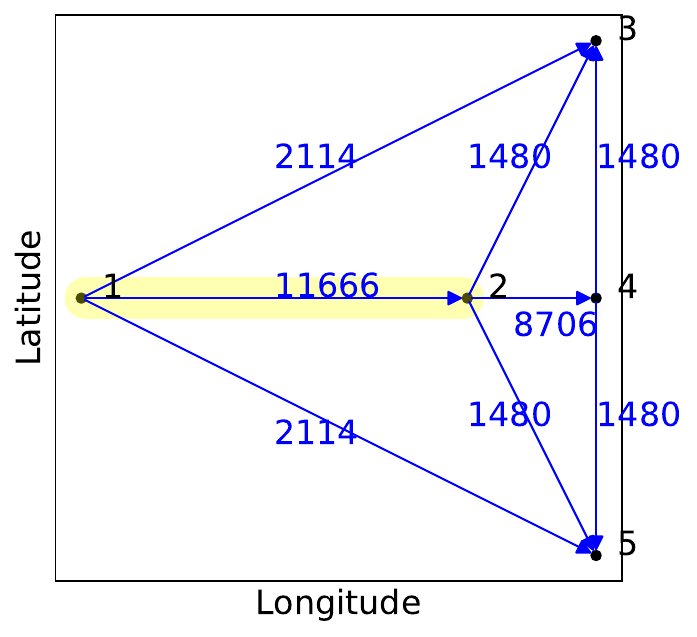}
        \caption{Heavily overlapped routes, PSL}
    \end{subfigure}
    \caption{Traffic counts for automobiles with varying network configurations, each exhibiting different levels of route overlapping. Let $x$ be the length of overlapped arc $(1 \rightarrow 2)$. (a), (b) have $x=0$ and node \#2 disappears. (c), (d) have moderately overlapped routes ($x=2$). (e), (f) have heavily overlapped routes ($x=3$). }
\end{figure}

\subsection{Sioux Falls network}

The Sioux Falls network comprises 24 nodes, 76 links, and 528 OD pairs. The network structure, OD matrix, and the link performance parameters are sourced from \citet{transportnetwork}. 
\YS{The candidate set of routes is generated a priori using the penalty method, with a 5\% penalty applied to the travel times on all links that form the shortest path \citep{bekhor2008effects}. 
In this route set, there are a total of 1,070 routes with an average of 2.02 and a maximum of 5 routes for each OD pair. }

Based on the given OD matrix, we obtain the estimate of $\scalej$ from the dual variable of \cref{FirstStage:ratio1}, which turns out to be 0.1689. As previously mentioned, this value should be less than $\scalem$ and $\scaler$, which are equal to 1, since the correlation at the destination level is stronger than at the lower-level decisions. The value 0.1689 is indeed less than 1, confirming that it is the correct estimation.

\begin{figure}[h!]
    \centering
    \captionsetup[subfigure]{skip=0pt}
    \begin{subfigure}[b]{0.37\textwidth}
        \includegraphics[width=\linewidth]{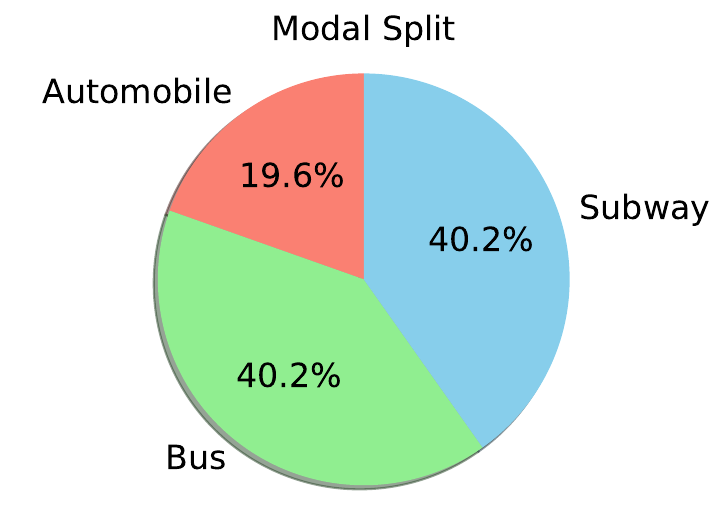}
        \caption{Dissimilarity of public transit ($\tau$) = 1, \\ automobile competitiveness ($\rho$) = 0.9}
    \end{subfigure}
    \begin{subfigure}[b]{0.37\textwidth}
        \includegraphics[width=\linewidth]{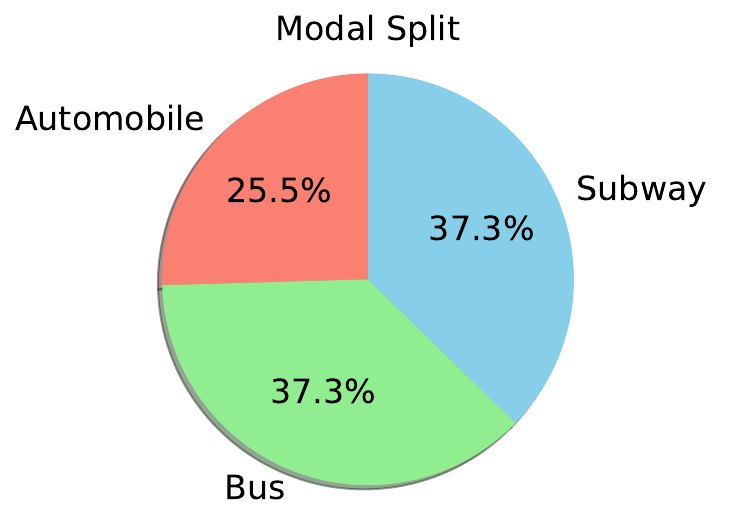}
        \caption{Dissimilarity of public transit ($\tau$) = 0.5, \\ automobile competitiveness ($\rho$) = 0.9}
    \end{subfigure}
    \begin{subfigure}[b]{0.4\textwidth}
        \includegraphics[width=\linewidth]{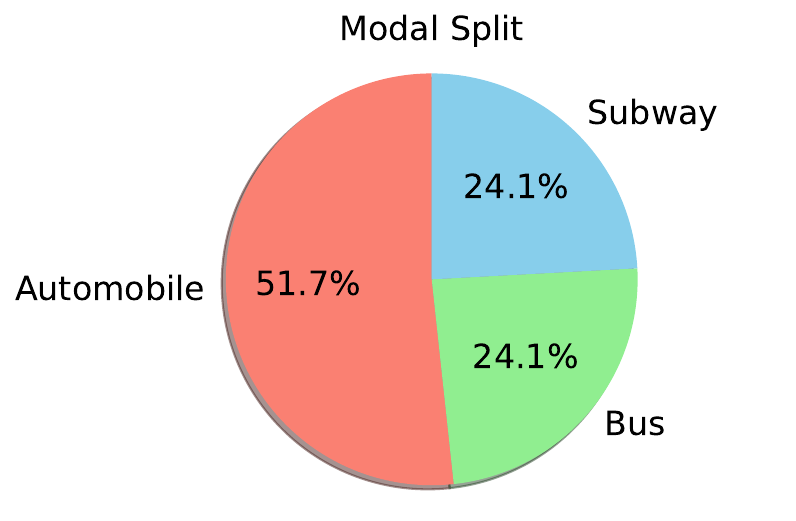}
        \caption{Dissimilarity of public transit ($\tau$) = 1, \\ automobile competitiveness ($\rho$) = 1.1}
    \end{subfigure}
    \begin{subfigure}[b]{0.4\textwidth}
        \includegraphics[width=\linewidth]{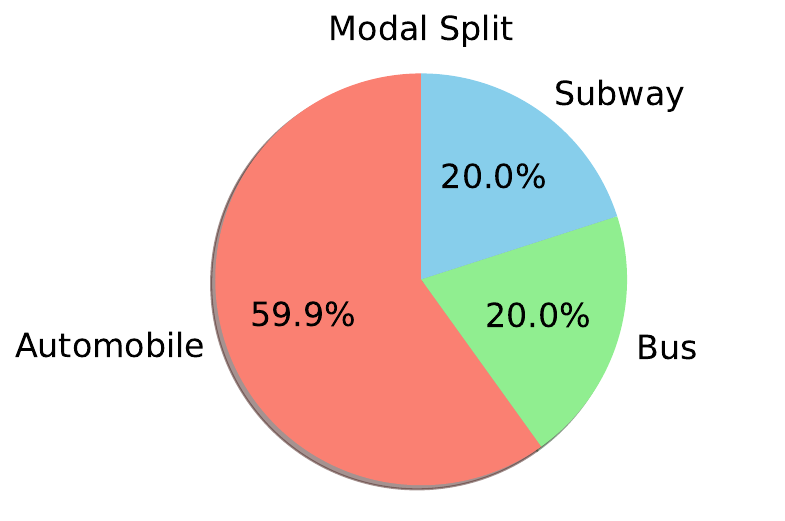}
        \caption{Dissimilarity of public transit ($\tau$) = 0.5, \\ automobile competitiveness ($\rho$) = 1.1}
    \end{subfigure}
\caption{Modal splits for different nest parameters for public transit ($\tau$) and the competitiveness of automobiles ($\rho$)}
\label{fig:sensitivity}
\end{figure}

We conduct sensitivity analysis to demonstrate the impact of employing a NL model over a simpler MNL. 
For our initial modal split sensitivity analysis, we use basic travel time assumptions by simply multiplying the constant $\rho$ with the free flow travel time of automobiles to calculate the travel time for public transit. 
In subsequent analyses, we will incorporate more realistic travel times.
As the travel time of public transit increases, automobiles become more attractive. Thus, we interpret this parameter as the competitiveness of automobiles. 
Another key parameter is $\tau$ (simplified notation for $\tau_{\text{public transit}}$), which represents the dissimilarity among modes within the public transit nest. As $\tau$ decreases, indicating reduced dissimilarity or increased similarity, the modes within the nest become less attractive collectively. This occurs because the nest is perceived as offering less variety.

The results of the sensitivity analysis are presented in \cref{fig:sensitivity}.
It is not surprising that the subway and the bus show exactly the same market share because we used the exact same mode-specific attribute and fixed lines for them in this sensitivity analysis.
As the dissimilarity parameter for public transit ($\tau$) decreases, the competitiveness of the public transit nest reduces, consequently reducing its modal split. When $\tau$ drops from 1 to 0.5, the modal split for bus and subway decreases from 40.2\% to 37.3\% (figure a to b) and from 24.1\% to 20.0\% (figure c to d), respectively. Conversely, when the competitiveness of automobiles increases, its modal split also rises. An increase in automobile competitiveness from 0.9 to 1.1 results in an increase of the modal split for automobiles from 19.6\% to 51.7\% (figure a to c) and from 25.5\% to 59.9\% (figure b to d), respectively. 
It's worth noting that setting $\tau = 1$ effectively reduces our model to a simple MNL. 
Our findings reveal that different values of $\tau$ result in varying estimates for the modal split, implying that an inaccurate $\tau$ could lead to over/under-estimations of the modal split.
This underscores the importance of either estimating (in the first stage model) or utilizing (in the second stage model) a realistic dissimilarity parameter.

Lastly, we conducted a case study to demonstrate the capabilities of our model.
To begin with, we would like to clarify that this study is not intended to draw conclusions or implications that can be directly applied to real-world scenarios. 
Our case study includes three modes of transportation: automobiles, buses, and subways.
The maps for road, bus, and subway networks are shown in \cref{fig:traffic}. 
The benchmark instance for the Sioux Falls network provides the road network layout, free-flow travel times, road capacities, and BPR parameters to account for road congestion. 
These free-flow travel times serve as reference points for other modes. We assume 60 km/hr free-flow speed for automobiles. Subway lines are presumed to be twice as fast as automobiles in free-flow conditions where connections exist. For areas without subway access, we assume a walking speed of 5 km/hr. Passengers choose the shortest path, either by transit, walking, or a combination of both. Buses are assumed to connect all links at an average speed of 15 km/hr, four times slower than free-flow car travel, accounting for access, waiting, and transfer times. We assume passengers take the shortest bus route and are unaffected by congestion due to express lanes. Both bus and subway fares are set at \$2.5. Private car costs, including fuel and mileage, are estimated at \$0.50 per km. We use a value of time of \$20 per hour to convert time units to monetary values. 

After solving our optimization problem, we can extract the complete information on travel demand patterns from the primal solutions, as specific as the arc-level traffic counts in \cref{fig:traffic}.
Annotations in the figure represent the traffic counts at each arc. 
We distinguish each direction of the bidirectional edges by blue and green colors for clarity. For the road network, the width of each arc indicates the congestion level of each road. The level of congestion is calculated based on the volume-to-capacity (V/C) ratio. 
This visualization demonstrates how our framework can be utilized as a powerful tool for analyzing traffic patterns, demand, and congestion at a high level of detail.
\begin{figure}[H]
    \centering
    \includegraphics[width=0.85\textwidth]{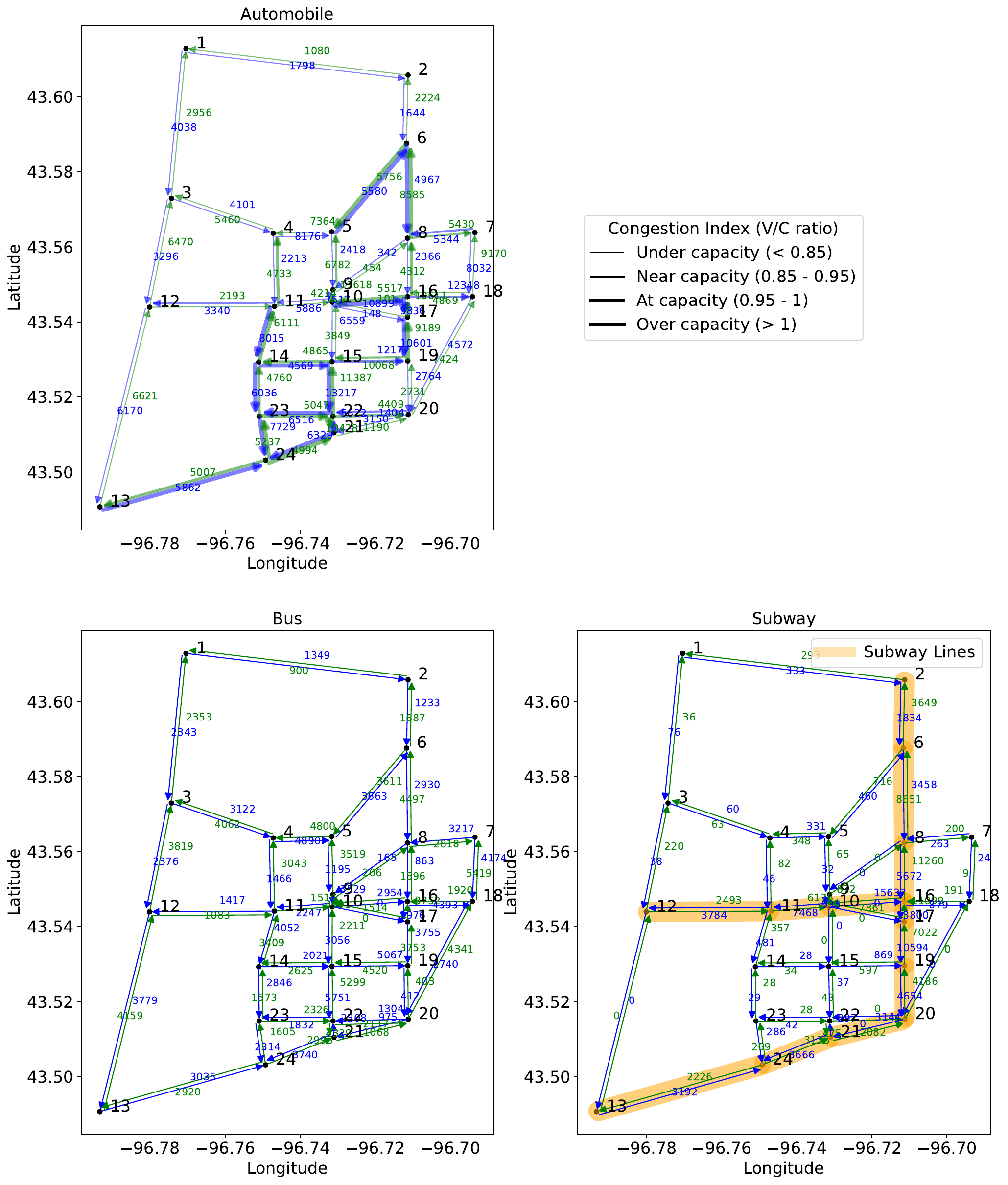}
    \caption{Traffic counts for road, bus, and subway networks. Green and blue colors are used to differentiate bidirectional edges. Annotations represent the traffic counts at each arc. For the road network, the width of each arc indicates the congestion level of each road. }
    \label{fig:traffic}
\end{figure}

\begin{figure}[H]
    \centering
    \begin{subfigure}[b]{0.59\textwidth}
        \includegraphics[width=\textwidth]{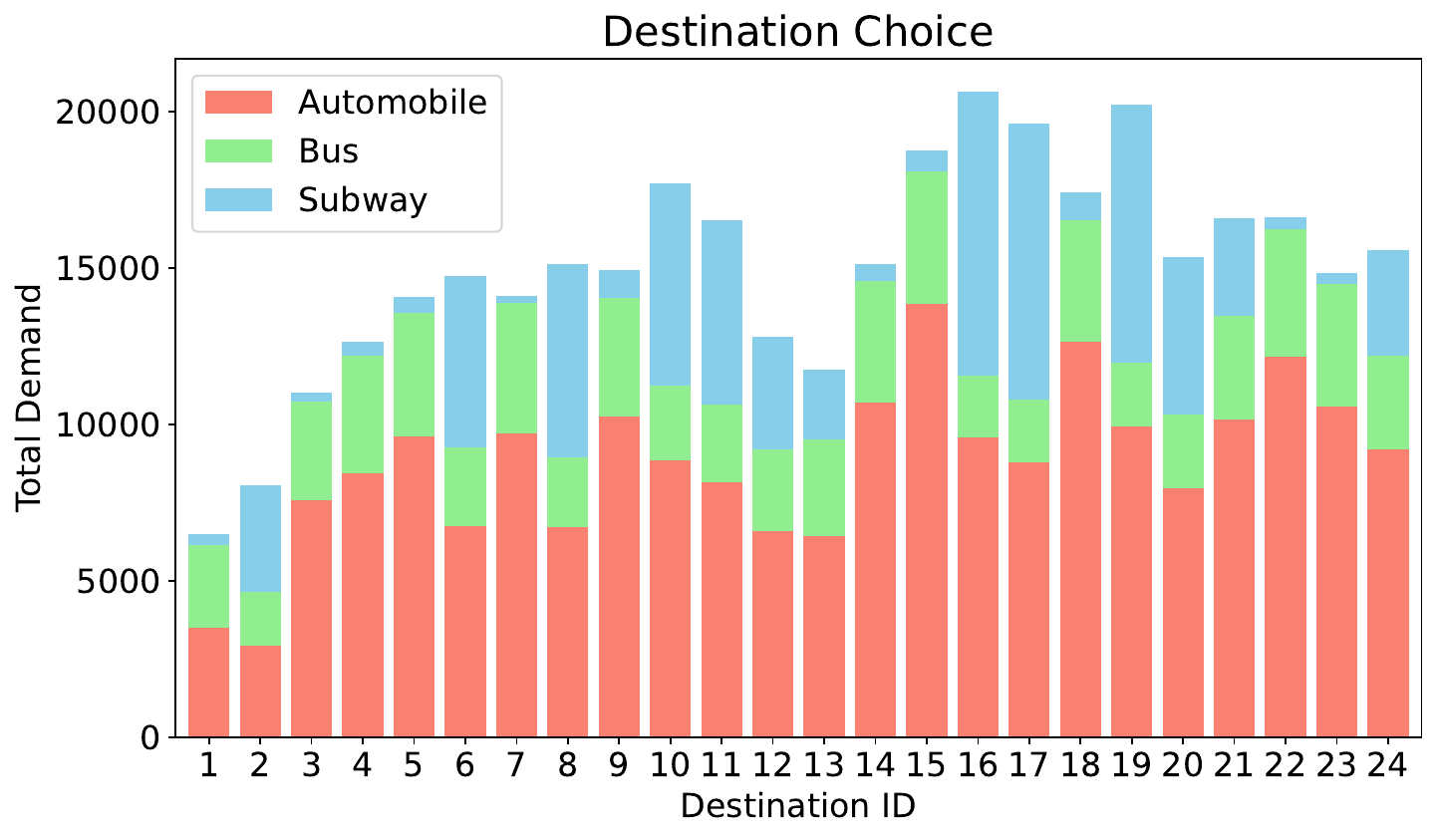}
        \caption{Destination choice}
    \end{subfigure}
    \hfill
    \begin{subfigure}[b]{0.39\textwidth}
        \includegraphics[width=\textwidth]{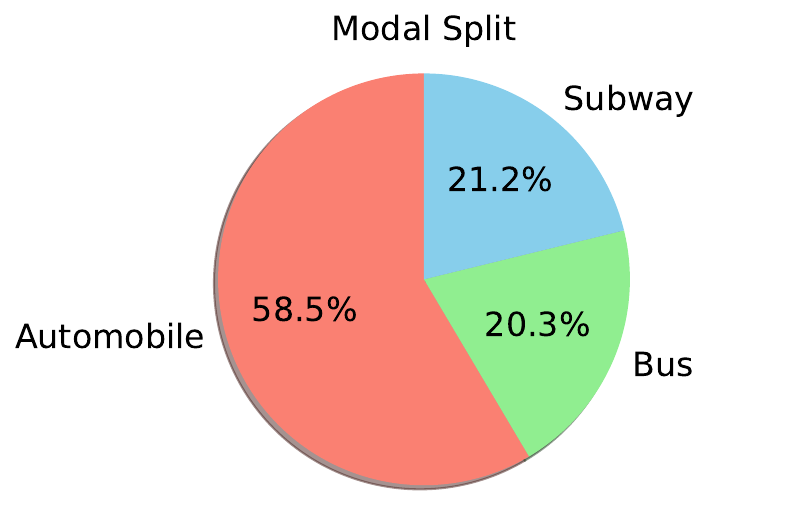}
        \caption{Mode choice}
    \end{subfigure}
    \caption{Aggregated statistics of the network equilibrium}
    \label{figure:visual}
\end{figure}

Now, we visualize the destination choice for each mode and the aggregated modal split.
\Cref{figure:visual} (a) illustrates the total travel demand to each destination. Destinations \#1, \#2, \#12, and \#13 show low demand. These nodes are located on the left boundary of the map, which shows lower traffic. In contrast, destinations \#15, \#16, \#17, \#18, and \#19 show high demand, which makes sense as we observed lots of traffic and congestion in this highly dense area.
Additionally, the modal split at each destination node clearly illustrates a pattern that the nodes directly connected to subway lines (\#2, \#6, \#8, \#10, \#11, \#12, \#13, \#16, \#17, \#19, \#20, \#21, \#24) exhibit higher subway usage rates.
\Cref{figure:visual} (b) shows the modal split for each mode: automobiles are 58.5\%, buses are 20.3\%, and subways are 21.2\%. 
It is important to note that those outputs are subject to change based on the input parameters such as the value of time and perceived fuel and mileage costs. In practical applications, conducting sensitivity analyses for those parameters would provide valuable insights to support decision-making processes.

\YS{\subsection{Various transportation benchmark networks}

Our analysis encompasses several transportation network benchmarks, including Sioux Falls, Eastern Massachusetts (EMA), Berlin Friedrichshain, Berlin Mitte, Anaheim, Barcelona, Winnipeg, and Chicago Sketch. 
\Cref{tab:size} presents the dimensions of each network, detailing the number of nodes, links, OD pairs, and routes.

\YS{
\begin{table}[h!]
\centering
\caption{Size of Benchmark Networks}
\begin{tabular}{|l|r|r|r|r|}
\hline
\textbf{}                      & \multicolumn{1}{l|}{\textbf{\# of nodes}} & \multicolumn{1}{l|}{\textbf{\# of links}} & \multicolumn{1}{l|}{\textbf{\# of ODs}$^a$} & \multicolumn{1}{l|}{\textbf{\# of routes}$^b$} \\ \hline
\textbf{SiouxFalls}            & 24                                        & 76                                        & 528                                     & 644                                        \\ \hline
\textbf{Eastern Massachusetts}                   & 74                                        & 258                                       & 1,113                                   & 1,797                                      \\ \hline
\textbf{Berlin Friedrichshain} & 224                                       & 523                                       & 506                                     & 656                                        \\ \hline
\textbf{Berlin Mitte}          & 398                                       & 871                                       & 1,260                                   & 1,706                                      \\ \hline
\textbf{Anaheim}               & 416                                       & 914                                       & 1,406                                   & 2,700                                      \\ \hline
\textbf{Barcelona}             & 1,020                                     & 2,522                                     & 7,922                                   & 16,570                                     \\ \hline
\textbf{Winnipeg}              & 1,057                                     & 2,535                                     & 4,345                                   & 8,000                                      \\ \hline
\textbf{Chicago Sketch}        & 933                                       & 2,950                                     & 93,513                                  & 225,507                                    \\ \hline
\end{tabular}
\label{tab:size}\\
\footnotesize{$^a$ While the theoretical number of OD pairs equals the square of the number of nodes, we exploit the sparsity of the OD matrix and report only the OD pairs with strictly positive demand. To further utilize the sparsity of the OD matrix, the preprocessing stage filters out demands less than 20.\\
$^b$ The total number of routes is reported. Up to three candidate routes are generated per OD pair. }
\end{table}

}

We successfully solved our model for all networks within a couple of minutes, including the largest Chicago Sketch network with 93,513 OD pairs and 225,507 routes. This provides strong numerical evidence supporting the computational efficiency of solving large-scale combined models using a conic solver.
Because the benchmark datasets do not provide independent observed link counts, the values of $\hat{\lambda}$ reported here are illustrative rather than empirically calibrated. In a real-data application, $\hat{\lambda}$ would instead be selected through the one-dimensional link-count calibration in \cref{metric}. For the benchmark experiments, we therefore report representative values of $\hat{\lambda}$ that yield numerically stable solutions and reasonable scaling between the entropy and Beckmann components.
Through iterative adjustment of the scale parameter, we frequently observe that improper choices of $\scalepar$ can lead to numerical instability, often indicated by a `Slow Progress' flag from the solver and resulting in inaccurate estimates of $\scalej$ that fall outside its theoretical bounds of 0 and 1.
Given the limited availability of data, we present a representative example of $\scalepar$ that balances the Beckmann and entropy components to ensure numerical stability.
Calibrating $\scalepar$ usually involves iterative adjustments to minimize discrepancies between predicted and observed choices \citep[see][\S 11.3.2]{de2011modelling}.

\YS{

\begin{table}[h!]
\caption{Computation Time and Calibration Results for Various Networks}
\centering 
\begin{tabular}{|l|r|r|r|r|r|}
\hline
\multicolumn{1}{|c|}{\textbf{}} & \multicolumn{1}{c|}{\textbf{\begin{tabular}[c]{@{}c@{}}Time to \\ build model \\ (sec)\end{tabular}}} & \multicolumn{1}{c|}{\textbf{\begin{tabular}[c]{@{}c@{}}Solution \\ time \\ (sec)\end{tabular}}} & \multicolumn{1}{c|}{\textbf{\begin{tabular}[c]{@{}c@{}}Objective \\ function\end{tabular}}} & \multicolumn{1}{c|}{\textbf{$\scalepar$}}\\ \hline
\textbf{SiouxFalls}             & 11.0                                                                                                  & 0.7                                                                                             & -5852.8                                                                                     & 1000                                 \\ \hline
\textbf{EMA}                    & 14.4                                                                                                  & 1.3                                                                                             & -527.1                                                                                      & 7000                              \\ \hline
\textbf{Berlin Friedrichshain}  & 16.6                                                                                                  & 0.7                                                                                             & -4698.5                                                                                     & 100                                 \\ \hline
\textbf{Berlin Mitte}           & 18.0                                                                                                  & 0.9                                                                                             & -2088.2                                                                                     & 200                                 \\ \hline
\textbf{Anaheim}                & 55.4                                                                                                  & 3.7                                                                                             & -15436.4                                                                                    & 0.5                              \\ \hline
\textbf{Barcelona}              & 583.9                                                                                                 & 18.8                                                                                            & -6431.6                                                                                     & 3000                               \\ \hline
\textbf{Winnipeg}               & 294.5                                                                                                 & 16.2                                                                                            & -2254.3                                                                                     & 500                               \\ \hline
\textbf{Chicago Sketch}         & 1799.6                                                                                                & 123.1                                                                                           & -2095.0                                                                                     & 5000                            \\ \hline
\end{tabular}
\end{table}
}

\HL{The purpose of the numerical experiments is to demonstrate the behavioral properties and computational tractability of the proposed convex formulation, rather than to empirically validate behavioral parameter recovery. A meaningful validation of the estimated parameters requires multi-source real-world observations with independently observed travel patterns and link counts, which is beyond the scope of this study and is left for future empirical work.}

\subsection{Scalability considerations for real-world networks.}

To ensure scalability to large-scale, real-world transportation networks, potentially involving hundreds of thousands of links, OD pairs, and routes, we provide several technical considerations. 
As the number of probability-related variables increases, individual probabilities tend to approach zero. This, in turn, can result in excessively large parameters, potentially leading to numerical instability in computation. 
To mitigate this issue, practitioners may adopt a combination of the following strategies, tuning them to the specifics of their application:

\textbf{Zone aggregation and network abstraction.} Nodes typically represent aggregated spatial units. Appropriately aggregating these zones can reduce computational complexity and enhance interpretability. For instance, in modeling nationwide travel demand, using states or regions as zones may be more suitable than finer spatial units such as cities or traffic analysis zones.

\textbf{Exploiting OD Matrix sparsity.} Since decision variables are indexed by OD pairs, it is beneficial to exploit the sparsity of the OD matrix. Although the theoretical number of OD pairs scales quadratically with the number of nodes, many pairs may be unused or negligible in practice (e.g., trips between distant or disconnected regions). 
Excluding such pairs helps reduce dimensionality and enables the concentration of computational resources on OD pairs that are most critical to the analysis.

\textbf{Candidate route pruning.} Limiting the number of candidate routes per OD pair can significantly reduce problem size. While a larger route set allows for finer representation of route choice behavior, its marginal benefit diminishes due to cognitive constraints on travelers, who typically consider only a limited number of alternatives. Therefore, the set of candidate routes should be carefully curated and validated using empirical data.
}

\section{Conclusion}
\label{sec:conclusions}
\subsection{Summary}
The travel demand forecasting model plays a critical role in assessing large-scale infrastructure projects. 
The sequential four-step travel demand modeling process remains widely used despite its inherent limitations, including the absence of a unifying framework and discrepancies between input and output values in each step. Although combined models have been developed, they have not been practically applicable due to a lack of satisfaction in both behavioral richness and computational efficiency, which are desirable in the real world. 
To bridge this gap, we proposed a convex programming approach equivalent to the hierarchical extended logit model. 
The most significant advantage of our method lies in its ability to be modeled as convex programming, ensuring solution existence and tractability.

Our model is built upon the utility maximization framework, leveraging well-established findings. Specifically, an individual choice of destination, mode, and route can be expressed as random utility, and then the trip generation/distribution, modal split, and traffic assignment can be viewed as maximizing the aggregated utility. Building upon that, we can extend the well-known equivalence of the multinomial logit model and entropy maximization problem to the nested logit and path size logit model. This extension allows our model to exhibit greater behavioral richness compared to previous convex programming approaches. Each step is expressed using appropriate models, such as the multinomial logit model (which is equivalent to the gravity model) for destination choice, the nested logit model for mode choice, and the path-size logit model for route choice.

The combined model demonstrates its advantages during the two stages of parameter estimation and future prediction. In the parameter estimation stage, it provides an alternative estimation architecture in which empirical observations are incorporated through moment and conditional-entropy constraints. This approach incorporates empirical observations through moment and conditional-entropy constraints, while the associated dual variables recover the corresponding behavioral parameters. Unlike traditional models with feedback loops, our convex programming formulation jointly represents all travel levels within a unified optimization framework, thereby maintaining consistency across destination, mode, and route choices. Moreover, the framework provides a systematic way to reconcile theoretical network equilibrium with empirical observations, including observed link counts, within a unified forecasting framework.

\subsection{Complementary relationship between our framework and activity-based models}

In this section, we discuss the complementary relationship between our suggested framework and the popular activity-based model. The activity-based model can provide comprehensive information to validate our framework, while our framework's output can serve as input for the activity-based model to simulate more complex patterns. Activity-based models simulate whole-day daily activities of individuals or households to understand travel behavior and demand. They focus on capturing the sequence and timing of activities and resulting travel, offering a holistic view of travel behavior. These models excel in defining heterogeneous agents with diverse behaviors and decision-making processes.

However, the complex nature of activity-based models presents calibration and validation challenges due to their heavy reliance on input data quality and parameters. Also, the activity-based model may face computational challenges in the feedback loop accounting for road congestion effects on activities. Our framework serves as a trip-based model, providing parameter estimates from a compact aggregate dataset, though at the expense of detailed activity information. Our framework's first stage problem provides estimated input parameters and an  matrix, which can be used as input for activity-based models capable of modeling more intricate trip activities. 

Conversely, activity-based model outputs can serve as a dataset reflecting real-world travel patterns to validate our model. 
The rich information from the activity-based model will help us better characterize our framework through sensitivity analysis to determine the number of observations needed to fully recover actual patterns.
Using activity-based models to validate our framework presents an interesting avenue for future research. In conclusion, despite their individual limitations, both the activity-based model and our suggested framework can be used collaboratively to work towards the common goal of estimating and forecasting real-world travel patterns.

\subsection{Future research}

\YS{
Validation using multi-source real-world data remains as future investigations. 
During this process, to enhance the robustness of the optimized distribution, we may relax the assumption that the error terms follow the Gumbel distribution. 
Rather than specifying a particular distributional form, we can construct an ambiguity set centered around the empirical distribution. 
This reformulation establishes a natural connection to distributionally robust optimization.
}

In future research, exploring the application of this model to various transportation networks in different cities would provide highly valuable insights. Additionally, conducting a comparative analysis between the performance of the convex programming model and traditional four-step sequential approaches, as well as other advanced travel demand forecasting techniques like activity-based modeling or machine learning approaches, could offer valuable assessments of their strengths and limitations.

For a more comprehensive understanding of travelers' behavior, there are promising extensions to consider. Firstly, accommodating multi-modal trip chains allows for a more comprehensive capture of travelers' behavior. Secondly, investigating dynamic travel demand modeling, where travel patterns evolve over time would better reflect real-world dynamics. To address the challenge of non-separable link cost functions, where the link travel time of each mode depends on the flows of other modes on that link, further research could explore effective ways to handle these complexities, thereby expanding the model's applicability.

The extended model's capability to capture travelers' responses to different policy interventions, such as tolling and congestion pricing, holds significant potential for achieving sustainable urban mobility goals. Additionally, the model can be utilized to evaluate the investment in emerging technologies, such as autonomous vehicles and electric vehicles. By exploring these extensions and applications, the proposed convex programming model can evolve into a powerful and versatile tool, aiding transportation planning and policy-making in diverse urban contexts.

% directly put it into the system
% \section*{CRediT authorship contribution statement}
% \textbf{Youngseo Kim}: Conceptualization, Formal analysis, Investigation, Methodology, Software, Writing – original draft.
% \textbf{Gioele Zardini}: Investigation, Validation, Writing – review and editing.
% \textbf{Samitha Samaranayake}: Funding acquisition, Validation, Supervision, Writing – review and editing.
% \textbf{Soroosh Shafiee}: Formal analysis, Investigation, Methodology, Writing – original draft.

\section*{Acknowledgements}

\HL{This research is partially sponsored by the National Science Foundation under grants CIS-2144127, the Department of Energy under Award Numbers DE-EE0010609, the 2023 INFORMS TSL Cross-Regional Doctoral Grant, the UC Institute of Transportation Studies under Award No. 2026-16, and the U.S. Department of Transportation through the Pacific Southwest Region University Transportation Center.}

We extend our sincere gratitude to Dr. Damon Wischik for providing the initial inspiration for the entropy maximization framework. We also thank to Dr. Sedong Moon and Eui-Jin Kim for the insightful conversation regarding the current four-step sequential modeling approach. We also thank the two anonymous reviewers of Transportation Research Part B for their thoughtful and constructive feedback throughout the review process, which significantly improved the clarity and presentation of the manuscript.

\bibliographystyle{plainnat}
\bibliography{references}

\newpage
\begin{appendices}

\section{Proofs}
\subsection{\texorpdfstring{Proof of \cref{theorem:convex}}{Proof of Theorem 1}}
\label{appendix:hier_extend}

\renewcommand{\theequation}{A.\arabic{equation}}
\setcounter{equation}{0}

The formulation follows the standard form of a convex optimization problem, involving the maximization of a concave objective function subject to convex inequality constraints and linear equality constraints. First, the objective function is concave because the conditional entropy functions are concave \citep{polyanskiy2024information}, and the negative of the Beckmann equation is concave under Assumption~1 because the Beckmann equation itself is convex \citep{beckmann1956studies}. Second, the feasible sets defined by the conditional-entropy constraints are convex because they are superlevel sets of concave functions. Third, all equality constraints are linear. Consequently, the formulation is a convex optimization problem, and the \gls{a:kkt} conditions characterize the optimal solutions under standard constraint qualifications. By solving the \gls{a:kkt} conditions, we recover the hierarchical extended logit model.

Let $\lambda_i, \mu_{ij}, \kappa_{ij\nest}, \nu_{ijm}$ be the dual variables correspond to the constraints in \Pset. Write the Lagrangian function. 
\begin{align*}
    & \lag (\jprobj, \jprobm, \jprobnest, \jprobr, \scalej, \scalem, \nestpar, \tripattrpar, \tripmattrpar, \duali, \dualj, \dualM, \dualm)  \\
    & = - \frac{\scalepar}{\obstotal} \beckmann \\
    & - \frac{1}{\scalej} [\sum_{i \in \iset, j \in \jset}  \optjprobj \ln (\frac{\optjprobj}{\obsprobi}) ] + (1 - \frac{1}{\scalej}) H_{\text{MNL}}(\obscprobvecj) \\
    & - \frac{1}{\scalem} \left[ \sum_{i \in \iset, j \in \jset} \sum_{\nest \in \nestset} \optjprobnest \ln (\frac{\optjprobnest}{\optjprobj}) \right. \left. \quad \quad 
    + \sum_{\nest \in \nestset} \nestpar [\sum_{i \in \iset, j \in \jset} \sum_{m \in \nest}  \optjprobm \ln (\frac{\optjprobm}{\optjprobnest}) ] \right] \\ 
    & + (1 - \frac{1}{\scalem}) H_{\text{NL}_1}(\obscprobvecnest) 
    + (1 - \frac{1}{\scalem}) \sum_{\nest \in \nestset} H_{\text{NL}_{\nest}}(\obscprobvecm) \\
    & - [\sum_{i \in \iset, j \in \jset} \sum_{m \in \mset, r \in \rset} \optjprobr \ln (\frac{\optjprobr}{\optjprobm} \cdot \frac{1}{\pathsizepar}) \\ 
    & + \sum_{k \in \kset} \tripattrpar [\sum_{i \in \iset, j \in \jset} (\optjprobj \tripattr - \obsjprobj \tripattr)] \\
    & + \sum_{q \in \qset} \tripmattrpar [\sum_{i \in \iset, j \in \jset} \sum_{m \in \mset} (\optjprobm  \tripmattr - \obsjprobm  \tripmattr )] \\
    & + \sum_{i \in \iset} \optduali [ \sum_{j \in \jset} \optjprobj ] \\
    & + \sum_{i \in \iset, j \in \jset} \optdualj [\optjprobj - \sum_{\nest \in \nestset} \optjprobnest] \\
    &  + \sum_{i \in \iset, j \in \jset} \sum_{\nest \in \nestset} \optdualM [\optjprobnest - \sum_{m \in \nest} \optjprobm]\\
    & + \sum_{i \in \iset, j \in \jset} \sum_{m \in \mset} \optdualm [\optjprobm - \sum_{r \in \rsetfull} \optjprobr ]  
\end{align*}

At optimality, all the equality constraints in \cref{FirstStage:prob1}-\cref{FirstStage:prob4} for joint probability are satisfied. 
From the optimal joint probability \optjprobj, \optjprobnest, \optjprobm, \optjprobr, we construct the conditional probability \optcprobj, \optcprobnest, \optcprobmnest, \optcprobr\ from definitions.

\begin{align*}
    \optcprobj & = \frac{\optjprobj}{\sum_{j \in \jset}\optjprobj} = \frac{\optjprobj}{\obsprobi}, \forall i\\
    \optcprobnest & = \frac{\optjprobnest}{\sum_{\nest \in \nestset}\optjprobnest} = \frac{\optjprobnest}{\optjprobj}, \forall i,j\\
    \optcprobmnest & = \frac{\optjprobm}{\sum_{m \in \nest} \optjprobm} = \frac{\optjprobm}{\optjprobnest}, \forall i,j,\nest\\
    \optcprobr & = \frac{\optjprobr}{\sum_{r \in \rsetfull }\optjprobr} = \frac{\optjprobr}{\optjprobm}, \forall i,j,m
\end{align*}

Using the chain rule for differentiation, 
\begin{equation*}
    \frac{\partial \lag}{\partial \optjprobj} = 0 \Rightarrow \frac{\partial \lag}{\partial \optcprobj} \cdot \frac{\partial \optcprobj}{\partial \optjprobj} = 0.
\end{equation*}

Since $\frac{\partial \optcprobj}{\partial \optjprobj} \neq 0$, we conclude that $\frac{\partial \lag}{\partial \optcprobj} = 0$. 
Similarly, we obtain $\frac{\partial \lag}{\partial \optcprobnest} = 0$, $\frac{\partial \lag}{\partial \optcprobmnest} = 0$, and $\frac{\partial \lag}{\partial \optcprobr} = 0$.

Now, we can express the Lagrangian function in terms of conditional probabilities and then find the first-order condition by taking the derivative with respect to the conditional probabilities.
For simplicity, let us denote all the constant terms that are independent of the joint probabilities as $C$. This is because we are primarily interested in the first-order derivatives with respect to the conditional probabilities, and the constant terms will disappear when calculating the first-order derivatives.
\begin{align*}
    \lag =& - \frac{\scalepar}{\obstotal} \beckmann \\
    & - \frac{1}{\scalej} [\sum_{i \in \iset, j \in \jset} \obsprobi \optcprobj \ln (\optcprobj) ] \\
    & - \frac{1}{\scalem} \left[ \sum_{i \in \iset, j \in \jset} \sum_{\nest \in \nestset} \obsprobi \optcprobj \optcprobnest \ln (\optcprobnest)  \right. \\
    & \left. \quad \quad + \sum_{\nest \in \nestset} \nestpar [\sum_{i \in \iset, j \in \jset} \sum_{m \in \nest} \obsprobi \optcprobj \optcprobnest \optcprobmnest \ln (\optcprobmnest) ] \right] \\
    & - [\sum_{i \in \iset, j \in \jset} \sum_{\nest \in \nestset} \sum_{m \in \nest} \sum_{r \in \rsetfull } \obsprobi \optcprobj \optcprobnest \optcprobmnest \optcprobr \ln ( \optcprobr \cdot \frac{1}{\pathsizepar})] \\ 
    & + \sum_{k \in \kset} \opttripattrpar [\sum_{i \in \iset, j \in \jset} \obsprobi \optcprobj \tripattr] \\
    & + \sum_{q \in \qset} \opttripmattrpar [\sum_{i \in \iset, j \in \jset, m \in \mset} \obsprobi \optcprobj \optcprobnest \optcprobmnest \tripmattr ] \\
    & + \sum_{i \in \iset} \optduali [\obsprobi \sum_{j \in \jset} \optcprobj] \\
    & + \sum_{i \in \iset, j \in \jset} \optdualj [ \obsprobi \optcprobj - \sum_{\nest \in \nestset} \obsprobi \optcprobj \optcprobnest] \\
    &  + \sum_{i \in \iset, j \in \jset} \sum_{\nest \in \nestset} \optdualM [ \obsprobi \optcprobj \optcprobnest -  \sum_{m \in \nest} \obsprobi \optcprobj \optcprobnest \optcprobmnest]\\
    & + \sum_{i \in \iset, j \in \jset} \sum_{m \in \mset} \optdualm [ \obsprobi \optcprobj \optcprobnest \optcprobmnest -  \sum_{r \in \rset} \obsprobi \optcprobj \optcprobnest \optcprobmnest \optcprobr ]  + C
\end{align*}

Let us focus on the first term. Recall that 

\begin{align*}
    \arcflow = \sum_{i \in \iset, j \in \jset} \sum_{r \in \rsetfull} \obstotal \jprobr \ind = \sum_{i \in \iset, j \in \jset} \sum_{r \in \rsetfull} \obstotal \obsprobi \cprobj \cprobnest \cprobm \cprobr \ind .
\end{align*}

The partial derivative with respect to $\optcprobr$ is 

\begin{align*}
    & \frac{\scalepar}{\obstotal} \frac{\partial \left( \beckmann \right)}{\partial \optcprobr} \\
    &= \scalepar \sum_{a \in r} \latency (\arcflow) \cdot \frac{\partial \arcflow}{\partial \optcprobr} 
    = \scalepar \obsprobi \optcprobj \optcprobnest \optcprobmnest \sum_{a \in \rset} \latency (\arcflow) \ind \nonumber 
     = \scalepar \obsprobi \optcprobj \optcprobnest \optcprobmnest \cdot \rcost,
\end{align*}

the partial derivative with respect to $\optcprobmnest$ is

\begin{align*}
    & \frac{\scalepar}{\obstotal} \frac{\partial \left( \beckmann \right) }{\partial \optcprobmnest} \\
    & = \scalepar \sum_{a \in r} \latency(\arcflow) \cdot \frac{\partial \arcflow}{\partial \optcprobmnest} 
    = \scalepar \obsprobi \optcprobj \optcprobnest \sum_{r \in \rset} \optcprobr \sum_{a \in \rset} \latency(\arcflow) \ind \nonumber \\
    & = \scalepar \obsprobi \optcprobj \optcprobnest \sum_{r \in \rset} \optcprobr \cdot \rcost = \scalepar \obsprobi \optcprobj \optcprobnest \cdot \rcost, 
\end{align*}

the partial derivative with respect to $\optcprobnest$ is
\begin{align*}
    & \frac{\scalepar}{\obstotal} \frac{\partial \left( \beckmann \right) }{\partial \optcprobnest} \\
    & = \scalepar \obsprobi \optcprobj \sum_{m \in \mset} \optcprobmnest \sum_{r \in \rset} \optcprobr \cdot \rcost 
    = \scalepar \obsprobi \optcprobj \cdot \rcost, 
\end{align*}

and the partial derivative with respect to $\optcprobj$ is

\begin{align*}
    & \frac{\scalepar}{\obstotal} \frac{\partial \left( \beckmann \right) }{\partial \optcprobj} \\
    & = \scalepar \obsprobi \sum_{\nest \in \nestset} \optcprobnest \sum_{m \in \nest} \optcprobmnest \sum_{r \in \rset} \optcprobr \cdot \rcost
    = \scalepar \obsprobi \cdot \rcost. 
\end{align*}

Obtain the first order condition, i.e., $\frac{\partial \lag}{\partial \optcprobr} = 0$. 
\begin{align*}
    & \frac{\partial \lag}{\partial \optcprobr}  \\
    &= - \scalepar \obsprobi \optcprobj \optcprobnest \optcprobmnest \rcost
    - \obsprobi \optcprobj \optcprobnest \optcprobmnest \left( \ln(\frac{\optcprobr}{\pathsizepar}) + 1 \right) 
     - \obsprobi \optcprobj \optcprobnest \optcprobmnest \optdualm = 0 \\
    & \Rightarrow  - \scalepar \rcost - \left[ \ln(\optcprobr) - \ln(\pathsizepar) + 1 \right] - \optdualm = 0 \\
    & \Rightarrow \ln(\optcprobr) = \left( - \scalepar \rcost - \optdualm \right) + \ln(\pathsizepar) - 1\\
    & \Rightarrow \optcprobr = \frac{ \exp(- \scalepar \rcost + \ln(\pathsizepar))}{\exp(1 + \optdualm)} 
\end{align*}

Using that $\sum_{r \in \rset} \optcprobr = 1$, we obtain 
\begin{align*}  
    \optcprobr = \frac{ \exp(- \scalepar \rcost + \ln(\pathsizepar) )}{\sum_{r \in \rset} \exp(- \scalepar \rcost + \ln(\pathsizepar) )}, 
\end{align*}

which is
\begin{align*}
    \frac{ \exp( \cutilr )}{\sum_{r' \in \rset} \exp( V_{r'|ijm} ) }
\end{align*}
by the definition of fixed utility $\cutilr = - \scalepar \rcost + \ln(\pathsizepar) $ in \cref{eqn:utilroute}. 

Using the definition of $\satistripm = \ln \left( \textstyle \sum_{r \in \rsetfull} \exp({ \cutilr}) \right)$, the above equation can be written as below for the later convenience. 
\begin{align}\label{proof_HL:p_rijm_variant}
    \ln (\optcprobr) = \cutilr - S_{ijm} 
\end{align}

Obtain the first order condition, i.e., $\frac{\partial \lag}{\partial \optcprobmnest} = 0$. 
\begin{align*}
    \frac{\partial \lag}{\partial \optcprobmnest} 
    &=
    - \scalepar \obsprobi \optcprobj \optcprobnest \cdot \rcost 
    -  \left[ \obsprobi \optcprobj \optcprobnest \sum_{r \in \rset} \optcprobr \ln(\frac{\optcprobr}{\pathsizepar}) \right] \\
    & - \frac{\optnestpar}{\optscalem} \left[ \obsprobi \optcprobj \optcprobnest \left(\ln(\optcprobmnest) + 1\right) \right] 
    \\ & + \obsprobi \optcprobj \optcprobnest \sum_{q \in \qset} \opttripmattrpar \tripmattr + \obsprobi \optcprobj \optcprobnest (-\optdualM + \optdualm) = 0 
\end{align*}

Using \cref{proof_HL:p_rijm_variant},
\begin{align*}
    & \Rightarrow  \frac{\optnestpar}{\optscalem} \left(\ln(\optcprobmnest) + 1\right) = \sum_{q \in \qset} \opttripmattrpar \tripmattr + (-\optdualM + \optdualm) + S_{ijm} \\
    & \Rightarrow \ln(\optcprobmnest) = \optscalem \left[\sum_{q \in \qset} \opttripmattrpar \tripmattr + S_{ijm}\right]/ \optnestpar + \frac{\optscalem (-\optdualM + \optdualm)}{\optnestpar} - 1\\
    & \Rightarrow \optcprobmnest = \frac{\exp\left(\optscalem \left[\sum_{q \in \qset} \opttripmattrpar \tripmattr + S_{ijm}\right]/ \optnestpar\right)}{\exp(1-\frac{\optscalem (-\optdualM + \optdualm)}{\optnestpar})}  
\end{align*}

Using $\sum_{m \in \nest} \optcprobmnest = 1$, 
\begin{align*}
    \optcprobmnest = \frac{\exp \left(\optscalem \left[\sum_{q \in \qset} \opttripmattrpar \tripmattr + S_{ijm}\right]/ \optnestpar \right)}
    {\sum_{m' \in \nest} \exp\left(\optscalem \left[\sum_{q \in \qset} \opttripmattrpar X^q_{ijm'} + S_{ijm'}\right]/ \optnestpar\right)}, 
\end{align*}

which is 

\begin{align*}
    \optcprobmnest = \frac{\exp \left(\optscalem \left[ \cutilm + S_{ijm}\right]/ \optnestpar \right)}
    {\sum_{m' \in \nest} \exp\left(\optscalem \left[V_{m'|ij}  + S_{ijm'}\right]/ \optnestpar\right)}, 
\end{align*}

by the definition of the fixed utility $\cutilm = \sum_{q \in \qset} \opttripmattrpar \tripmattr$ in \cref{eqn:utilmode}.

Similarly, from $\frac{\partial \lag}{\partial \optcprobnest} = 0$, we derive
\begin{align*}
    \optcprobnest = \frac{\exp(IV_{\nest})}{\sum_{\nest' \in \msetfull} \exp(IV_{\nest'})},
\end{align*}

where $IV_{\nest} = \ln \left(\sum_{m \in \nest} \exp(\optscalem[ \cutilm + S_{ijm}]/ \optnestpar) \right)$. 

Lastly, from $\frac{\partial \lag}{\partial \optcprobj} = 0$, we obtain 
\begin{align*}
    \optcprobj = \frac{\exp\left(\optscalej (S_{ij} +  \cutilj )\right)}{\sum_{j' \in \jset} \exp\left(\optscalej (S_{ij'} + V_{j'|i} \right)},
\end{align*}

where $\cutilj = \sum_{k \in \kset} \opttripattrpar \tripattr$.

\subsection{\texorpdfstring{Proof of \cref{theorem:expcone}}{Proof of Theorem 2}}
\label{appendix:expo_cone}

The objective function comprises the conditional entropy terms and the integral of the latency function. 
First, the conditional entropy terms can be reformulated as exponential cones using \cref{lemma:entropy}. 
Second, the integral of the latency function under Assumption \ref{assume:latency} remains a power function, which is exponential cone representable according to \cref{lemma:power}.

\begin{lemma}\label{lemma:entropy}
    The hypograph of the entropy function is exponential cone representable \citep[see][\S 3]{aps2024mosek}.
\end{lemma}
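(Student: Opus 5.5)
The plan is to work directly with the exponential cone
\[
\expcone = \mathrm{cl}\{(x_1,x_2,x_3) : x_1 \geq x_2 \exp(x_3/x_2),\ x_2>0\},
\]
and to show that the hypograph of each entropy-type term is the projection of a set defined by linear constraints together with exponential-cone memberships.

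\textbf{Step 1 (reduce to a single term).} Every entropy function in the paper is a nonnegative combination of terms of the form $-x\ln(x/y)$, where $x,y\geq 0$ are decision variables or affine expressions in them. Examples are $-\jprobj\ln(\jprobj/\obsprobi)$, $-\jprobnest\ln(\jprobnest/\jprobj)$, and $-\jprobr\ln(\jprobr/(\jprobm\pathsizepar))$. So it suffices to represent the hypograph of the relative entropy
\[
h(x,y) = -x\ln(x/y)
\]
on $x\geq 0$, $y>0$. The boundary $x=0$ is handled by continuity, using $0\ln 0=0$.

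\textbf{Step 2 (single-term representation).} For $y>0$, the key equivalence is
\[
t \leq -x\ln(x/y) \iff \ln(x/y)\leq -t/x \iff x\leq y\exp(-t/x) \iff (y,x,t)\in\expcone,
\]
where the last step uses the scaling convention $x_1\geq x_2\exp(x_3/x_2)$ with $x_1=y$, $x_2=x$, $x_3=t$. The cases $x=0$ and $y=0$ are then checked against the closure of the cone. They correspond to the limiting points $\{x_1\geq 0,\ x_2=0,\ x_3\leq 0\}$, which agree with $h(0,y)=0$ and with $h=-\infty$ whenever $x>0=y$.

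\textbf{Step 3 (assemble sums).} For a weighted sum $\sum_\ell w_\ell h(x_\ell,y_\ell)$ with $w_\ell\geq 0$, such as the $\nestpar$-weighted terms in $H_{\text{NL}}$, introduce an epigraph variable $t_\ell$ for each term. Then impose the linear inequality $s\leq\sum_\ell w_\ell t_\ell$ together with $(y_\ell,x_\ell,t_\ell)\in\expcone$ for every $\ell$. This describes the hypograph of the sum as a projection, which is exactly exponential cone representability.

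The same device applies to the constraints \cref{FirstStage:ratio1}--\cref{FirstStage:ratio2}: take the sum of the $t_\ell$ and bound it below by the given constant. For the objective, maximize the sum of the $t_\ell$ in place of the entropy terms.

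\textbf{Main obstacle.} The only delicate point is the boundary bookkeeping, namely $x=0$ or $y=0$ and the path-size constant $\pathsizepar$ inside the logarithm. I would handle $\pathsizepar$ by writing
\[
-x\ln\bigl(x/(\pathsizepar y)\bigr),
\]
so that $\pathsizepar y$ enters the cone as an affine argument. I would also confirm that the closure of $\expcone$ reproduces exactly the lower-semicontinuous extension of $h$, as in \citep[\S 3]{aps2024mosek}.
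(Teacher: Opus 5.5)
Your proposal is correct and takes essentially the same route as the paper: one auxiliary variable per entropy term, each constrained to an exponential cone, all tied together by a linear constraint. The paper does this only for the plain Shannon entropy, via $(1,p_i,y_i)\in\expcone$ and $\sum_i y_i = t$, which is your case $y=1$; your version with a general second argument $y$, together with the closure bookkeeping, is exactly the form the paper then uses for the conditional-entropy terms in the proof of Theorem~\ref{theorem:expcone}.
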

\begin{proof} 
    Consider the hypograph of the entropy function, $H(\mathbf{p}) \geq t$ where $H(\mathbf{p}) = \sum_{i=1}^n - p_i \ln (p_i) $. There exists $\mathbf{y} \in R^n$ such that $\sum_{i=1}^n y_i =t$, $y_i \leq -p_i \ln p_i$ that can be rewritten as $y_i \leq p_i \ln (\frac{1}{p_i})$ for all $i$. Thus, $(1, p_i, y_i) \in K_{exp}, \forall i$ and $\sum_{i=1}^n y_i =t$.
\end{proof}

\begin{lemma}\label{lemma:power}
    The epigraph of the convex increasing rational power function, $x^{p/q} \leq t$, is exponential cone representable  \citep[see][\S 5]{aps2024mosek}.
\end{lemma}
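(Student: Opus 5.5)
The plan is to show that the epigraph $\{(x,t): x\ge 0,\ x^{p/q}\le t\}$, with $p/q\ge 1$ a rational exponent, can be written with finitely many conic constraints. I will use the power cone as an intermediate step and then pass to the exponential cone.

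\textbf{Step 1: reduce to a single power cone.} Set $\alpha = q/p\in(0,1]$. For $x\ge 0$, the inequality $x^{p/q}\le t$ is equivalent to $t^{\alpha}\cdot 1^{1-\alpha}\ge |x|$, since $s\mapsto s^{\alpha}$ is increasing on $t\ge 0$. This is exactly membership $(t,1,x)\in\mathcal{P}_3^{\alpha,1-\alpha}=\{(u,v,w): u^{\alpha}v^{1-\alpha}\ge |w|,\ u,v\ge 0\}$. If $\alpha=1$ the constraint is linear, so I assume $\alpha<1$ from here on.

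\textbf{Step 2: write the power cone with exponential cones.} For $u,v>0$, apply $\ln$ to $u^{\alpha}v^{1-\alpha}\ge w$ with $w>0$. I introduce auxiliary variables $r_1,r_2$ and require
\begin{align*}
r_1\le \ln u,\quad r_2\le \ln v,\quad \ln w\le \alpha r_1+(1-\alpha)r_2 .
\end{align*}
Each condition of the form $r\le \ln u$ is the same as $(u,1,r)\in\expcone$, using the convention $\expcone=\{(x_1,x_2,x_3): x_1\ge x_2 e^{x_3/x_2},\ x_2>0\}$ together with its closure. The condition $\ln w\le s$ is the same as $(e^{s}\ge w)$, which is $(z,1,s)\in\expcone$ with $z\ge w$. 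Chaining these gives a representation on the interior. Here $v=1$, so $r_2=0$ can be fixed directly. The remaining inequality $\ln t\ge (1/\alpha)\ln x$ reduces to $\ln x \le \alpha\, r_1$ with $r_1\le \ln t$. This is two exponential-cone constraints plus one linear constraint, all stated in terms of $(x,t)$ only.

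\textbf{Step 3: the main obstacle is the boundary.} The logarithmic substitution assumes $x,t>0$, but the epigraph contains points with $x=0$, where $\ln x$ is unbounded below. I will deal with this by showing that the projection of the conic set onto $(x,t)$ is dense in the epigraph, and then taking closures, since $\expcone$ is defined as a closed set. I will also check that the point $x=0$, $t\ge 0$ is attained by letting $r_1\to-\infty$ while keeping the conic constraints in their closed form. The alternative is to rely on the cited MOSEK cookbook construction \citep[\S 5]{aps2024mosek}. That construction represents rational power cones through a binary tower of 3-dimensional power cones, each of which is exponential-cone representable in the sense of attaining the same optimal values. I would present it as equivalence at the level of optimal values and cite the reference for exact representability at the boundary.
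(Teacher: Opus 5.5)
Your Step 1 is the same reduction the paper uses. The paper rewrites $x^{p/q}\le t$ as $x\le t^{q/p}$, places it in a geometric-mean cone, and then simply asserts that such cones are exponential-cone representable. Your Step 2, where you try to justify that assertion, fails.

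In your convention $(z,1,s)\in\expcone$ means $z\ge e^{s}$. So the pair of conditions $(z,1,s)\in\expcone$, $z\ge w$ only says that $z$ dominates both $e^{s}$ and $w$. It imposes no relation between $w$ and $e^{s}$, since any $w$ is admissible once $z$ is large. This is not a repairable encoding slip. The constraint you actually need, $\ln x\le\alpha r_1$ (equivalently $r_1\ge\tfrac{1}{\alpha}\ln x$), bounds $r_1$ from below by a concave function of $x$. That describes a nonconvex set, so no conic constraint can express it. In the sandwich $\tfrac{1}{\alpha}\ln x\le r_1\le\ln t$, only the right half is convex. Concretely, the projection of your system $(t,1,r_1)\in\expcone$, $(z,1,\alpha r_1)\in\expcone$, $z\ge x$ onto $(x,t)$ is all of $\{t>0\}$, with no constraint on $x$.

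The missing idea is to use the perspective of the logarithm (a negative relative entropy) rather than the logarithm itself. For $x>0$, the condition $x\le t^{\alpha}\cdot 1^{1-\alpha}$ is equivalent to $\alpha\ln(t/x)+(1-\alpha)\ln(1/x)\ge 0$. Multiplying by $x$ turns each term into $x\ln(u/x)$, which is jointly concave in $(u,x)$. Moreover, $r\le x\ln(u/x)$ holds exactly when $u\ge x e^{r/x}$, i.e. $(u,x,r)\in\expcone$ in your convention. This gives
\begin{align*}
x\ge 0,\ x^{p/q}\le t \iff \exists\, r_1,r_2:\ (t,x,r_1)\in\expcone,\ (1,x,r_2)\in\expcone,\ \alpha r_1+(1-\alpha)r_2\ge 0 .
\end{align*}
The analogous construction with $n$ terms represents the geometric-mean cone the paper invokes.

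This also makes your Step 3 unnecessary. At $x=0$ the closed cone forces $t\ge 0$ and $r_1,r_2\le 0$, so the linear constraint forces $r_1=r_2=0$, and every $t\ge 0$ is attained exactly. Your density-plus-closure argument would not have sufficed anyway. The projection of a closed conic set need not be closed, and a set whose closure equals the epigraph is not a representation of it. Likewise, your fallback only asserts the needed fact, and in the weaker sense of matching optimal values, rather than supplying it.
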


\begin{proof}
    The hypergraph of the geometric mean function is $K^n_{gm} = \{ (\mathbf{x},t) \in \mathbb{R}^{n+1}: (x_1 x_2 \dots x_n)^{1/n} \geq t, \mathbf{x} \geq 0 \}$. It is well known that geometric cones are exponential cone representable. Thus, if we show that the epigraph of a convex increasing rational function is a geometric cone, it is exponential cone representable. The inequality $x^{\frac{p}{q}} \leq t$ is equivalent to $x \leq t^{\frac{q}{p}}$, which is a geometric cone $(t\mathbb {1}_p, \mathbb{1}_{p-q}, x) \in K^p_{gm}$, where $\mathbb{1}_n$ denotes the vector of ones of size $n$. 
\end{proof}

These lemmas pave the way for reformulating the entire \textsc{First Stage Problem} as an exponential cone program.
We introduce auxiliary variables $t_{ij}$, $u_{ij\nest}$, $v_{ijm}$, and $w_{ijmr}$ for the exponential cone reformulation and $\auxmain$ and $f_a$ for the power cone reformulation.

First, we show that the conditional entropy terms are exponential cone representable. The exponential cone is defined as 
\begin{align*}
    & y \exp(\frac{x}{y}) \leq z \\
    & \iff (x, y, z) \in \expcone.
\end{align*}

Recall an entropy term in objective function, $H_{\text{MNL}}(\cprobvecj) = \sum_{i \in \iset, j \in \jset} - \jprobj \ln (\frac{\jprobj}{\obsprobi})$. 
\begin{align*}
    & \max H_{\text{MNL}}(\cprobvecj)  = \max \sum_{i \in \iset, j \in \jset} - \jprobj \ln (\frac{\jprobj}{\obsprobi}) 
\end{align*}

By introducing an auxiliary variable $t_{ij}$, we can reformulate the problem as follows.
\begin{align*}
    & \max \sum_{i \in \iset, j \in \jset} t_{ij} \\
    & \text{s.t. }  \frac{\obsprobi}{\jprobj} \geq \exp(\frac{t_{ij}}{\jprobj})
\end{align*}

This implies that the domain lies in an exponential cone, i.e., $(t_{ij}, p_{ij}, \obsprobi) \in \expcone, \forall i, j $. 
Similarly, the other entropy terms can also be expressed using the exponential cone with auxiliary variables $u_{ij\nest}, v_{ijm}, w_{ijmr}$.

Regarding the entropy term $H_{\text{NL}_1}(\cprobvecnest) = \sum_{i \in \iset, j \in \jset} \sum_{\nest \in \nestset} - \jprobnest \ln (\frac{\jprobnest}{\jprobj})$, the objective function 
\begin{align*}
    \max \sum_{i \in \iset, j \in \jset} \sum_{\nest \in \nestset} - \jprobnest \ln (\frac{\jprobnest}{\jprobj})
\end{align*}
can be reformulated as follows. 

\begin{align*}
    \max \sum_{i \in \iset, j \in \jset} \sum_{\nest \in \nestset} u_{ij\nest} \text{  where   } (u_{ij\nest}, \jprobnest, \sum_{\nest \in \nestset}\jprobnest) \in \expcone, \forall i \in \iset, j \in \jset, \nest \in \nestset
\end{align*}

Regarding the entropy term 
$H_{\text{NL}_{\nestofm}}(\cprobvecmnest) = \sum_{i \in \iset, j \in \jset}\sum_{m \in \mset} - \jprobm \ln (\frac{\jprobm}{p_{ij\nest}}), \forall \nest \in \nestset$, the objective function 
\begin{align*}
    & \max \sum_{i \in \iset, j \in \jset}\sum_{m \in \nest} - \jprobm \ln (\frac{\jprobm}{p_{ij\nest}}) 
\end{align*}

can be reformulated as follows. 
\begin{align*}
   \Rightarrow & \max \sum_{i \in \iset, j \in \jset}\sum_{m \in \nest} v_{ijm} \text{  where   } (v_{ijm}, \jprobm, \sum_{m' \in \nest} p_{ijm'}) \in \expcone, \forall i \in \iset, j \in \jset, m \in \mset
\end{align*}

Regarding the entropy term 
$H_{\text{PSL}}(\cprobvecr) = \sum_{i \in \iset, j \in \jset}\sum_{m \in \mset} \sum_{r \in \rsetfull} - \jprobr \ln (\frac{\jprobr}{\sum_{r \in \rsetfull} \jprobr \pathsizepar})$, the objective function 
\begin{align*}
    & \max \sum_{i \in \iset, j \in \jset}\sum_{m \in \mset} \sum_{r \in \rsetfull} - \jprobr \ln (\frac{\jprobr}{\sum_{r \in \rsetfull} \jprobr \pathsizepar}) 
\end{align*}
can be reformulated as follows. 
\begin{align*}
    & \max \sum_{i \in \iset, j \in \jset}\sum_{m \in \mset} \sum_{r \in \rsetfull} w_{ijmr} \text{  where   } (w_{ijmr}, \jprobr, \sum_{r \in \rsetfull} \jprobr \pathsizepar) \in \expcone, \forall i \in \iset, j \in \jset, m \in \mset, r \in \rset
\end{align*}

Second, we show that the integral of the \gls{a:bpr} function is a power cone. 

If the mode is automobiles, the latency function $g_a$ is defined using the standard \gls{a:bpr} function. It is reasonable to assume that subways are unaffected by congestion, and similarly, buses are assumed to be unaffected due to the use of express lanes. In such cases, where $\beta=0$, the latency function provides a constant travel time. Since only automobiles require the standard \gls{a:bpr} function, we will omit the superscript indicating the mode for simplicity. 

\begin{align*} 
    g_a(f_a) = T^0_a \left[1+ \alpha \left(\frac{f_a}{c_a}\right)^{\beta}\right].
\end{align*}

Recall the Beckmann equation in the objective function. 
\begin{align*}
    \sum_{a \in \aset} \int^{f_a}_{0} g_a(w) dw = \sum_{a \in A} \left( T^0_a f_a + \frac{T^0_a \alpha}{(\beta+1) c_a^{\beta}} f^{\beta+1}_a \right)
\end{align*}

For modes that are not affected by congestion, $\alpha = 0$ and $\beta = 0$, resulting in a simple expression $\sum_{a \in A} ( T^0_a f_a)$. 

The power cone $f_a^p$, where $p>1$ ($\beta=4$ in the standard \gls{a:bpr} function, so we consider $f_a^5$), satisfies the inequality $s_a \geq |f_a|^p$, which is equivalent to ${s_a}^{\frac{1}{p}} \geq |f_a|$. Hence, it corresponds to 
\begin{align*}
    {s_a} \geq |f_a|^p \Longleftrightarrow (s_a, 1, f_a) \in \mathcal{P}_3^{\frac{1}{p}, 1-\frac{1}{p}}. 
\end{align*}

The resulting conic reformulation is as follows.

\begin{center}
\fbox{\parbox{0.9\columnwidth}{ {\centering
\vspace{0.5ex}\textsc{Exponential cone reformulation}\\[1ex]}

\begin{align}
    \max  
    &\sum_{i \in \iset, j \in \jset} t_{ij} 
        + \sum_{i \in \iset, j \in \jset, \nest \in \nestset} u_{ij\nest} \nonumber \\ \nonumber
    & + \sum_{\nest \in \nestset}(\sum_{i \in \iset, j \in \jset, m \in \nest} v_{ijm}) 
        + \sum_{i \in \iset, j \in \jset, m \in \mset, r \in \rset} w_{ijmr} \\ 
    &- \frac{1}{\obstotal} \sum_{m \in \mset} \sum_{a \in A} \widehat{T}^{m,0}_a f_a - \frac{1}{\obstotal} \left(\sum_{a \in A}  \frac{\widehat{T}^{\text{car}, 0}_a \alpha}{(\beta+1) c_a^{\beta}} \auxmain \right) \nonumber 
\end{align}

\begin{subequations}
\begin{align}
    \text{s.t.} \nonumber \\  
    & (t_{ij}, p_{ij}, \obsprobi) \in \expcone, \forall i,j  \nonumber\\
    & (u_{ij\nest}, \jprobnest, \sum_{\nest \in \nestset }\jprobnest) \in \expcone, \forall i,j,\nest \nonumber \\
    & (v_{ijm}, \jprobm, \sum_{m' \in \nest | m \in \nest} p_{ijm'}) \in \expcone, \forall i, j, m \nonumber \\
    & (w_{ijmr}, \jprobr, \sum_{r \in \rsetfull} \jprobr) \in \expcone, \forall i, j, m, r \nonumber \\
    & \sum_{ij} t_{ij} = H_{\text{MNL}}(\cprobvecj) \nonumber \\
    & \sum_{ij\nest} u_{ij\nest} = H_{\text{NL}_1}(\cprobvecnest) \nonumber \\
    & \sum_{ijm} v_{ijm} = H_{\text{NL}_{\nest}}
    (\cprobvecmnest), \forall \nest \in \nestset \nonumber \\
    & (s_a, 1, f_a) \in \mathcal{P}_3^{\frac{1}{5}, \frac{4}{5}}, \forall a \in \aset \nonumber \\ 
    & \text{previous linear constraints from \cref{FirstStage:aggregate1} to \cref{FirstStage:flow}} \label{eqn:ECRprev} \nonumber
\end{align}
\end{subequations}\vspace{0.5ex}
	}}
\end{center}

\subsection{Relationship between the dual formulation and maximum likelihood}\label{appendix:dual}

\renewcommand{\theequation}{B.\arabic{equation}}
\setcounter{equation}{0}

For illustration purposes, we focus on the entropy term related to the destination choice. 
Write the Lagrangian function including the parts related to \optjprobj, 
\begin{align*}
    & \lag (\jprobj, \scalej, \tripattrpar)  = - \frac{1}{\scalej} [\sum_{i \in \iset, j \in \jset} \optjprobj \ln (\frac{\optjprobj}{\obsprobi}) ] 
     + \sum_{k \in \kset} \tripattrpar [\sum_{i \in \iset, j \in \jset} \optjprobj \tripattr - \obsjprobj \tripattr] .
\end{align*}

Using \cref{eqn:c_prob_destin} and \cref{eqn:utildest}, replace $\ln \optcprobj$ with the following equation. 
\begin{align*} \textstyle
    \ln \optcprobj = \scalej (\sum_{k \in \kset} \tripattrpar \tripattr +S_{ij} ) - \ln {\sum_{j' \in \jset} \exp\left(\scalej (\sum_{k \in \kset} \tripattrpar \widehat{X}_{ij'}^k + S_{ij'})\right)}, 
\end{align*}

\begin{align*}
    & \lag(\jprobj, \scalej, \tripattrpar) \\ 
    & = - \sum_{i \in \iset, j \in \jset} \jprobj (S_{ij} + \sum_{k \in \kset} \tripattrpar \tripattr) + \frac{1}{\scalej} [\sum_{i \in \iset, j \in \jset} \jprobj \ln {\sum_{j' \in \jset} \exp\left(\scalej (S_{ij'} + \sum_{k \in \kset} \tripattrpar \widehat{X}^k_{ij'})\right)} ] \\
    & = - \frac{1}{\scalej} [\sum_{i \in \iset, j \in \jset} \jprobj \ln \frac{\exp\left(\scalej (S_{ij} + \sum_{k \in \kset} \tripattrpar \tripattr)\right)}{\sum_{j' \in \jset} \exp\left(\scalej (S_{ij'} + \sum_{k \in \kset} \tripattrpar \widehat{X}^k_{ij'})\right)} ] 
\end{align*}

Note the functional form of $p_{j|i}$ as defined in \cref{eqn:c_prob_destin} and \cref{eqn:utildest}.

\begin{align*}
    \max_{\jprobj, \tripattrpar, \scalej} \lag (\jprobj, \tripattrpar, \scalej) &= \max \left[ -\frac{1}{\scalej} \sum_{i \in \iset, j \in \jset} \jprobj \ln p_{j|i} (\scalej, \tripattrpar) \right] \\
    &= \min \left[ \frac{1}{\scalej} \sum_{i \in \iset, j \in \jset} \jprobj \ln p_{j|i} (\scalej, \tripattrpar) \right] 
\end{align*}

Write the dual problem.
\begin{align*}
    \min_{\jprobj, \tripattrpar, \scalej} \lag (\jprobj, \tripattrpar, \scalej) = \max \left[ \frac{1}{\scalej} \sum_{i \in \iset, j \in \jset} \jprobj \ln p_{j|i} (\scalej, \tripattrpar) \right] 
\end{align*}

This derivation illustrates the connection between the destination-choice component of the dual formulation and the log-likelihood objective.

\YS{\section{Background: Sequential Modeling Approach and Equivalent Convex Models}\label{section:preliminary}}

The concept of entropy is central to the combined convex modeling approach.
Historically, entropy was originally introduced in thermodynamics to quantify the amount of disorder in a system and was later adapted by \citet{shannon1948mathematical} for use in information theory.
The methods used in trip distribution and modal split generalize the principle of entropy maximization.

This section is structured as follows. First, we begin by exploring the entropy maximization model that corresponds to the gravity model in the trip distribution phase. Next, we analyze the entropy maximization model equivalent to the \gls{a:mnl} model in the modal split phase. Finally, we present the widely recognized convex optimization formulation for the Wardrop user equilibrium in the traffic assignment phase. These investigations demonstrate that convex optimization problems can effectively replace traditional fixed-point approaches in the trip distribution, modal split, and traffic assignment stages of transportation modeling.

\subsection{Convex formulation for trip distribution}

\citet{wilson1969use} proposed an entropy maximization model as an alternative to the conventional gravity model in the context of the trip distribution step. 
The entropy maximization approach aims to determine the most probable distribution of trips, and it turns out to serve as a theoretical foundation for the gravity model. 
Subsequently, the author proposes a multi-modal extension of the gravity model, where the modal split is implicitly incorporated into the generalized formulation (see \citet{wilson1969use} \S4 for details).

\subsubsection{Gravity model for trip distribution}
We first focus on the gravity model for a single mode of transport. 
We extend it to several transport modes later. 
The region is divided into zones: origin zone~$i \in \iset$ and destination zone~$j \in \jset$. 
The gravity model is used to estimate~$T_{ij}$, the number of trips between~$i$ and~$j$. Here, we deal with aggregated decision makers at spatial zone levels and do not distinguish individual travelers. 
The notations in this section follow the standard conventions used for the Gravity model in \citet{wilson1969use}.

In particular, we consider the gravity model that is expressed as
\begin{equation}
\label{eqn:gravity}
T_{ij} = A_i B_j O_i D_j f(c_{ij}), \forall i,j, 
\end{equation}
where
\begin{equation*} \textstyle
A_i = \frac{1}{\sum_{j} B_j D_j f(c_{ij})} \text{ and } B_j = \frac{1}{\sum_{j} A_i O_i f(c_{ij})},
\end{equation*}

where~$O_i$ is the total number of trip origins at~$i$ (i.e., trip production),~$D_j$ is the total number of trip destinations at~$j$ (i.e., trip attraction),~$c_{ij}$ is the generalized cost (e.g., a linear sum of fares and travel times),~$f(\cdot)$ is the impedance function, and~$A_i$ and~$B_j$ are the balancing factors for trip production and attraction.

\subsubsection{Entropy maximization principle in trip distribution}\label{sec:mostprobable}
Now, we suggest the entropy maximization model to get the most plausible distribution of trips. 
We eventually show that the gravity model can be obtained by solving the entropy maximization model. 
The basic assumption is that the probability of observing trip distribution $[T_{ij}]_{i \in \iset, j \in \jset}$ is proportional to the number of states of the system with such distribution. Based on combinatorial theory, we express the number of distinct arrangements as
\begin{equation*}
    \frac{\hat{T}!}{\prod_{ij} T_{ij}!},
\end{equation*}
where $\hat{T}$ is the total number of trips.

We formulate a problem to find the most probable trip distribution when trip productions ($[\hat{O}_i]_{i \in \iset}$), attractions ($[\hat{D}_j]_{j \in \jset}$), and the budget constraint ($\hat{C}$) are given. 

\begin{align*}
\begin{array}{c@{\quad}l}
\max_{T \in \mathbb R_+^{|\iset| \times |\jset|}} & \log \frac{\hat{T}!}{\prod_{ij} T_{ij}!}\\[2ex]
\text{s.t.} & \displaystyle \sum_{j \in \jset} T_{ij} = \hat{O}_i, \quad \forall i  \\[3ex]
            & \displaystyle \sum_{i \in \iset} T_{ij} = \hat{D}_j, \quad \forall j  \\[3ex]
            & \displaystyle \sum_{i \in \iset} \sum_{j \in \jset} T_{ij} \hat{c}_{ij} = \hat{C}  
\end{array}
\end{align*}

The first two constraints ensure that the total number of trips from origin~$i$ and to destination~$j$ should be equal to the given observations.
The third constraint ensures to satisfy the budget constraint where~$\hat{C}$ is the total amount spent on these trips in the region.

Using Stirling's approximation (i.e., $\log \bm{T}! \approx \bm{T} \log \bm{T} - \bm{T}$), the objective function can be re-written as 
\begin{equation*}
    \log \frac{\hat{T}!}{\prod_{i \in \iset, j \in \jset} T_{ij}!} \quad = \log \hat{T}! - \sum_{i \in \iset, j \in \jset} \log T_{ij}! 
    \quad \approx \log \hat{T}! - \sum_{i \in \iset}\sum_{j \in \jset} T_{ij} \log T_{ij} + \sum_{i \in \iset}\sum_{j \in \jset} T_{ij}.
\end{equation*}

Hence, we solve the approximated optimization problem as follows.
\begin{align*}
\begin{array}{c@{\quad}l}
\max_{T \in \mathbb R_+^{|\iset| \times |\jset|}} & - \sum_{i \in \iset}\sum_{j \in \jset} T_{ij} \log T_{ij} + \sum_{i \in \iset}\sum_{j \in \jset} T_{ij} \\[2ex]
\text{s.t.} & \displaystyle \sum_{j \in \jset} T_{ij} = \hat{O}_i, \quad \forall i  \\[3ex]
            & \displaystyle \sum_{i \in \iset} T_{ij} = \hat{D}_j, \quad \forall j  \\[3ex]
            & \displaystyle \sum_{i \in \iset} \sum_{j \in \jset} T_{ij} \hat{c}_{ij} = \hat{C}  
\end{array}
\end{align*}

Note that we call this approach an entropy maximization method due to the term $- \sum_{i \in \iset}\sum_{j \in \jset} T_{ij} \log T_{ij}$. 
Using $p_{ij} = T_{ij}/T$, we obtain $H = - \sum_{i \in \iset} \sum_{j \in \jset} p_{ij} \log p_{ij}$, which is the information theorist's definition of entropy.

The Lagrangian function $\mathcal{L}$ is given by
\begin{align}
    {\mathcal{L}} =& - \sum_{i \in \iset}\sum_{j \in \jset} T_{ij} \log T_{ij} + \sum_{i \in \iset}\sum_{j \in \jset} T_{ij} \nonumber \\ & + \sum_{i \in \iset}{\lambda_i} (\hat{O}_i - \sum_{j \in \jset} T_{ij}) 
    + \sum_{j \in \jset}{\mu_j} (\hat{D}_j - \sum_{i \in \iset} T_{ij}) 
    + \beta (\hat{C} - \sum_{i \in \iset} \sum_{j \in \jset} T_{ij} \hat{c}_{ij}),
\end{align}
where $\lambda_i, \mu_j, \beta$ are Lagrangian multipliers.

By solving the first order condition, $\frac{\partial {\mathcal{L}}}{\partial T_{ij}} = 0$, for all~$i, j$, we recover the gravity model where $f(c_{ij})= \exp(-\beta c_{ij})$. For a more comprehensive explanation, please refer to \citet{wilson1969use}. In conclusion, the distribution of trips with the highest probability closely aligns with the conventional gravity model distribution presented in \cref{eqn:gravity}. This statistical derivation constitutes a new theoretical base for the gravity model. Note that $\hat{C}$ does not need to be known as $\beta$ is given in practice. Without observation constraints, the trip distribution $[T_{ij}]_{i \in \iset, j \in \jset}$ would be trivial, with an equal number of trips for each entry. The constraints introduce non-uniformity to the trip distribution, ensuring consistency with observations.

\citet{wilson1969use} expanded the entropy maximization principle in the gravity model to encompass multiple modes of transportation. This revealed that the expanded model inherently incorporates a logit formulation of modal split. The subsequent section delves into the relationship between logit models and entropy maximizing models.

\subsection{Convex formulation for modal split}

We explore entropy maximizing models that are equivalent to logit choice models. The family of logit models is the standard way to model traveler mode choices. As a representative example, we show that \gls{a:mnl} can be derived from the entropy maximization model. 

\subsubsection{Multinomial logit model and maximum likelihood estimation for parameter estimation }\label{section:mnlmle}

\citet{mcfadden1973conditional} was the first to formulate travel mode and location decisions as problems in micro-economic consumer choices. In the random utility theory, the utility of an alternative $U^h_{m}$ can be expressed as the sum of the systematic component $V^h_m$ and the error component $\varepsilon^h_{m}$ associated with joint random variation across both individuals $h$ and alternatives $m$. 
\begin{align*}
    U^h_{m} = V^h_{m} + E^h_{m} = ASC_{m} + \sum_{k \in K} \beta_{k} \widehat{X}^h_{mk} + E^h_m,
\end{align*}
where $\widehat{X}^h_{mk}$ is the $k$-th trip-related attribute (e.g. travel cost or travel time) for alternative $m$, and $\beta_{k}$ is a parameter for $k$-th attribute. $ASC_{m}$ is the alternative specific constant for alternative $m$. $E^h_{m}$ is a realization of the error component $\varepsilon^h_{m}$. We focus on specific $ij$-pair and omit the subscript.

If $\varepsilon^h_m$ follows the type I extreme value distribution (also known as the Gumbel distribution) which is independent and identically distributed, the cumulative distribution function of an error term is
\begin{align}
    F([\varepsilon^h_m]_{m \in \mset}) = \mathbb{P} (E^h_m \leq \varepsilon^h_m, \forall m \in \mset) = \exp({-\exp({-{\theta \varepsilon^h_m}})}), \theta >0. \label{eqn:gumbel_mnl}
\end{align}

With the assumption of the Gumbel distribution, \citet{ben1985discrete} suggested that the choice probability can be expressed as a closed-form logit function through
\begin{align}\label{eqn:mnl_logit}
    P^h_{m}(\betavec; \theta) 
    = \mathbb{P}(U^h_{m} > U^{h}_{m'}, \forall m' \neq m) 
     = \frac{\exp({\theta V^h_{m}})}{\sum_{m'} \exp({\theta V^{h}_{m'}})} 
    = \frac{\exp({\theta (ASC_{m} + \sum_{k} \beta_{k} \widehat{X}^h_{mk})})}{\sum_{m'} \exp({\theta (ASC_{m'} \sum_{k} \beta_{k} \widehat{X}^h_{m'k})})}, \quad \forall m,h.
\end{align}
With the given observation of $y^h_m$, the indicator of whether individual $h$ chooses alternative $m$, we maximize the log-likelihood function to estimate the coefficient $\betavec$. The estimated coefficient $\hat{\betavec}$ is obtained as follows. 
\begin{align}\label{eqn:MLE}
    & \hat{\betavec} 
    = \argmax_{\betavec} \log \mathcal{L} = \argmax_{\betavec} \log \sum_h \sum_m y^h_m \log P^h_m(\betavec)
\end{align}
With the estimated $\hat{\betavec}$, we obtain ${\hat{V}^h_{m}}$, and then calculate $p^h_{m}$ through
\begin{align*}
    p^h_{m} &
    = \frac{\exp({\theta {\hat{V}^h_{m}}})}{\sum_{m'} \exp({\theta {\hat{V}^{h}_{m'}}})}, \quad \quad \forall m,h.
\end{align*}

\subsubsection{Entropy maximization principle in parameter estimations}\label{section:utility_maximize_MNL}

\citet{anas1983discrete} showed that the behavior demand model by \citet{mcfadden1973conditional} and the entropy maximizing model by \citet{wilson1969use}, should be seen as two equivalent views of the same problem. 
Consider the following entropy maximizing problem. 
\begin{subequations}
\begin{align}    \max_{[p^h_{m}]_{\forall h \in H, m \in M}} &  - \frac{1}{\theta}\sum_{h} \sum_{m} p^h_{m} \log p^h_{m} \label{maxentropy:objective}\\
    \text{s.t.} & \sum_{m} p^h_{m} = 1; \quad \forall h \quad [\lambda_h] \label{maxentropy:prob_sum}\\
    & \sum_{h} p^h_{m} = \sum_{h} \hat{y}^h_{m}; \quad \forall m \quad [\gamma_m] \label{maxentropy:gamma_m}\\
    & \sum_{m} \sum_{h} p^h_{m} \widehat{X}^h_{mk} = \sum_{m} \sum_{h} \hat{y}^h_{m} \widehat{X}^h_{mk}, \quad \forall k \quad [\alpha_k] \label{maxentropy:alpha_0}.
\end{align}
\end{subequations}

The objective function (\ref{maxentropy:objective}) seeks the entropy maximizing predictions of $p^h_m$. 
Constraint (\ref{maxentropy:prob_sum}) requires the choice probability of choosing each alternative should be summed up to one for each individual. Constraint (\ref{maxentropy:gamma_m}) ensures that the predicted number of choosing each alternative equals the actual number of choosing it. Constraint (\ref{maxentropy:alpha_0}) ensures that the expectation of the aggregated deterministic value of each alternative should equal the observed aggregated value. Constraint (\ref{maxentropy:gamma_m}) and (\ref{maxentropy:alpha_0}) both ensure that the predictions replicate the aggregated observations on the entire system.

By forming the Lagrangian function of the above formulation and by solving the first order condition, we obtain the form of \gls{a:mnl} as
\begin{align*}
    p^h_{m}  
    = \frac{\exp({\theta (\gamma_m + \sum_{k}\alpha_k \widehat{X}^{h}_{mk})})}{\sum_{m'} \exp({\theta (\gamma_{m'} + \sum_{k}\alpha_k \widehat{X}^{h}_{m'k})})}.
\end{align*}

Moreover, \citet{anas1983discrete} proved that the \gls{a:mnl} model can be identically estimated via maximum likelihood in \cref{eqn:MLE} that finds $\betavec$ and $\textbf{ASC}$ in \cref{eqn:mnl_logit}, or via solving the entropy maximization problem to find Lagrangian multipliers $\alphavec$ and $\gammavec$, thus $\betavec = \alphavec$ and $\textbf{ASC} = \gammavec$. The maximum entropy estimators reproduce rational user behavior.

A special case deals with aggregated travel patterns when the parameters are predetermined. \citet{miyagi1996direct} proposed a satisfaction maximization problem. The satisfaction function, denoted as $\satis(\utilvec)$, is the sum of the deterministic utility and the entropy function, and it is formulated as
\begin{align*}
    \min_{\bm{p}} \satis(\utilvec) 
    = \min_{\bm{p}} \sum_{m \in \mathcal{M}} \hat{V}_m p_m - \frac{1}{\theta} \sum_{m \in \mathcal{M}} p_m \ln (p_m).
\end{align*}
This formulation represents a special case of the entropy maximization problem discussed earlier, where the parameters are specified as $\hat{V}_m = \hat{ASC}_m + \sum_{k} \hat{\beta}_{k} \hat{X}_{mk}$, and the individual variable $h$ is omitted to reflect the aggregated nature of the travel patterns.

\subsection{Convex formulation for traffic assignment}

% \subsubsection{Beckmann equation and User Equilibrium}

The Beckmann equation is a fundamental tool in static traffic assignment, offering an optimization framework to describe traffic flow at equilibrium \citep{beckmann1956studies}. Formally, the Beckmann model is expressed as an optimization problem, where the objective function represents the integral of total travel time across all links in the network as
\begin{align}
    \min_{\bm{f}} \sum_{a \in A} \int_0^{f_a} t_a(x) dx,
\end{align}
where \(f_a\) is the flow count on link \(a\), and \(t_a(x)\) is the latency function for link \(a\), which depends on the flow count. Under the assumption that the latency function is increasing (i.e., \( \frac{dt_a}{df_a} > 0 \)), the objective function is convex. This convexity ensures that solving the optimization problem characterizes the user equilibrium (UE) condition. In user equilibrium, no individual user has an incentive to deviate from their current route choice, as each user selects a route that minimizes their own travel time. Consequently, the system is in equilibrium when every user experiences the shortest travel time possible given the route choices of others.

Under this rationale, two prominent approaches exist for solving the Beckmann equation to achieve user equilibrium. The first approach involves simulation such as EMME4. Simulation-based analysis offers flexibility in capturing more realistic traffic patterns. It is relatively straightforward to utilize existing software, making it well-suited for large-scale network problems, where it can model complex interactions in transportation systems effectively. 

The second approach is more analytical, involving the reformulation of the Beckmann problem as a second-order cone programming (SOCP) problem. Although often overlooked, the Beckmann equation can indeed be reformulated as an \gls{a:socp}, provided that the road latency function can be expressed as a convex power function \citep{wei2019efficient}. This observation is crucial for scalability, as \gls{a:socp} formulations are nearly as tractable as \gls{a:lp} due to recent advancements in interior-point methods.

\subsection{Related concepts in other disciplines}

% \SSC{This section is a great idea, but currently just a few bullet points. This should be expanded to include a bit more detail and also connect it to what this paper is doing.}

Similar concepts have, in fact, already been explored in fields beyond transportation, such as information theory and operations research. To put our work in the context of these efforts, facilitate interdisciplinary research exchange and offer additional insights, we present some relevant terminology from other domains along with corresponding references.
The well-known relationship between random utility models and entropy functions (within the transportation community) has been investigated through the lens of convex conjugate duality in operations research \citep{muller2022discrete}, opening the door to more generalized models. 
\citet{fosgerau2022inverse} introduced a generalization of Shannon's entropy, capturing the flexible substitution patterns as well as complementarity.
Furthermore, the classic relationship between the \gls{a:mnl} and maximum likelihood estimation (see \cref{section:mnlmle}) can be viewed as a special case of the relationship between the \gls{a:kl} divergence and the maximum likelihood estimation \citep{polyanskiy2024information}. 
Minimizing the \gls{a:kl} divergence leads to minimizing the statistical distance between the observed and target distributions, which is equivalent to applying maximum likelihood estimation.

\end{appendices}

\end{document}